\title{Approximating Matrix $p$-norms}
\author{Aditya Bhaskara \thanks{Center for Computational Intractability, and Department of Computer Science, Princeton University. Supported by NSF CCF 0832797. Email: \textsf{bhaskara@cs.princeton.edu}} \and
Aravindan Vijayaraghavan \thanks{Center for Computational Intractability, and Department of Computer Science, Princeton University. Supported by NSF CCF 0832797. Email: \textsf{aravindv@cs.princeton.edu}}}
\date{}
\definecolor{DSgray}{cmyk}{0,0,0,0.7}
\newcommand{\norm}[2]{\lVert #1\rVert_{#2}}
\newcommand{\pnorm}[1]{\lVert #1\rVert_{p}}
\newcommand{\Mi}[1]{M_{#1}}
\newcommand{\xopt}{x^*}
\newcommand{\xx}{\mathbf{x}}
\newcommand{\yy}{\mathbf{y}}
\newcommand{\doh}[2]{\frac{\partial #1}{\partial #2}}
\newcommand{\grad}{\nabla}
\newcommand{\rr}{\mathbb{R}}
\newcommand{\rplus}{\rr_{+}}
\newcommand{\eps}{\varepsilon}
\newcommand{\ptop}{p \mapsto p}
\newcommand{\qtop}{q \mapsto p}
\newcommand{\Sqn}{\mathcal{S}_q^n}
\newcommand{\num}{\mathcal{N}}
\newcommand{\del}{\delta}
\newcommand{\E}{\mathbb{E}}
\newcommand{\dfi}[1]{\frac{\partial f}{\partial #1}}
\newcommand{\dfii}[1]{\frac{\partial^2 f}{\partial #1 ^2}}
\newcommand{\dfij}[2]{\frac{\partial^2 f}{\partial #1 \partial #2}}
\newtheorem{theorem}{Theorem}[section]
\newtheorem{lemma}[theorem]{Lemma}
\newtheorem{defn}[theorem]{Definition}
\newtheorem{prop}[theorem]{Proposition}
\newtheorem{claim}[theorem]{Claim}
\newtheorem{obser}[theorem]{Observation}
\begin{document}
\maketitle
\begin{abstract}
We consider the problem of computing the $\qtop$ norm of a matrix $A$, which is defined for $p,q \ge 1$, as
\[ \norm{A}{\qtop} = \max_{x \neq \vec{0}} \frac{\norm{Ax}{p}}{\norm{x}{q}}. \]

This is in general a non-convex optimization problem,
and is a natural generalization of the well-studied question of computing singular values (this
corresponds to $p=q=2$). Different settings of parameters give rise to a variety of
known interesting problems (such as the Grothendieck problem when $p=1$ and $q=\infty$).
However, very little is understood about the approximability of the problem for
different values of $p,q$.

Our first result is an efficient algorithm for computing the $\qtop$ norm of matrices with non-negative
entries, when $q \ge p \ge 1$. The algorithm we analyze is based on a natural fixed point iteration, which can
be seen as an analog of power iteration for computing eigenvalues.

We then present an application of our techniques to the problem of constructing a scheme for oblivious routing
in the $\ell_p$ norm. This makes constructive a recent existential result of Englert and R\"acke~\cite{racke} on $O(\log n)$
competitive oblivious routing schemes (which they make constructive only for $p=2$).

On the other hand, when we do not have any restrictions on the entries (such as non-negativity), we prove
that the problem is NP-hard to approximate to any constant factor, for $2<p \le q$ and $p \le q < 2$ (these are precisely the ranges of $p,q$ with $p \le q$ where constant factor approximations are not known). In this range, our techniques also
show that if $\mathsf{NP} \notin \mathsf{DTIME}(n^{\text{polylog}(n)})$, the problem cannot be approximated to a
factor $2^{(\log n)^{1-\eps}}$, for any constant $\eps>0$.
\end{abstract}
\maketitle
\newpage

\section{Introduction}
We study the problem of computing norms of matrices. The $\ell_q$ to
$\ell_p$ norm of a matrix $A \in \rr^{m \times n}$ is defined to be
\[ \norm{A}{\qtop} = \max_{x \in \rr^n, x \neq \vec{0}}
~\frac{\norm{Ax}{p}}{\norm{x}{q}}, \qquad \text{where } \norm{x}{p} = (|x_1|^p + \dots +|x_n|^p)^{1/p} \]

Throughout, we think of $p, q \ge 1$. If we think of the matrix as an operator from $\rr^n$ with the $\ell_q$ norm to the space
$\rr^m$ with $\ell_p$ norm, the norm $\norm{A}{\qtop}$ measures the `maximum stretch' of
a unit vector. 

Computing the $\qtop$-norm of a matrix is a natural optimization
problem. For instance, it can be seen as a natural generalization
of the extensively studied problem of computing the largest singular value of a
matrix~\cite{matrix}. This corresponds to the case $p=q=2$.  When $p=1$ and $q=\infty$, it turns out to be
the well-studied Grothendieck problem~\cite{grothendieck, an},
which is defined as
\[ \max_{x_i, y_i \in \{-1, 1\}} \sum_{i,j} a_{ij} x_i y_j. \]
Thus for different settings of the parameters, the problem seems to have very different flavors.

We study the question of approximating $\norm{A}{\qtop}$ for different
ranges of the parameters $p, q$. The case $p=q$ is referred to as the
matrix $p$-norm (denoted by $\norm{A}{p}$), and has been considered in
the scientific computing community. For instance, it is known to have
connections with matrix condition number estimates (see \cite{higham}
for other applications). Computing $\norm{A}{\qtop}$
 has also been studied because of its connections to robust optimization \cite{daureen}.
Another special case which has been studied~\cite{boyd, daureen} is one where 
the entries of the matrix $A$ are restricted to be non-negative. 
Such instances come up in graph theoretic problems, like in the $\ell_p$ 
oblivious routing question of \cite{racke}.

Note that computing the matrix $\qtop$ norm is a problem of maximizing a convex function 
over a convex domain. While a convex function can be minimized efficiently over convex domains
 using gradient descent based algorithms, it is in general hard to maximize them. 
Thus it is interesting that our algorithm can efficiently compute the norm for non-negative matrices
for a range of parameters.

\paragraph{Known algorithms.} Very little is known about approximating
$\norm{A}{\qtop}$ in general. For computing $p$-norms (i.e., $q=p$), polynomial
time algorithms for arbitrary $A$ are known to exist only
for $p = 1, 2,$ and $\infty$. For the general problem, for $p \le 2$, $q >2$,
Nesterov\cite{nesterov} shows that the problem can be approximated to
a constant factor (which can be shown to be $<2.3$), using a
semidefinite programming relaxation. When the matrix has only
non-negative entries, this relaxation can be shown to be
exact~\cite{daureen}.

For other ranges of $p,q$, the best known bounds are polynomial factor approximations, obtained by
`interpolating'. For instance, for computing $\norm{A}{\ptop}$, computing the vectors
that maximize the norm for $p=1, 2, \infty$, and picking the best of them gives an
$O(n^{1/4})$ approximation for all $p$ (see~\cite{higham}). For the general problem
of computing $\norm{A}{\qtop}$, Steinberg~\cite{daureen} gives an algorithm with an improved
guarantee of $O(n^{25/128})$, by taking into account the approximation algorithms
of Nesterov for certain ranges.

These algorithms use H\"older's inequality, and a fact which follows from the duality of
$\ell_p$ spaces (this sometimes allows one to `move' from one range of parameters to another):
\[ \norm{A}{\qtop} = \norm{A^T}{p' \mapsto q'}, \] where $A^T$ is the
transpose of the matrix $A$, and $p'$ and $q'$ are the `duals' of $p,
q$ respectively (i.e. $1/p + 1/p' = 1$). See Appendix~\ref{sec:app:lpspace-duality}
for a proof.

\paragraph{The hardness front.} The problem is known to be NP-hard in the range
$q \ge p \ge 1$~\cite{daureen}. Very recently in independent work, \cite{hendrix} 
show that it is NP-hard to compute the $p$-norm to
arbitrary relative precision when $p \not\in \{1, 2, \infty\}$ (i.e.,
there cannot be a $(1+\del)$ approximation algorithm with run time
poly$(n,m, 1/\del)$).

\subsection{Our Results}
\paragraph{Non-negative matrices.} We first consider the case of
matrices $A$ with non-negative entries. Here we prove that if $1 \le p
\le q$, then $\norm{A}{\qtop}$ can be computed in polynomial
time. More precisely we give an algorithm which gives a $(1+\del)$
approximation in time polynomial in $n, m$, and $(1/\del)$.

Thus in particular, we give the first poly time guarantee (to the best
of our knowledge) for computing the matrix $p$-norm for non-negative
matrices. We give an analysis of a power iteration type algorithm for
computing $p$-norms proposed by Boyd~\cite{boyd}. The algorithm performs
a fixed point computation, which turns out to mimic
power iteration for eigenvalue computations.

Heuristic approaches to many optimization problems involve finding
solutions via fixed point computations. Our analysis proves polynomial
convergence time for one such natural fixed point algorithm. These
techniques could potentially be useful in other similar settings.
We believe that this algorithm could be useful as an optimization tool
for other problems with objectives that involve $p$-norms (or as a
natural extension of eigenvalue computations). We now mention one such
application, to oblivious routing in the $\ell_p$ norm.

\paragraph{Application to Oblivious Routing.}
In the oblivious routing problem, we are given a graph $G$, and we need to output a `routing scheme', namely a unit flow between every pair of vertices. Now given a set of demands (for a multicommodity flow), we can route them according to this scheme (by scaling the flows we output according to the demand in a natural way), and the total flow on each edge is obtained. The aim is to compete (in terms of, for instance, max congestion) with the best multicommodity flow `in hindsight' (knowing the set of demands). 

For max-congestion (maximum total flow on an edge -- which is the $\ell_\infty$ norm of the vector of flows on edges), a beautiful result of \cite{racke1} gives an $O(\log n)$ competitive routing scheme. Englert and R\"acke~\cite{racke} recently showed that there exists an oblivious routing scheme which attains a competitive ratio of $O(\log n)$ when
the objective function is the $\ell_p$-norm of the flow vector ($|E|$
dimensional vector). However, they can efficiently
compute this oblivious routing scheme only for $p=2$.

From the analysis of our algorithm, we can prove that for matrices with
strictly positive entries there is a unique optimum. Using this and a related idea
(Section~\ref{sec:stability}), we can make the result of \cite{racke}
constructive. Here matrix $p$-norm computation is used as a `separation oracle' in a
multiplicative weights style update, and this gives an $O(\log n)$-competitive oblivious
routing scheme for all $\ell_p$-norms ($p \geq 1$).

\paragraph{Hardness  of approximation.}
For general matrices (with negative entries allowed), we
show the inapproximability of \emph{almost polynomial factor} for
computing the $\qtop$ norm of general matrices when $q \geq p$ and both $p,q$ are $>2$. By duality, this implies the same hardness when both $p,q$ are $<2$ and $q \ge p$.\footnote{When $p \le 2$ and $q\ge 2$, Nesterov's algorithm gives a constant factor approximation.}

More precisely, for these ranges, we prove that computing $\norm{A}{\qtop}$ upto any constant factor is NP-hard.
Under the stronger assumption that $\mathsf{NP} \notin \mathsf{DTIME}(2^{\text{polylog}(n)})$,
we prove that the problem is hard to approximate to a factor of $\Omega(2^{(\log
  n)^{1-\eps}})$, for any constant $\eps>0$.

\noindent{\em Techniques.} We first consider $\ptop$ norm approximation,
for which we show constant factor hardness by a
gadget reduction from the gap version of MaxCut. Then we show that the
$\ptop$ norm multiplies upon tensoring, and thus we get the desired
hardness amplification. While the proof of the small constant hardness
carries over to the $\qtop$ norm case with $q > p >2$, in general
these norms do not multiply under tensoring. We handle this by giving
a way of starting with a hard instance of $\ptop$ norm computation (with additional structure, as will be important),
and convert it to one of $\qtop$ norm computation.

We find the hardness results for computing the $\qtop$ norm interesting because the bounds are very
similar to hardness of combinatorial problems like label cover, and it applies to a natural
numeric optimization problem.

The question of computing $\norm{A}{\infty \mapsto p}$ has a simple alternate formulation (Definition~\ref{defn:lvp}): given vectors $\mathbf{a_1},\mathbf{a_2},\dots,\mathbf{a_n}$, find a $\{\pm\}$ combination of the vectors so as to maximize the length (in $\ell_p$ norm) of the resultant. The previous hardness result also extends to this case. 

\paragraph{Comparison with previous work.} For clarity, let us now tabulate our algorithmic and hardness results
in Tables~\ref{table:old} and show how they compare with known results for different values of the parameters $p,q$. Each row and column in the table gives three things: the best known algorithm in general, the best known approximation algorithm when all entries of $A$ are non-negative, and the best known hardness for this range of $p,q$. An entry saying ``NP-hard'' means that only exact polynomial time algorithms are ruled out.
\begin{table}[h]\label{table:old}
\begin{center}
\caption{\textbf{Previous work}}
\vspace{6pt}
\begin{tabular}{|c|c||c|c|c|}
\hline
           & & $1<q<2$ & $q=2$ & $q>2$ \\
\hline
\hline
Best Approximation   &$1<p<2$& poly(n)& $O(1)$ \cite{nesterov}&
$O(1)$\cite{nesterov}\\
Hardness        &&$p < q$: NP-hard & NP-hard & NP-hard\\
Non-negative&&&Exact \cite{daureen}& Exact \cite{daureen} \\
matrices &&&&\\
\hline
\hline
Best Approximation   &$p=2$ & poly(n)& Exact & $O(1)$ \cite{nesterov} \\
 Hardness&&&&NP-hard\\
Non-negative&&&Exact&Exact \cite{daureen} \\
matrices &&&&\\
\hline
\hline
Best Approximation   &$p>2$ & poly($n$)& poly($n$)& poly($n$)\\
 Hardness&&&&$p<q$: NP-hard\\
Non-negative&&&& \\
matrices&&&&\\
\hline
\hline
\end{tabular}
\vspace{10pt}
\caption{\textbf{Our results.} We give better algorithms for
non-negative matrices and obtain almost-polynomial hardness results
when $q \geq p$.}
\vspace{6pt}
\begin{tabular}{|c|c||c|c|c|}
\hline
           & & $1<q<2$ & $q=2$ & $q>2$ \\
\hline
\hline
Hardness        &$1<p<2$&$p \leq q$: $2^{(\log n)^{1-\epsilon}}$-hard&&\\
Non-negative matrices&&$p \leq q$: Exact&Exact& Exact\\
\hline
\hline
 Hardness&$p=2$&&&\\
Non-negative matrices&&&Exact&Exact\\
\hline
\hline
 Hardness& $p>2$&&&$p\leq q$: $2^{(\log n)^{1-\epsilon}}$-hard\\
Non-negative matrices&&&&$p \leq q$: Exact \\
\hline
\hline
\end{tabular}
\end{center}
\end{table}
\paragraph{Discussion and open questions.} All our algorithms and hardness results apply to the case $p \le q$, but we do not know either of these (even for non-negative matrices) for $p>q$ (which is rather surprising!). For algorithmic results (for positive matrices, say) the fact that we can optimize $\norm{Ax}{p}/\norm{x}{q}$ seems closely tied to the fact that the set $\{x~:~\norm{x}{p}/\norm{x}{q} > \tau\}$ is convex for any $\tau>0$ and $p \le q$. However, we are not aware of any formal connection. Besides, when $p>q$, even for non-negative matrices there could be multiple optima (we prove uniqueness of optimum when $p \le q$).

On the hardness front, the $q<p$ case seems more related to questions like the Densest $k$-subgraph problem (informally, when the matrix is positive and $p<q$, if there is a `dense enough' submatrix, the optimum vector would have most of its support corresponding to this). Thus the difficulties in proving hardness for the norm question may be related to proving hardness for densest subgraph.

Hypercontractive norms (corresponding to $q<p$) have been well-studied~\cite{kkl}, and have also found prior use in inapproximability results for problems like maxcut. Also, known integrality gap instances for unique games~\cite{kv} are graphs that are hypercontractive. We believe that computability of hypercontractive norms of a matrix could reveal insights into the approximability of problems like small set expansion~\cite{sse} and the planted dense $k$-subgraph problem~\cite{dks}.

\subsection{Related work.}
A question that is very related to matrix norm computation is the $L_p$ Grothendieck problem, which has been studied earlier by~\cite{naor}. The problem is to compute 
\[\max_{||\mathbf{x}||_{p}\leq 1} \mathbf{x}^t B \mathbf{x} \]
The question of computing $\norm{A}{p \mapsto 2}$ is a special case of the $L_p$ Grothendieck problem (where $B \succeq 0$).
\cite{naor} give an optimal (assuming UGC) $O(p)$ approximation algorithm. For $B$ being p.s.d., constant factor approximation algorithms are known, due to \cite{nesterov}. Computing $\norm{A}{\infty \mapsto 2}$ reduces to maximizing a quadratic form over ${\pm 1}$ domain for p.s.d matrices \cite{cw,nesterov}. 

Recently, \cite{madhur} studies an optimization problem which has an $\ell_p$ norm objective -- they wish to find the best $k$-dimensional subspace approximation to a set of points, where one wishes to minimize the $\ell_p$ distances to the subspace (there are other problems in approximation theory which are of similar nature). When $k=n-1$ this can be shown to reduce to the $L_p$ Grothendieck problem for the matrix $A^{-1}$.

\subsection{Paper Outline} \label{sec:intro:outline}

We start by presenting the algorithm for positive matrices (Section~\ref{sec:algo:description}), and prove
poly time convergence (Section~\ref{sec:algo:analysis}). Some additional properties of the optimization problem are discussed in Section~\ref{sec:stability} (such as unique maximum, concavity around optimum),
which will be useful for an oblivious routing application. This will be presented in Section~\ref{sec:oblivious}.
Finally in Section~\ref{sec:lower-bounds}, we study the inapproximability of the problem: we first show a constant factor hardness for $\norm{A}{\ptop}$ (Section~\ref{sec:lb:ptop}), and show how to amplify it (Section~\ref{sec:tensoring}). Then we use this to show hardness for $\norm{A}{\qtop}$ in section~\ref{sec:lb:ptoq}.

\section{Notation and Simplifications}\label{sec:intro:notation}

We write $\rplus$ for the set of non-negative reals. For a matrix $A$,
we let $A_i$ denote the $i$th row of $A$. Also $a_{ij}$ denotes the
element in the $i$th row and $j$th column. Similarly for a vector
$\xx$, we denote the $i$th co-ordinate by $x_i$. We say that a vector
$\xx$ is {\em positive} if the entries $x_i$ are all $>0$. Finally,
for two vectors $\xx,\yy$, we write $\xx \propto \yy$ to mean that
$\xx$ is {\em proportional to} $\yy$, i.e., $\xx = \lambda \yy$ for
some $\lambda$ (in all places we use it, $\lambda$ will be $>0$).

For our algorithmic results, it will be much more convenient to work with matrices where we restrict the entries to be in $[1/N,1]$, for some parameter $N$ (zero entries can cause minor problems). If we are interested in a $(1+\del)$ approximation, we can first scale $A$ such that the largest entry is $1$, pick $N \approx (m+n)^2/\delta$, where $m, n$ are the dimensions of the matrix, and work with the matrix $A+\frac{1}{N} J$ (here $J$ is the $m\times n$ matrix of ones). The justification for this can be found in Appendix~\ref{sec:app:simplification}. We will refer to such $A$ as a {\em positive matrix}.

\section{An Iterative Algorithm}\label{sec:algo}
In this section, we consider positive matrices $A$, and prove that if $1 < p \le q$, we can efficiently compute $\norm{A}{\qtop}$. Suppose $A$ is of dimensions $n \times n$, and define $f: \rr^n \mapsto \rr$ by
\[ f(\xx) = \frac{ \norm{A \xx }{p}}{\norm{\xx}{q}} = \frac{ (\sum_i |A_i \xx|^p )^{1/p}}{(\sum_i |x_i|^q )^{1/q}}. \]

We present an algorithm due to Boyd~\cite{boyd}, and prove that it converges quickly to the optimum vector. The idea is to
consider $\grad f$, and rewrite the equation $\grad f=0$ as a fixed
point equation (i.e., as $S \xx = \xx$, for an appropriate operator
$S$). The iterative algorithm then starts with some vector $\xx$, and
applies $S$ repeatedly. Note that in the case $p=2$, this mimics the familiar power
iteration (in this case $S$ will turn out to be multiplication by the
matrix $A$ (up to normalization)).

\subsection{Algorithm description}\label{sec:algo:description}
Let us start by looking at $\grad f$.
\begin{equation}
\doh{f}{x_i} = \frac{\norm{x}{q} \norm{Ax}{p}^{1-p} \cdot \sum_j a_{ij} |A_j x|^{p-1}  - \norm{Ax}{p} \norm{x}{q}^{1-q} \cdot |x_i|^{q-1}}{ \norm{x}{q}^{2}}
\end{equation}
At a critical point, $\doh{f}{x_i} = 0$ for all $i$. Thus for all $i$,
\begin{equation}\label{eq:critical}
|x_i|^{q-1} = \frac{\norm{x}{q}^q}{\norm{Ax}{p}^p} \cdot \sum_j a_{ij} |A_j x|^{p-1}
\end{equation}

Define an operator $S: \rplus^n \rightarrow \rplus^n$, with the $i$th
co-ordinate of $Sx$ being (note that all terms involved are positive)
\[ (Sx)_i = \big( \sum_j a_{ij} (A_j x)^{p-1} \big)^{1/(q-1)} \]
Thus, at a critical point, $Sx \propto x$. Now consider the the
following algorithm:\\
({\bf Input.} An $n \times n$ matrix $A$ with all entries in
$[\frac{1}{N}, 1]$, error parameter $\delta$.)
\begin{algorithmic}[1]
\STATE Initialize $x = \frac{\mathbf{1}}{\norm{\mathbf{1}}{p}}$.
\LOOP[$T$ times (it will turn out $T = (Nn) \cdot \textrm{polylog}(N,n,1/\delta)$)]
\STATE set $x \leftarrow Sx$.
\STATE normalize $x$ to make $\norm{x}{q} = 1$.
\ENDLOOP
\end{algorithmic}
A \emph{fixed point} of the iteration is a vector $x$ such that $Sx
\propto x$. Thus every critical point of $f$ is a fixed point. It turns out
that every {\em positive} fixed point is also a critical
point. Further, there will be a unique positive fixed point, which is
also the unique maximum of $f$.

\subsection{Analyzing the Algorithm} \label{sec:algo:analysis}

We will treat $f(\xx)$ as defined over the domain $\rplus^n$. Since the
matrix $A$ is positive, the maximum must be attained in $\rplus^n$.
Since $f$ is invariant under scaling $x$, we restrict our attention to
points in $\Sqn = \{x~:~x \in \rplus^n, \norm{x}{q}=1 \}$. Thus the
algorithm starts with a point in $\Sqn$, and in each iteration moves
to another point, until it converges.

First, we prove that the maximum of $f$ over $\rplus^n$ occurs at an interior point (i.e., none of the co-ordinates are zero). Let $\xopt$ denote a point at which maximum is attained, i.e., $f(\xopt) = \norm{A}{\qtop}$ ($\xopt$ need not be unique). Since it is an interior point, $\grad f =0$ at $\xopt$, and so $\xopt$ is a fixed point for the iteration.

\begin{lemma}\label{lem:nonzero-coords}
Let $\xopt \in \Sqn$ be a point at which $f$ attains maximum. Then each co-ordinate of $x^*$ is at least $\frac{1}{(Nn)^2}$.
\end{lemma}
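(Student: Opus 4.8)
My plan is to exploit the fact that $\xopt$ is a critical point, so it satisfies the fixed-point equation~\eqref{eq:critical}: for every coordinate $i$,
\[
(x^*_i)^{q-1} = \frac{\norm{\xopt}{q}^q}{\norm{A\xopt}{p}^p}\cdot \sum_j a_{ij}\,(A_j\xopt)^{p-1}.
\]
Since $\norm{\xopt}{q}=1$, the leading fraction is $1/\norm{A\xopt}{p}^p = 1/\norm{A}{\qtop}^p$. So to lower bound $x^*_i$ it suffices to (a) upper bound $\norm{A}{\qtop}$, and (b) lower bound $\sum_j a_{ij}(A_j\xopt)^{p-1}$, and then take $(q-1)$-th roots. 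Both bounds should follow from the entrywise restriction $a_{ij}\in[1/N,1]$ together with $\norm{\xopt}{q}=1$.

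For step (a): since all entries of $A$ are at most $1$, for any $x\in\Sqn$ we have $A_ix = \sum_j a_{ij}x_j \le \sum_j x_j = \norm{x}{1} \le n^{1-1/q}\norm{x}{q} = n^{1-1/q}$ by Hölder, so $\norm{Ax}{p}\le n^{1/p}\cdot n^{1-1/q} \le n^2$; hence $\norm{A}{\qtop}\le n^2$, and certainly $\norm{A}{\qtop}^p \le (Nn)^{O(p)}$ — I should be a little careful to keep the exponent of $p$ from being a problem, but since we take a $(q-1)$-th root at the end and $q \ge p$, the $p$ in the exponent will be absorbed. For step (b): I need a lower bound on $A_j\xopt = \sum_k a_{jk}x^*_k$. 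Here I use $a_{jk}\ge 1/N$, so $A_j\xopt \ge \frac1N\sum_k x^*_k = \frac1N\norm{\xopt}{1} \ge \frac1N\norm{\xopt}{q} = \frac1N$ (since $\norm{\cdot}{1}\ge\norm{\cdot}{q}$). Then $\sum_j a_{ij}(A_j\xopt)^{p-1} \ge \frac1N \cdot n \cdot (1/N)^{p-1} = n/N^p$, or more crudely $\ge (1/N)^p$. Combining, $(x^*_i)^{q-1} \ge \frac{1}{(Nn)^{O(p)}}$, and taking the $(q-1)$-th root — using $q-1$ could be small, but $q\ge p>1$ so $q-1$ is bounded below by $p-1$, and in any case the constant $\frac{1}{(Nn)^2}$ in the statement is generous enough to absorb these exponents — gives $x^*_i \ge \frac{1}{(Nn)^2}$.

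The one genuine subtlety — and the step I expect to need the most care — is justifying that $\xopt$ is actually a critical point, i.e. that the maximum is attained at an \emph{interior} point of $\rplus^n$ so that $\grad f = 0$ there and~\eqref{eq:critical} applies. The excerpt asserts this just before the lemma, so I may simply invoke it; but if it needs proof, the argument is that if some $x^*_i = 0$, one can increase $f$ by moving mass onto coordinate $i$: increasing $x_i$ from $0$ raises every $A_jx$ strictly (since $a_{ji}\ge 1/N>0$), hence raises $\norm{Ax}{p}$, and the first-order effect on the denominator $\norm{x}{q}$ is zero (the derivative of $t\mapsto(\dots+t^q)^{1/q}$ at $t=0$ vanishes for $q>1$). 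So $f$ strictly increases, contradicting optimality; therefore all coordinates of $\xopt$ are strictly positive and the critical-point equation holds. Once that is in hand, the rest is the elementary chain of Hölder/entrywise estimates above, and I would present it compactly rather than optimizing constants.
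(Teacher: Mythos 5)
Your proposal is correct and follows essentially the same route as the paper's proof: establish interiority by perturbing a zero coordinate (first-order gain in the numerator versus an $O(\delta^q)$ change in the denominator), then combine the critical-point identity with $A_j\xopt\ge 1/N$ and an $n^{O(p)}$ upper bound on $\norm{A\xopt}{p}^p$, and take a $(q-1)$-th root. The only (shared) looseness is that the resulting exponent is $p/(q-1)$, which both you and the paper wave away when $p,q$ are close to $1$.
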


The proof of this can be found in Section~\ref{sec:stability}. Next, we show that with each iteration, the value of the function cannot decrease. This was proved by~\cite{boyd} (we refer to their paper for the proof).

\begin{lemma}(\cite{boyd})\label{lem:monotone}
For any vector $x$, we have
\[ \frac{\norm{ASx}{p}}{\norm{Sx}{q}} \ge \frac{\norm{Ax}{p}}{\norm{x}{q}} \]
\end{lemma}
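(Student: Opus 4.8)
The plan is to show that a single application of the operator $S$ (followed by renormalization, which does not affect the ratio $\norm{A\cdot}{p}/\norm{\cdot}{q}$) never decreases the objective $f$. Since $S$ was derived precisely by rewriting the critical-point equation $\grad f = 0$, the natural strategy is to exhibit $S$ as the maximizer of a suitable auxiliary function that lower-bounds $f$, so that $f(x) \le g(x, Sx) \le g(Sx, Sx) \le f(Sx)$ for an appropriate two-variable surrogate $g$. This is the standard ``minorize-maximize'' (or auxiliary function) paradigm, and the whole proof reduces to finding the right $g$ and checking two inequalities — one that comes from Hölder's inequality (relating $g$ to $f$) and one that is just the first-order optimality of $S$.

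First I would set up the surrogate. Write $u = Ax$, and consider, for the numerator, the linearization of $t \mapsto t^p$: by concavity of $t\mapsto t^{p-1}$ issues aside, the cleaner route is to use that for fixed positive reals $b_j$ and the exponent $p$, the quantity $\sum_j b_j^p$ is handled via Hölder as $\sum_j b_j \cdot b_j^{p-1} \le (\sum_j b_j^p)^{1/p}(\sum_j b_j^p)^{(p-1)/p}$, with equality at the current point. Applying this with $b_j = A_j x$ lets one ``freeze'' the weights $|A_j x|^{p-1}$ and reduce the numerator to a linear form $\sum_{i,j} a_{ij} x_i |A_j x|^{p-1}$ in the new variable, up to the normalizing power $\norm{Ax}{p}^{p-1}$. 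Symmetrically, for the denominator one uses Hölder in the opposite direction (a reverse/weighted Hölder, valid because $q \ge 1$) to bound $\norm{y}{q}$ from above by a linear expression $\sum_i |x_i|^{q-1} y_i / \norm{x}{q}^{q-1}$ plus the correct correction — again tight at $y = x$. Combining, $f(y)$ is bounded below by a ratio $\big(\sum_i y_i \, c_i\big) \big/ \big(\sum_i y_i^{?}\dots\big)$ whose maximizer over $y$ is, by Lagrange/KKT, exactly $y \propto Sx$ with $c_i = \sum_j a_{ij}(A_j x)^{p-1}$. Evaluating this surrogate at $y = Sx$ recovers (after simplification, using $p \le q$ so the exponents $1/(q-1)$ are well-behaved and the Hölder steps go the right way) a value that is $\ge f(x)$.

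The key steps, in order: (1) normalize so $\norm{x}{q}=1$ and introduce the weights $w_j = (A_j x)^{p-1}$; (2) prove the numerator inequality $\norm{Ay}{p} \ge \big(\sum_{i,j} a_{ij} y_i w_j\big)\big/\norm{Ax}{p}^{p-1}$ via Hölder, equality at $y=x$; (3) prove the denominator inequality $\norm{y}{q} \le \big(\sum_i y_i |x_i|^{q-1}\big) + (\text{slack})$ or its appropriate tight form, again via Hölder, using $q \ge 1$; (4) assemble the surrogate $g(y) = g(y;x)$ with $f(y) \ge g(y)$ and $g(x) = f(x)$; (5) solve $\max_y g(y)$ by calculus and check the optimizer is $\propto Sx$; (6) conclude $f(Sx) \ge g(Sx) = \max_y g(y) \ge g(x) = f(x)$.

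The main obstacle I expect is getting the two Hölder applications to point in compatible directions so that the surrogate is genuinely a lower bound for $f$ \emph{and} is maximized at $Sx$ — this is exactly where the hypothesis $p \le q$ should enter, since the exponent $1/(q-1)$ appearing in $S$ must be at most $1/(p-1)$ for the convexity/concavity of the relevant power functions to line up. A secondary nuisance is bookkeeping of the normalizing factors $\norm{Ax}{p}^{p-1}$ and $\norm{x}{q}^{q-1}$, which cancel correctly only if the surrogate is homogeneous of the right degree; I would track homogeneity carefully rather than expanding everything. Since Boyd already proves this, I would also be prepared to fall back on his argument directly, but the auxiliary-function proof above is the version I would write out.
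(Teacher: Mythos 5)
Your plan---build a surrogate $g(\cdot\,;x)\le f$ that is tight at $x$ and maximized at $Sx$---has the right shape, and the numerator half (freeze the weights $w_j=(A_j x)^{p-1}$ and apply H\"older, with equality at $y=x$) is exactly right. But step~(3) does not work: you cannot bound $\norm{y}{q}$ from \emph{above} by an affine function of $y$ that touches it at $x$, because $\norm{\cdot}{q}$ is convex, so no affine majorant of it exists. There is no ``reverse H\"older'' that does what you want here, and your own worry about the two H\"older applications ``pointing in compatible directions'' is precisely this gap surfacing.

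The repair is to leave the denominator alone. Put $c_i=\sum_j a_{ij}(A_j x)^{p-1}$ and set
\[
  g(y)=\frac{\langle c,y\rangle}{\norm{Ax}{p}^{\,p-1}\,\norm{y}{q}}.
\]
H\"older over $j$ (exponents $\tfrac{p}{p-1}$ and $p$) gives $\langle c,y\rangle=\sum_j (A_j x)^{p-1}(A_j y)\le\norm{Ax}{p}^{p-1}\norm{Ay}{p}$, so $g\le f$ pointwise, with equality at $y=x$. And $\max_y \langle c,y\rangle/\norm{y}{q}$ is just the dual norm $\norm{c}{q/(q-1)}$, attained at $y_i\propto c_i^{1/(q-1)}=(Sx)_i$---a closed-form fact about $\ell_q$, not a second inequality on the denominator. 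Chaining $f(Sx)\ge g(Sx)\ge g(x)=f(x)$ proves the lemma. This is in fact Boyd's argument (the paper cites \cite{boyd} rather than reproving it), usually written as the pair of H\"older steps $\norm{Sx}{q}^q=\sum_j(A_j x)^{p-1}(A_j Sx)\le\norm{Ax}{p}^{p-1}\norm{ASx}{p}$ and $\norm{Ax}{p}^p=\sum_i x_i(Sx)_i^{q-1}\le\norm{x}{q}\norm{Sx}{q}^{q-1}$, then dividing. One further correction: your guess that $p\le q$ is where the proof hinges is off---this monotonicity holds for all $p,q>1$. The hypothesis $p\le q$ enters only later, in Lemma~\ref{lem:sandwich}, Lemma~\ref{lem:uniquemax}, and the convergence bound.
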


The analysis of the algorithm proceeds by maintaining two {\em potentials}, defined by
\[ m(x) = \min_i \frac{(Sx)_i}{x_i} \qquad \textrm{ and } \qquad M(x) = 
\max_i \frac{(Sx)_i}{x_i}.\]
If $x$ is a fixed point, then $m(x) = M(x)$. Also, from Section~\ref{sec:algo:description}, each is equal to $\big( \frac{\norm{x}{q}^q}{\norm{Ax}{p}^p} \big)^{1/(q-1)}$. As observed in \cite{boyd}, these quantites can be used to `sandwich' the norm -- in particular,

\begin{lemma}\label{lem:sandwich}
For any positive vector $x$ with $\norm{x}{q}=1$, we have
\[ m(x)^{q-1} \leq \norm{A}{\qtop}^p \leq M(x)^{q-1}\]
\end{lemma}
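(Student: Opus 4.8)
The plan is to prove the two inequalities $m(x)^{q-1} \le \norm{A}{\qtop}^p$ and $\norm{A}{\qtop}^p \le M(x)^{q-1}$ separately, using only the definitions of $m(x)$, $M(x)$, and $S$, together with Lemma~\ref{lem:monotone} and the observation already recorded in the excerpt that for a fixed point $x$ one has $m(x)=M(x)=\big(\norm{x}{q}^q/\norm{Ax}{p}^p\big)^{1/(q-1)}$. The cleanest route is to apply this last identity to a point where we actually control things, namely the optimum $\xopt$ (which by the discussion preceding Lemma~\ref{lem:nonzero-coords} is an interior point and hence a fixed point of $S$), and then compare a general $x$ to $\xopt$ coordinatewise.

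For the upper bound $\norm{A}{\qtop}^p \le M(x)^{q-1}$: by definition of $M(x)$ we have $(Sx)_i \le M(x)\, x_i$ for every $i$, i.e. $Sx \le M(x)\, x$ coordinatewise (all quantities positive). Since $A$ has non-negative entries and $\norm{\cdot}{p}$ is monotone on $\rplus^m$, this gives $\norm{ASx}{p} \le M(x)\,\norm{Ax}{p}$, and similarly $\norm{Sx}{q} \le M(x)\,\norm{x}{q}$ — but I want the inequality the other way for the denominator, so instead I iterate: applying $S$ repeatedly, monotonicity of Lemma~\ref{lem:monotone} says $f(S^t x)$ is non-decreasing in $t$, while the coordinatewise bound $S x \le M(x) x$ (and positivity of $S$, which is monotone in $x$) yields $S^{t}x \le M(x)^{t} x$ for all $t$ after suitable bookkeeping, hence $f(S^t x) = \norm{A S^{t-1}x}{p}/\norm{S^{t-1}x}{q}$ stays bounded; a cleaner version is simply to note $\norm{A}{\qtop} \ge f(x)$ always and $\norm{A}{\qtop} = f(\xopt)$, and to use the fixed-point identity at $\xopt$: there $M(\xopt)^{q-1} = \norm{\xopt}{q}^q/\norm{A\xopt}{p}^p = 1/\norm{A}{\qtop}^p$ when $\norm{\xopt}{q}=1$ — wait, that is the reciprocal, so in fact the correct reading is $\norm{A}{\qtop}^p = 1/m(\xopt)^{q-1}$, which suggests the intended normalization inside $S$ makes $m,M$ behave inversely to $f$. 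I would therefore first nail down, from the formula $(Sx)_i = (\sum_j a_{ij}(A_jx)^{p-1})^{1/(q-1)}$ and Eq.~\eqref{eq:critical}, the precise relation $m(x)^{q-1} \le \norm{x}{q}^q\big/\norm{Ax}{p}^{p}\cdot(\text{something})$ — i.e. recompute carefully which of $m,M$ sandwiches which side — and then the two bounds follow by the same coordinatewise-monotonicity argument applied once to $x$ and once with the roles reversed.

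Concretely, the key steps in order: (i) record $Sx \le M(x)\,x$ and $Sx \ge m(x)\,x$ coordinatewise; (ii) push these through $A$ and through $\norm{\cdot}{p}$, $\norm{\cdot}{q}$ using non-negativity of $A$ and monotonicity of $\ell_p$-norms on the positive orthant, to get two-sided bounds on $f(Sx)$ in terms of $f(x)$ and $m(x),M(x)$; (iii) combine with Lemma~\ref{lem:monotone} ($f(Sx)\ge f(x)$) to squeeze, and iterate $S$ so that the bound must in the limit be consistent with $\norm{A}{\qtop}$; alternatively (iii$'$) evaluate the exact identity $m(\xopt)=M(\xopt)=\big(\norm{\xopt}{q}^q/\norm{A\xopt}{p}^p\big)^{1/(q-1)}$ at the optimum and invoke that $f$ is maximized there, comparing a general $x$ to $\xopt$.

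The main obstacle I anticipate is purely bookkeeping: getting the direction and the exponent right in the chain $m(x)^{q-1} \le \norm{A}{\qtop}^p \le M(x)^{q-1}$, since $m,M$ are defined through the $\tfrac{1}{q-1}$-th root operator $S$ and Eq.~\eqref{eq:critical} relates $|x_i|^{q-1}$ (not $|x_i|$) to $\tfrac{\norm{x}{q}^q}{\norm{Ax}{p}^p}\sum_j a_{ij}|A_jx|^{p-1}$, so that a factor $\norm{Ax}{p}^p/\norm{x}{q}^q = f(x)^p$ (after normalization, $= f(x)^p$) naturally appears. Once the algebra confirms that $M(x)^{q-1} = \max_i \tfrac{(Sx)_i^{q-1}}{x_i^{q-1}} = \tfrac{\norm{x}{q}^q}{\norm{Ax}{p}^p}\cdot \max_i \tfrac{\sum_j a_{ij}(A_jx)^{p-1}}{x_i^{q-1}}$ and that at the optimum the max equals the min equals $f(\xopt)^{-p}$ (with $\norm{x}{q}=1$), the sandwich is immediate from maximality of $f(\xopt)$ and the coordinatewise monotonicity established in step (ii). There is no deep content here beyond Lemma~\ref{lem:monotone} and positivity; the whole lemma is essentially the standard Collatz–Wielandt min-max characterization of the Perron eigenvalue, adapted to this nonlinear operator $S$.
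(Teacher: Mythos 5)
There is a genuine gap here: you have the right framework (a Collatz--Wielandt argument around the fixed point $\xopt$), but you never supply the comparison step that carries the whole lemma, and you leave the central identity unresolved. First, the identity: from Eq.~\eqref{eq:critical}, at a fixed point $(Sx)_i^{q-1} = \frac{\norm{Ax}{p}^p}{\norm{x}{q}^q}\, x_i^{q-1}$, so $m(\xopt)^{q-1}=M(\xopt)^{q-1}=\norm{A\xopt}{p}^p=\norm{A}{\qtop}^p$ directly, not its reciprocal. (The sentence in Section~3.2 stating the common value as $(\norm{x}{q}^q/\norm{Ax}{p}^p)^{1/(q-1)}$ is inverted; the proof of Lemma~\ref{lem:uniquemax} uses the correct form.) Your proposal explicitly defers resolving this (``recompute carefully which of $m,M$ sandwiches which side''), and your final paragraph asserts that at the optimum the max and min equal $f(\xopt)^{-p}$, which is wrong. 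Second, and more seriously: knowing $M(\xopt)^{q-1}=\norm{A}{\qtop}^p$ does not by itself give $M(x)^{q-1}\ge \norm{A}{\qtop}^p$ for an \emph{arbitrary} positive unit vector $x$; you need a mechanism to transfer the bound from $\xopt$ to $x$. The paper's mechanism is to take the smallest $\theta>0$ such that $x-\theta\xopt$ has a zero coordinate, say coordinate $k$; then $x\ge\theta\xopt$ with equality at $k$ and $\theta<1$, and monotonicity of $S$ together with the homogeneity $S(\theta\xopt)=\theta^{(p-1)/(q-1)}S\xopt$ give $(Sx)_k^{q-1}/x_k^{q-1} > \theta^{\,p-q}\lambda \ge \lambda$ precisely because $p\le q$. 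This extremal-$\theta$ step, and the point where $p\le q$ enters, are exactly what is missing from your sketch. Your fallback route (iii), iterating $S$ and passing to a limit, does not work as described: $S$ is homogeneous of degree $(p-1)/(q-1)$, not $1$, so $Sx\le M(x)\,x$ does not iterate to $S^tx\le M(x)^t x$; and arguing that the iteration converges to $\xopt$ would be circular, since this lemma is an ingredient of that convergence proof.

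For the first inequality the paper uses a different, self-contained averaging argument that you also omit: the weighted average of the ratios $(Sx)_i^{q-1}/x_i^{q-1}$ with weights $x_i^{q}$ telescopes, after swapping the order of summation, to $\norm{Ax}{p}^p/\norm{x}{q}^q\le\norm{A}{\qtop}^p$, so the minimum ratio $m(x)^{q-1}$ is at most $\norm{A}{\qtop}^p$. (A symmetric extremal-$\theta$ argument would also work here, but you would still have to write it down.) As it stands the proposal is a plausible plan rather than a proof: both halves of the sandwich are claimed to be ``immediate'' from steps that are either not carried out or pointed in the wrong direction.
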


The lemma is crucial -- it relates the norm (which we wish to compute) to certain quantities we can compute starting with any positive vector $x$. We now give a proof of this lemma. Our proof, however, has the additional advantage that it immediately implies the following:
\begin{lemma}\label{lem:uniquemax}
The maximum of $f$ on $\Sqn$ is attained at a unique point $\xopt$. Further, this $\xopt$ is the unique critical point of $f$ on $\Sqn$ (which also means it is the unique fixed point for the iteration).
\end{lemma}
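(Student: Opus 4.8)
The plan is to prove Lemma~\ref{lem:uniquemax} as a byproduct of the proof of Lemma~\ref{lem:sandwich}, by extracting the equality condition. First I would establish the sandwich bounds themselves. Given a positive $x$ with $\norm{x}{q}=1$, write $m = m(x)$ and $M=M(x)$, so that $m x_i \le (Sx)_i \le M x_i$ for every $i$. The key is to compare $x$ with the (unknown) optimum $\xopt$: consider the ratio $r_i = x_i / \xopt_i$ and look at the index $i$ achieving $\max_i r_i$ (and separately $\min_i r_i$). At such an extremal index, monotonicity of $S$ in a suitable weighted sense forces $(S\xopt)_i / \xopt_i$ to be squeezed between $(Sx)_i/x_i$ and the corresponding quantity for $\xopt$; since $\xopt$ is a fixed point, $(S\xopt)_i/\xopt_i = \lambda$ is independent of $i$, and $\lambda^{q-1} = \norm{x}{q}^q / \norm{A\xopt}{p}^p$ up to the normalization, which is exactly $\norm{A}{\qtop}^p$. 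Chasing the inequality at the max-ratio index gives $\norm{A}{\qtop}^p \le M^{q-1}$, and at the min-ratio index gives $m^{q-1} \le \norm{A}{\qtop}^p$. The arithmetic here is the ``routine calculation'' part — it is the standard Collatz–Wielandt-type argument for the nonlinear eigenvalue operator $S$ — so I would not grind through it, but the structure is: expand $(Sx)_i = \big(\sum_j a_{ij}(A_jx)^{p-1}\big)^{1/(q-1)}$, use $x_i \le (\max_k r_k)\,\xopt_i$ inside $A_j x = \sum_i a_{ji} x_i$, and pull the scalar out using $p-1 \ge 0$ and $1/(q-1) > 0$.

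Next, for uniqueness, I would examine when the chain of inequalities above is tight. Suppose $\xopt$ and $\xopt{}'$ are two points where $f$ attains its maximum on $\Sqn$; by Lemma~\ref{lem:nonzero-coords} both are positive (interior), hence both are fixed points of $S$ with the \emph{same} eigenvalue $\lambda$ (since $\lambda^{q-1}$ equals $\norm{A}{\qtop}^p$ for both). Apply the argument with $x = \xopt$ and optimum $\xopt{}'$: at the index $i$ maximizing $r_i = \xopt_i / \xopt{}'_i$, the inequality that produced $\norm{A}{\qtop}^p \le M(\xopt)^{q-1} = \lambda^{q-1}$ must actually be an equality (since $M(\xopt) = \lambda$ as $\xopt$ is itself a fixed point). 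Tracing back, equality in the step where we replaced $x_k$ by $(\max_k r_k)\xopt{}'_k$ inside each $A_j x$ forces $x_k / \xopt{}'_k$ to be constant over all $k$ that appear with positive coefficient — and since $A$ is a positive matrix, \emph{all} $k$ appear — so $\xopt \propto \xopt{}'$, and the normalization $\norm{\cdot}{q}=1$ gives $\xopt = \xopt{}'$. The same argument shows any positive critical point (equivalently any positive fixed point) must coincide with $\xopt$: a positive fixed point $y$ has $m(y)=M(y)$, and plugging $x=y$ into the sandwich lemma pins $m(y)^{q-1} = \norm{A}{\qtop}^p$, so $f(y) = \norm{A}{\qtop}$, i.e.\ $y$ is a maximizer, hence $y = \xopt$.

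The main obstacle I anticipate is making the equality analysis fully rigorous: the extremal-ratio argument shows tightness \emph{at one coordinate}, and one has to be careful that this genuinely propagates to a statement about all coordinates. The clean way is to use strict convexity/concavity of the relevant power functions ($t \mapsto t^{p-1}$ with $p>1$, and the weighted power mean defining $S$) combined with strict positivity of every entry $a_{ij}$: if $x_k/\xopt{}'_k$ were not constant, then at the max-ratio index $i$ we would get a \emph{strict} inequality $(Sx)_i < \lambda\, x_i$ (or the reverse at the min-ratio index), contradicting that the two potentials $m,M$ coincide at a fixed point. So the argument hinges on converting ``$\le$ with a convexity step'' into ``$<$ unless the vector is proportional,'' which is exactly where $p>1$ (strict convexity) and positivity of $A$ are both used — and this is also why the statement is expected to fail for $p=1$ or for matrices with zero entries. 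A secondary point to handle carefully is the case $q > p$ versus $q = p$: the exponent $1/(q-1)$ is still a well-defined increasing function for $q>1$, so the argument is uniform, but I would double-check the boundary bookkeeping of the normalization factor $\norm{x}{q}^q/\norm{Ax}{p}^p$ to confirm it indeed equals $\norm{A}{\qtop}^{-p}$ (not some other power) at the optimum, since that identification is what ties $m,M$ to the quantity we want.
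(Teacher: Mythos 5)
Your proposal is correct and takes essentially the same route as the paper: the paper's proof of Lemma~\ref{lem:sandwich} picks the smallest $\theta$ with $x - \theta\xopt$ touching zero (i.e., the extremal ratio $\min_i x_i/\xopt_i$), exploits strict positivity of $A$ to get a strict inequality, and then Lemma~\ref{lem:uniquemax} follows by noting that a second fixed point $y$ would have $M(y)^{q-1} = m(y)^{q-1} = f(y)^p \le \lambda$ while the ratio argument forces $M(y)^{q-1} > \lambda$ — exactly the equality-tracing you describe. One small bookkeeping slip to fix when you grind through: the index maximizing $x_i/\xopt_i$ yields the \emph{lower} bound $m(x)^{q-1} \le \norm{A}{\qtop}^p$ (you get $\Theta^{p-q}\le 1$ since $\Theta\ge 1$ and $q\ge p$), and the minimizing index yields the upper bound; you have these swapped, and the paper in fact proves the lower bound by a cleaner averaging identity without invoking $\xopt$ at all.
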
 
\begin{proof}[Proof (of Lemma~\ref{lem:sandwich})]
Let $x \in \Sqn$ be a positive vector. Let $\xopt \in \Sqn$ be a vector which maximizes $f(x)$.

The first inequality is a simple averaging argument: 
\begin{align}
\frac{\sum_i x_i \cdot (Sx)_i^{q-1}}{\sum_i x_i \cdot x_i^{q-1}} &= 
   \frac{\sum_i x_i \cdot \sum_j a_{ij} (A_j x)^{p-1}}{\sum_i x_i^q}\\
   &= \frac{\sum_j (A_j x)^{p-1} \sum_i a_{ij} x_i}{\sum_i x_i^q}\\
   &= \frac{\sum_j (A_j x)^p}{\sum_i x_i^q} = \frac{\norm{Ax}{p}^p}{\norm{x}{q}^q} \le \norm{A}{\qtop}^p \label{eq:sandwich:avg}
\end{align}
The last inequality uses $\norm{x}{q}=1$. Thus there exists an index $i$ such that $(Sx)_i^{q-1}/x_i^{q-1} \le \norm{Ax}{\qtop}$.

The latter inequality is more tricky -- it gives an upper bound on $f(\xopt)$, no matter which $x \in \Sqn$ we start with. To prove this, we start by observing that $\xopt$ is a fixed point, and thus for all $k$,
\[ m(\xopt)^{q-1} = M(\xopt)^{q-1} = \frac{(S\xopt)_k^{q-1}}{(\xopt)_k^{q-1}}. \]

Call this quantity $\lambda$. Now, let $\theta >0$ be the smallest real number such that $x - \theta \xopt$ has a zero co-ordinate, i.e., $x_k = \theta \xopt_k$, and $x_j \geq
\theta \xopt_j$ for $j \neq k$. Since $\norm{x}{q} = \norm{\xopt}{q}$ and $x \neq \xopt$, $\theta$ is well-defined, and $x_j > \theta \xopt_j$ (strictly) for some index $j$. Because of these, and since each $a_{ij}$ is strictly positive, we have $Sx > S(\theta \xopt) = \theta^{(p-1)/(q-1)} S(\xopt)$ (clear from the definition of $S$).

Now, for the index $k$, we have
\[ \frac{ (Sx)_k^{q-1}}{x_k^{q-1}} > \frac{\theta^{p-1} (S\xopt)_k^{q-1}}{(\theta \xopt_k)^{q-1}} = \theta^{(p-q)} \cdot \lambda \]
Thus we have $M(x)^{q-1} > \lambda$ (since $q \geq p$, and $0<\theta<1$), which is what we wanted to prove.
\end{proof}

Let us see how this implies Lemma~\ref{lem:uniquemax}.
\begin{proof}[Proof (of Lemma~\ref{lem:uniquemax})]
Let $\xopt \in \Sqn$ denote a vector which maximizes $f$ over $\Sqn$ (thus $\xopt$ is one fixed point of $S$). Suppose, if possible, that $y$ is another fixed point. By the calculation in Eq.\eqref{eq:sandwich:avg} (and since $y$ is a fixed point and $\xopt$ maximizes $f$), we have
\[ M(y)^{q-1} = m(y)^{q-1} = \frac{\norm{Ay}{p}^p}{\norm{y}{q}^q} = f(y)^q \le f(\xopt)^{p} \]
Now since $y \neq \xopt$, the argument above (of considering the smallest $\theta$ such that $y - \theta \xopt$ has a zero co-ordinate, and so on) will imply that $M(y)^{q-1} > \lambda = f(\xopt)^p$, which is a contradiction.

This proves that there is no other fixed point.
\end{proof}

The next few lemmas say that as the algorithm proceeds, the value of $m(x)$
increases, while $M(x)$ decreases. Further, it turns out we can quantify how much they
change: if we start with an $x$ such that $M(x)/m(x)$ is `large', the ratio drops significantly in one iteration. 

\begin{lemma}\label{lem:m-increases}
Let $x$ be a positive vector. Then $m(x) \leq m(Sx)$, and $M(x) \geq
M(Sx)$.
\end{lemma}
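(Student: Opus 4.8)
The plan is to prove the two statements $m(x) \le m(Sx)$ and $M(Sx) \le M(x)$ by a monotonicity argument on the operator $S$, exploiting that $S$ is built from the positive entries $a_{ij}$ and from the maps $t \mapsto t^{p-1}$ and $t \mapsto t^{1/(q-1)}$, both of which are monotone increasing for $t>0$. The key observation I would isolate first is a \emph{homogeneity-plus-monotonicity} property of $S$: if $y$ and $z$ are positive vectors with $y_j \ge c\, z_j$ for all $j$ (for some constant $c>0$), then from the definition $(Sy)_i = \big(\sum_j a_{ij}(A_j y)^{p-1}\big)^{1/(q-1)}$ and the fact that $A_j y \ge c\, A_j z$ (all $a_{ij}>0$), we get $(Sy)_i \ge c^{(p-1)/(q-1)} (Sz)_i$ for all $i$. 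In other words, coordinatewise domination is preserved by $S$, with the exponent $(p-1)/(q-1) \le 1$ appearing on the scaling constant.

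Now I would apply this with $y = x$ and $z = x$ itself but using the definitions of $m$ and $M$. By definition of $m(x)$, we have $(Sx)_i \ge m(x)\, x_i$ for every $i$; that is, the vector $Sx$ coordinatewise dominates $m(x)\cdot x$. Applying $S$ and the property above with $c = m(x)$ gives $(S(Sx))_i \ge m(x)^{(p-1)/(q-1)} (Sx)_i$ for all $i$, hence $m(Sx) = \min_i (S(Sx))_i/(Sx)_i \ge m(x)^{(p-1)/(q-1)}$. This is not quite the claimed bound $m(Sx) \ge m(x)$; to close the gap I would bring in Lemma~\ref{lem:sandwich} (or rather the averaging identity Eq.~\eqref{eq:sandwich:avg}), which says $m(x)^{q-1} \le \norm{A}{\qtop}^p \le M(x)^{q-1}$ — so in particular $m(x) \le 1$ is \emph{not} guaranteed, but $m(x) \le M(x)$ always holds and, more usefully, one can compare $m(x)$ to the fixed-point value. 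Actually the cleanest route: since for the iterate the normalization keeps $\norm{x}{q}=1$ and $\lambda = (\norm{x}{q}^q/\norm{Ax}{p}^p)^{1/(q-1)}$ satisfies $m(x) \le \lambda \le M(x)$, and on powers of $m(x)$ we should track whether $m(x) \le 1$; here I would simply note $m(x)^{q-1} \le \norm{A}{\qtop}^p$ and handle the exponent by using that $(p-1)/(q-1)\le 1$ together with monotonicity of $S$ applied to the normalized vector. The symmetric argument with $z$ dominated by $M(x) \cdot x$ from above (i.e. $(Sx)_i \le M(x)\, x_i$) gives $(S(Sx))_i \le M(x)^{(p-1)/(q-1)} (Sx)_i$, hence $M(Sx) \le M(x)^{(p-1)/(q-1)} \le M(x)$, and here the direction of the inequality works out cleanly because $M(x) \ge \norm{A}{\qtop}^{p/(q-1)} \ge 1$ (the norm of a positive matrix with entries $\ge 1/N$ on at least a $1\times 1$ block is at least $1/N$, and after proper normalization one checks $M(x)\ge 1$), so raising to a power $\le 1$ only decreases it.

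The main obstacle I anticipate is exactly this bookkeeping with the exponent $(p-1)/(q-1)$: the bare monotonicity argument yields $m(Sx) \ge m(x)^{(p-1)/(q-1)}$ and $M(Sx) \le M(x)^{(p-1)/(q-1)}$, and turning these into the clean statements $m(Sx)\ge m(x)$, $M(Sx)\le M(x)$ requires knowing on which side of $1$ the quantities $m(x)$ and $M(x)$ lie. I would resolve this by normalizing so that the \emph{fixed-point eigenvalue} $\lambda$ of the true optimum equals $1$ (rescale $A$, equivalently track ratios), or more simply by invoking Lemma~\ref{lem:sandwich}: it pins $m(x) \le \norm{A}{\qtop}^{p/(q-1)} \le M(x)$, so $m(x)$ lies below the fixed point value and $M(x)$ above it, and then the power $(p-1)/(q-1) \le 1$ pushes each of $m(Sx), M(Sx)$ back toward the fixed-point value — i.e. $m$ up and $M$ down — which is precisely the claim. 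A careful writeup would do this comparison relative to $\norm{A}{\qtop}^{p/(q-1)}$ rather than relative to $1$, avoiding any rescaling altogether.
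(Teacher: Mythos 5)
Your core argument is exactly the paper's: coordinatewise domination $y\ge c\,z$ is preserved by $S$ up to the factor $c^{(p-1)/(q-1)}$, applied with $c=m(x)$ (resp.\ $c=M(x)$) to the pair $Sx$, $x$. Where you go further than the paper is in flagging the exponent honestly: the bare monotonicity argument yields only $m(Sx)\ge m(x)^{(p-1)/(q-1)}$ and $M(Sx)\le M(x)^{(p-1)/(q-1)}$. (The paper's own write-up silently uses the exponent $q-1$ where the definition of $S$ has $p-1$, which makes the exponents cancel; that step is literally valid only when $p=q$.) So you have correctly located the real issue. However, your proposed repairs do not close the gap. Raising a positive number to the power $(p-1)/(q-1)\le 1$ moves it toward $1$, not toward the fixed-point value $\norm{A}{\qtop}^{p/(q-1)}$, so knowing from Lemma~\ref{lem:sandwich} that $m(x)\le \norm{A}{\qtop}^{p/(q-1)}\le M(x)$ tells you nothing about whether $m(x)^{(p-1)/(q-1)}\ge m(x)$; for that you would need $m(x)\le 1$. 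Likewise your claim that $M(x)\ge 1$ is false in general -- scale $A$ down by a large constant and every ratio $(Sx)_i/x_i$ shrinks. Indeed, for unnormalized $x$ the statement can fail outright when $p<q$: for the $1\times 1$ matrix $A=(2)$ with $p=2$, $q=3$, $x=1$, one computes $m(x)=2$ but $m(Sx)=\sqrt{2}$.

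The correct way out, which your write-up circles but does not land on, is to use the normalization that the algorithm performs anyway. If $\norm{x}{q}=1$ and $x'=Sx/\norm{Sx}{q}$, then by homogeneity $m(x')=\norm{Sx}{q}^{(q-p)/(q-1)}\,m(Sx)$, and since $Sx\ge m(x)\,x$ coordinatewise we also have $\norm{Sx}{q}\ge m(x)$; combining with $m(Sx)\ge m(x)^{(p-1)/(q-1)}$ gives $m(x')\ge m(x)^{(q-p)/(q-1)}\cdot m(x)^{(p-1)/(q-1)}=m(x)$, and symmetrically $M(x')\le M(x)$ using $\norm{Sx}{q}\le M(x)$. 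Alternatively, and even more simply, the scale-invariant ratio satisfies $M(Sx)/m(Sx)\le \bigl(M(x)/m(x)\bigr)^{(p-1)/(q-1)}\le M(x)/m(x)$ with no case analysis at all, and this ratio is all that the convergence argument (Lemma~\ref{lem:potential:bound} and Theorem~\ref{thm:runtime:bound}) actually consumes. Either of these would turn your proposal into a complete and correct proof; as written, the step from the exponent $(p-1)/(q-1)$ back to exponent $1$ is a genuine gap.
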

\begin{proof}
Suppose $m(x) = \lambda$. So for every $i$, we have $(Sx)_i \geq
\lambda x_i$. Now fix some index $i$ and consider the quantity
\[ (SSx)_i^{q-1} = \sum_j a_{ij} (A_j Sx)^{q-1}. \]
Since $A$ is a positive matrix and $(Sx)_i \geq \lambda x_i$, we must
have $(A_j Sx) \geq \lambda \cdot (A_j x)$ for every $j$. Thus
\[ (SSx)_i^{q-1} \geq \lambda^{q-1} \sum_j a_{ij} (A_j x)^{q-1} =
\lambda^{q-1} (Sx)_i^{q-1}. \] This shows that $m(Sx) \geq \lambda$. A
similar argument shows that $M(Sx) \leq M(x)$.
\end{proof}

\begin{lemma}\label{lem:potential:bound}
Let $\xx$ be a positive vector with $\norm{\xx}{q}=1$, and suppose $M(\xx)
\geq (1+\alpha) m(\xx)$. Then $m(S\xx) \geq \big(1+\frac{\alpha}{Nn} \big)
m(\xx)$.
\end{lemma}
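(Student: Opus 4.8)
Throughout write $\lambda := m(\xx)$, so that by definition of $m$ we have $(S\xx)_l \ge \lambda x_l$ for \emph{every} coordinate $l$, while the hypothesis $M(\xx) \ge (1+\alpha)\lambda$ together with the definition of $M$ hands us one distinguished coordinate $k$ with $(S\xx)_k \ge (1+\alpha)\lambda x_k$. Since $m(S\xx) = \min_i (SS\xx)_i/(S\xx)_i$, the lemma is equivalent to
\[ (SS\xx)_i \;\ge\; \Bigl(1 + \tfrac{\alpha}{Nn}\Bigr)\lambda\,(S\xx)_i \qquad\text{for every } i, \]
and the idea is simply that the ``surplus'' $(S\xx)_k - \lambda x_k \ge \alpha\lambda x_k$ sitting at the single coordinate $k$ is, because $A$ is strictly positive, smeared out into an improvement at \emph{every} coordinate after one more application of $S$.

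The plan has two steps. \textbf{(i) Push the surplus through $A$.} Fix any row index $j$; isolating the $k$-th summand and using $(S\xx)_l \ge \lambda x_l$ for the other $l$ and $a_{jk}\ge 1/N$,
\[ A_j(S\xx) \;=\; \sum_l a_{jl}(S\xx)_l \;\ge\; \lambda\sum_l a_{jl}x_l \;+\; a_{jk}\bigl((S\xx)_k - \lambda x_k\bigr) \;\ge\; \lambda\,(A_j\xx) \;+\; \frac{\alpha\lambda x_k}{N}. \]
Since $\norm{\xx}{q}=1$ forces $A_j\xx = \sum_l a_{jl}x_l \le \sum_l x_l \le n$, the additive gain is at least an $\tfrac{\alpha x_k}{Nn}$ fraction of $\lambda\,A_j\xx$, so we get the \emph{uniform} (in $j$) multiplicative bound $A_j(S\xx) \ge \bigl(1+\tfrac{\alpha x_k}{Nn}\bigr)\lambda\,(A_j\xx)$. \textbf{(ii) Feed this back into $S$.} The quantity $(S\xx)_i^{q-1} = \sum_j a_{ij}(A_j\xx)^{p-1}$ and the quantity $(SS\xx)_i^{q-1} = \sum_j a_{ij}(A_j S\xx)^{p-1}$ are the \emph{same} positive combination (weights $a_{ij}$, exponent $p-1\ge 0$) of the $A_j\xx$'s and of the $A_j(S\xx)$'s respectively, so the per-row factor transfers:
\[ (SS\xx)_i^{q-1} \;\ge\; \Bigl(1+\tfrac{\alpha x_k}{Nn}\Bigr)^{p-1}\lambda^{p-1}\,(S\xx)_i^{q-1}. \]
Taking $(q-1)$-th roots, using $0 < \tfrac{p-1}{q-1}\le 1$ (so that $(1+t)^{(p-1)/(q-1)} \ge 1 + c_{p,q}\,t$ for $t$ in the relevant range) and tracking the homogeneity degree $\tfrac{p-1}{q-1}$ of $S$ together with the normalization $\norm{\xx}{q}=1$ — this is the only place the case $p<q$ is genuinely more delicate than $p=q$, where $S$ and the potentials are scale-invariant — delivers the per-coordinate improvement, hence the claimed lower bound on $m(S\xx)$.

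The step I expect to be the real obstacle is making (i) honestly quantitative: the factor $1+\tfrac{\alpha x_k}{Nn}$ still carries the coordinate $x_k$, which for a \emph{general} positive $\xx$ can be minuscule — the surplus coordinate $k$ need not be a large coordinate of $\xx$. What rescues it is that the lemma is only ever applied to the iterates of the algorithm, and those stay well-conditioned: a single application of $S$ to any positive vector produces a vector all of whose coordinates lie within a $\mathrm{poly}(N,n)$ factor of one another (because $a_{ij}\in[1/N,1]$ pins each $A_j\xx$, hence each $(S\xx)_i^{q-1}=\sum_j a_{ij}(A_j\xx)^{p-1}$, into a $\mathrm{poly}(N,n)$ range), the algorithm is seeded with the perfectly balanced $\mathbf{1}/\norm{\mathbf{1}}{p}$, and Lemma~\ref{lem:m-increases} guarantees the conditioning only improves along the run. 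Consequently $x_k \ge \max_l x_l/\mathrm{poly}(N,n) \ge n^{-1/q}/\mathrm{poly}(N,n)$, and substituting this lower bound into step (i) converts the additive surplus into a multiplicative gain of the stated form $1+\tfrac{\alpha}{\mathrm{poly}(N,n)}$ (which is all the convergence analysis needs; in the degenerate regime where $\alpha$ is very large one instead uses that $(S\xx)_k$ itself is bounded below by $\mathrm{poly}(1/(Nn))$ to extract an absolute improvement independent of $\alpha$).
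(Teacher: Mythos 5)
Your argument is the same surplus-propagation argument the paper itself uses: isolate the coordinate $k$ where the ratio $(S\xx)_k/x_k$ exceeds $(1+\alpha)\lambda$, push the excess through each row $A_j$ using $a_{jk}\ge 1/N$, and feed the resulting uniform multiplicative gain on the $A_j\xx$ back into the definition of $S$. The difference is that you retain the factor $x_k$ in the surplus $\alpha\lambda x_k$, whereas the paper writes $S\xx\ge\lambda\xx+\alpha\lambda\,\mathbf{e}_k$ and silently discards it. Your suspicion that this is the crux is well founded: the lemma as literally stated is false for an arbitrary positive $\xx\in\Sqn$. Take $p=q=2$, $N=1$, and $A$ the all-ones $2\times 2$ matrix, with $\xx=(\sqrt{1-\eps^2},\eps)$: then $S\xx$ and $SS\xx$ are constant vectors, so $m(S\xx)=4$ is fixed, while $M(\xx)/m(\xx)\to\infty$ as $\eps\to 0$, violating the claimed bound. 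Your repair is the correct one: the lemma is only ever invoked on iterates, any vector in the image of $S$ has all coordinates within a factor $N^{1/(q-1)}$ of one another (so after normalization $x_k\ge n^{-1/q}N^{-1/(q-1)}$), and the resulting gain $1+\alpha/\mathrm{poly}(N,n)$ is all that Theorem~\ref{thm:runtime:bound} needs. So your proposal is not merely equivalent to the paper's proof; it identifies and patches a genuine gap in it.

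The one step you leave schematic is the homogeneity bookkeeping in your step (ii) when $p<q$. Since $S$ is homogeneous of degree $(p-1)/(q-1)\ne 1$ and $S\xx$ is not normalized, the inequality $(SS\xx)_i^{q-1}\ge\bigl[(1+t)\lambda\bigr]^{p-1}(S\xx)_i^{q-1}$ gives $m(S\xx)\ge\bigl[(1+t)\lambda\bigr]^{(p-1)/(q-1)}$, which is not literally of the form $(1+t')\lambda$; to close this you must either define the potentials on the normalized iterates or compare $m(S\xx)^{q-1}$ against $\lambda^{p-1}$ throughout. The paper has the same loose end (its proofs of Lemmas~\ref{lem:m-increases} and~\ref{lem:potential:bound} use the exponent $q-1$ where the definition of $S$ supplies $p-1$), so this is not a defect of your route specifically, but it does need to be nailed down before the proof is complete.
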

\begin{proof}
Let $m(\xx) = \lambda$, and suppose $k$ is an index such that $(S\xx)_k
\geq (1+\alpha)\lambda \cdot x_k$ (such an index exists because $M(\xx) > (1+\alpha) \lambda$. In particular, $(S\xx) \ge \lambda \xx + \alpha \lambda \cdot \mathbf{e}_k$, where $\mathbf{e}_k$ is the standard basis vector with the $k$th entry non-zero. Thus we can say that for every $j$,
\[ A_j (S\xx) \geq \lambda A_j \xx + \alpha \lambda A_j \mathbf{e}_k. \]

The second term will allow us to quantify the improvement in $m(\xx)$. Note that $A_j \mathbf{e}_k = a_{jk} \geq \frac{1}{Nn} A_j \mathbf{1} $
(since $A_{jk}$ is not too small). Now $\mathbf{1} \ge \xx$ since $\xx$ has $q$-norm $1$, and thus we have
\[ A_j (S\xx) \geq \big( 1 + \frac{\alpha}{Nn} \big) \lambda \cdot A_j \xx \]

Thus $(SS \xx)_i^{q-1} \geq \big(1+\frac{\alpha}{Nn}\big)^{q-1}
\lambda^{q-1} (S \xx)_i^{q-1}$, implying that $m(S\xx) \geq
\big(1+\frac{\alpha}{Nn} \big) \lambda$.
\end{proof}

This immediately implies that the value $\norm{A}{\qtop}$ can be computed quickly. In particular,
\begin{theorem}\label{thm:runtime:bound}
For any $\delta > 0$, after $O(Nn \cdot \mathrm{polylog}(N,n,
\frac{1}{\delta}))$ iterations, the algorithm of
Section~\ref{sec:algo:description} finds a vector $x$ such that $f(x) \geq
(1-\delta) f(\xopt)$
\end{theorem}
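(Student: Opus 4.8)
The plan is to combine the sandwich bound (Lemma~\ref{lem:sandwich}) with the one-step contraction estimate (Lemma~\ref{lem:potential:bound}) and monotonicity (Lemma~\ref{lem:m-increases}) to show that the gap $M(x)/m(x)$ shrinks geometrically, and then translate a small gap into a good function value. Concretely, let $x^{(t)}$ be the vector after $t$ iterations (always normalized to $\norm{x^{(t)}}{q}=1$). Write $m_t = m(x^{(t)})$, $M_t = M(x^{(t)})$, and $\rho_t = M_t/m_t \ge 1$. By Lemma~\ref{lem:sandwich}, $m_t^{q-1} \le \norm{A}{\qtop}^p \le M_t^{q-1}$ for every $t$, so it suffices to drive $\rho_t$ down to $1 + O(\delta/q)$: then $m_t^{q-1} \ge M_t^{q-1}/\rho_t^{q-1} \ge \norm{A}{\qtop}^p/(1+O(\delta/q))^{q-1}$, and since $f(x^{(t)})^q = \norm{Ax^{(t)}}{p}^p \ge (\text{something between } m_t^{q-1} \text{ and } M_t^{q-1})$ — more carefully, from Eq.~\eqref{eq:sandwich:avg} the averaged ratio $\sum_i x_i (Sx)_i^{q-1} / \sum_i x_i^q$ equals $f(x^{(t)})^q$ and lies between $m_t^{q-1}$ and $M_t^{q-1}$ — we get $f(x^{(t)})^q \ge m_t^{q-1} \ge (1-O(\delta)) \norm{A}{\qtop}^p$, and taking $q$-th roots (and adjusting constants in the $\delta$) yields $f(x^{(t)}) \ge (1-\delta) f(\xopt)$.

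So the core task is bounding the number of iterations needed to make $\rho_t \le 1 + \delta'$ for a suitable $\delta' = \Theta(\delta/q)$. Here Lemma~\ref{lem:potential:bound} is the engine: if $\rho_t \ge 1+\alpha$, i.e. $M_t \ge (1+\alpha) m_t$, then $m_{t+1} \ge (1 + \tfrac{\alpha}{Nn}) m_t$. Meanwhile, by Lemma~\ref{lem:m-increases}, $M_{t+1} \le M_t$. Both $m$ and $M$ are trapped in the fixed bracket $[\,\norm{A}{\qtop}^{p/(q-1)} \cdot (\text{poly factors})\,]$ — more precisely $m_t^{q-1} \le \norm{A}{\qtop}^p \le M_t^{q-1}$ always, and I also need an absolute lower bound on $m_0$ and upper bound on $M_0$ to bound the total multiplicative room. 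Since $A$ has entries in $[1/N,1]$ and $x^{(0)} = \mathbf 1/\norm{\mathbf 1}{p}$, one checks $m_0 \ge (Nn)^{-c}$ and $M_0 \le (Nn)^{c}$ for an absolute constant $c$ (a few lines: $(Sx)_i$ is a sum of $n$ terms each in $[1/N \cdot (\text{row sum})^{p-1}, (\text{row sum})^{p-1}]$, and the row sums are between $n/N$ and $n$, all raised to $1/(q-1)$; divide by $x_i = n^{-1/p}$). Therefore $m_t$ can increase by at most a multiplicative factor of $(Nn)^{O(1)}$ over the whole run. Each iteration in which $\rho_t \ge 1+\alpha$ multiplies $m_t$ by at least $1 + \tfrac{\alpha}{Nn}$, so the number of such iterations is at most $\frac{\log((Nn)^{O(1)})}{\log(1 + \alpha/(Nn))} = O\!\big(\tfrac{Nn \log(Nn)}{\alpha}\big)$.

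To turn this into the stated bound I would run the argument in phases: for $k = 0, 1, 2, \dots$, a phase with $\alpha = 2^{-k}$ lasts as long as $\rho_t \ge 1 + 2^{-k}$, and costs $O(Nn \log(Nn) \cdot 2^{k})$ iterations — wait, that is the wrong direction, so instead I bound it more simply. Once $\rho_t < 1+\alpha$ it stays $< 1+\alpha$ forever (because $m$ is nondecreasing and $M$ is nonincreasing, so $\rho_t$ is nonincreasing — this monotonicity of the ratio is the clean way to see it, and follows directly from Lemma~\ref{lem:m-increases}). Hence there is a single threshold: the number of iterations before $\rho_t$ first drops below $1 + \delta'$ is at most $O\!\big(\tfrac{Nn\log(Nn)}{\delta'}\big) = O\!\big(Nn \cdot q \cdot \log(Nn)/\delta\big)$, and after that we are done. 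Recalling $q$ is a fixed constant (or at worst absorbed into the polylog / paying an extra factor), and folding $\log(1/\delta)$ and the constant $q$ into the $\mathrm{polylog}(N,n,1/\delta)$, the total is $O(Nn \cdot \mathrm{polylog}(N,n,1/\delta))$ iterations, as claimed.

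The main obstacle I anticipate is the bookkeeping in the last step — specifically, getting an honest absolute bound on the total multiplicative range of $m_t$ (equivalently on $M_0/\norm{A}{\qtop}^{p/(q-1)}$ and $\norm{A}{\qtop}^{p/(q-1)}/m_0$), since that ratio is what the $\log$ in the iteration count comes from, and it must come out to $(Nn)^{O(1)}$ rather than something worse. This uses the entries being in $[1/N,1]$ in an essential way (it is exactly why the problem was reduced to positive matrices), and is also where Lemma~\ref{lem:nonzero-coords} is morally needed — though for the value-convergence statement of Theorem~\ref{thm:runtime:bound} I think the crude $x^{(0)}$-based bounds on $m_0, M_0$ suffice and Lemma~\ref{lem:nonzero-coords} is only strictly required if one also wants the iterate $x^{(t)}$ itself to converge to $\xopt$. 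A secondary nuisance is the exponent juggling between $f$, $f^q$, $f^p$, and the $(q-1)$-th powers: I would fix the convention $f(\xopt)^q = \norm{A}{\qtop}^q$ up front and carry the $p$ versus $q$ exponents carefully through Eq.~\eqref{eq:sandwich:avg}, since $q \ge p$ is used both there and in Lemma~\ref{lem:sandwich}'s upper bound and a sign error there would break the geometric-decay conclusion.
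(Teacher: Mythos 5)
Your overall strategy --- drive the ratio $M(x)/m(x)$ down using Lemma~\ref{lem:potential:bound}, use monotonicity of $m$ and $M$ (Lemma~\ref{lem:m-increases}) so the ratio never increases, and then convert a small ratio into a $(1-\delta)$-approximate function value via Lemma~\ref{lem:sandwich} and Eq.~\eqref{eq:sandwich:avg} --- is exactly the paper's strategy. However, your final accounting proves a quantitatively weaker bound than the theorem states. You fix a single threshold $\alpha=\delta'=\Theta(\delta)$ and charge every iteration with $\rho_t\ge 1+\delta'$ against the total multiplicative room $(Nn)^{O(1)}$ available to $m_t$ over the whole run; this gives $O(Nn\log(Nn)/\delta')$ iterations, which is \emph{linear} in $1/\delta$, whereas the theorem claims $O(Nn\cdot\mathrm{polylog}(N,n,1/\delta))$, i.e., polylogarithmic in $1/\delta$. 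The phase argument you started and then abandoned (``wait, that is the wrong direction'') is in fact the right one; the step you mis-estimated is the \emph{per-phase} room. While the ratio lies in $[1+\alpha,\,1+2\alpha]$, the quantity $m_t$ cannot grow by more than a factor $(1+2\alpha)/(1+\alpha)\le 1+\alpha$, because $M_t$ is non-increasing and $m_t\le M_t\le (1+2\alpha)\,m_{t_0}$; hence the phase lasts only about $\log(1+\alpha)/\log(1+\tfrac{\alpha}{Nn})=O(Nn)$ iterations, \emph{independent of $\alpha$}. With $O(\log(Nn/\delta))$ halving phases this yields the claimed bound, and is precisely the paper's one-line argument that ``the ratio drops from $(1+\alpha)$ to $(1+\alpha/2)$ in $Nn$ iterations.'' Your error was charging each phase the full global $\log\bigl((Nn)^{O(1)}\bigr)$ of room rather than the $O(\alpha)$ actually available within the phase.

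Two smaller points. First, the averaged quantity in Eq.~\eqref{eq:sandwich:avg} equals $\norm{Ax}{p}^p/\norm{x}{q}^q=f(x)^p$ on $\Sqn$, not $f(x)^q$ as you wrote (the paper itself has the same slip in the proof of Lemma~\ref{lem:uniquemax}); with the corrected exponent one gets $f(x)^p\ge m_t^{q-1}\ge \norm{A}{\qtop}^p/\rho_t^{q-1}$, so $\rho_t\le 1+\Theta(\delta p/(q-1))$ suffices and your conclusion survives. Second, your crude bounds $m_0\ge (Nn)^{-O(1)}$ and $M_0\le (Nn)^{O(1)}$ from the all-ones starting vector are correct, and you are right that Lemma~\ref{lem:nonzero-coords} is not needed for this theorem --- it is used for the uniqueness and stability results of Section~\ref{sec:stability}, not for convergence of the function value.
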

\begin{proof}
To start with, the ratio $\frac{M(x)}{m(x)}$ is at most $Nn$ (since we start with $\mathbf{1}$, and the entries of the matrix lie in $[1/N,1]$).
Lemma~\ref{lem:potential:bound} now implies that the ratio drops from $(1+\alpha)$ to $(1+\frac{\alpha}{2})$ in $Nn$
iterations. Thus in $T = (Nn) \textrm{polylog}(N,n, 1/\delta)$ steps,
the $x$ we end up with has $\frac{M(x)}{m(x)}$ at most $\big(1 +
\frac{\delta}{(Nn)^{c}} \big)$ for any constant $c$.  This then implies that $f(x) \geq f(\xopt)\big( 1 - \frac{\delta}{(Nn)^c}\big)$, after $T$ iterations.
\end{proof}

\section{Proximity to the optimum} \label{sec:stability}
The argument above showed that the algorithm finds a point $x$ such
that $f(x)$ is close to $f(\xopt)$. We proved that for positive matrices, $\xopt$ is unique, and thus it is natural to ask if the vector we obtain is `close' to $\xopt$. This in fact turns out to be important in an application to oblivious routing which we consider in Section~\ref{sec:oblivious}.

We can prove that the $x$ we obtain after $T = (Nn) \text{polylog}(N, n, 1/\delta)$ iterations is `close' to $\xopt$. The rough outline of the proof is the following: we first show that $f(\xx)$ is strictly concave `around' the optimum \footnote{Note that the function $f$ is not concave everywhere (see Appendix \ref{sec:app:non-convex})}. Then we show that the `level sets' of $f$ are `connected' (precise definitions follow). Then we use these to prove that if $f(x)$ is close to $f(\xopt)$, then $x-\xopt$ is `small' (the choice of norm does not matter much).

Some of these results are of independent interest, and shed light into why the $\qtop$ problem may be easier to solve when $p \le q$ (even for non-negative matrices).

\paragraph{Concavity around the optimum.}
We now show that the neighborhood of every critical point (where $\grad f$ vanishes) is strictly concave. This is another way of proving that every critical point is a maximum (this was the way \cite{racke} prove this fact in the $p=q$ case).

Taking partial derivatives of $f(x)=\frac{\norm{Ax}{p}}{\norm{x}{q}}$, we observe that
\begin{equation} \label{eq:partial1}
 \dfi{x_i}= f(x) \Big( \frac{\sum_k  (A_k x)^{p-1} a_{ki}}{\norm{Ax}{p}^p} - \frac{x_i ^{q-1}}{\norm{x}{q}^q} \Big)
\end{equation}
where $A_k$ refers to the $k^{th}$ row of matrix $A$. 
Now, consider a critical point $z$, with $\norm{z}{q}=1$ (w.l.o.g.). We can also always assume that w.l.o.g. the matrix $A$ is such that $\norm{Az}{p}=1$. Thus at a critical point $z$, as in Eq.\eqref{eq:critical}, we have that for all $i$:

\begin{equation} \label{eq:critical:2}
\sum_k (A_k z)^{p-1} a_{ki}=z_i^{q-1}
\end{equation} 

Computing the second derivative of $f$ at $z$, and simplifying using $\norm{Az}{p} = \norm{z}{q} = 1$, we obtain

\begin{eqnarray}
\left.\frac{1}{p} \cdot \dfij{x_i}{x_j} \right|_{z} &=& (p-1) \sum_k (A_k z)^{p-2} a_{ki}a_{kj} + (q-p) z_i^{q-1} z_j^{q-1}  \label{eq:f11}\\
\left.\frac{1}{p} \cdot \dfii{x_i} \right|_{z} &=& (p-1) \sum_k (A_k z)^{p-2} a_{ki}^2 + (q-p) z_i^{2q-2}  - (q-1)z_i^{q-2} \label{eq:f2} 
\end{eqnarray}

We will now show that the Hessian $H_f$ is negative semi-definite, which proves that $f$ is strictly concave at the critical point $z$. Let $\eps$ be any vector in $\rr^n$. Then we have (the $(q-1) z_i^{q-2}$ in \eqref{eq:f2} is split as $(p-1) z_i^{q-2} +(q-p) z_i^{q-2}$, and $\sum_{i,j}$ includes the case $i=j$)
\begin{align*}
\eps^T H_f \eps &= p(p-1) \Big( \sum_{i,j} \sum_k (A_k z)^{p-2} \cdot
	a_{ki} a_{kj} \cdot \eps_i \eps_j -\sum_i z_i^{q-2} \eps_i ^2 \Big) \\
& ~~~+p(q-p)\Big( \sum_{i,j} (z_i z_j)^{q-1} \eps_i \eps_j - \sum_i z_i^{q-2} \eps_i ^2 \Big)\\
	&\equiv T_1+ T_2 \qquad \text{(say)}
\end{align*}
We consider $T_1$ and $T_2$ individually and prove that they are negative. First consider $T_2$. Since $\sum_i z_i^q=1$, we can consider $z_i^q$ to be a probability distribution on integers $1, \dots, n$. Cauchy-Schwartz now implies that $\E_i [(\eps_i/z_i)^2] \geq \big(\E_i[(\eps_i/z_i)] \big)^2$. This is equivalent to
\begin{equation}
\sum_i z_i^q \cdot \frac{\eps_i^2}{z_i^2} \ge \Big( \sum_i z_i^q \cdot \frac{\eps_i}{z_i} \Big)^2 = \sum_{i,j} z_i^q z_j^q \cdot \frac{\eps_i \eps_j }{z_i z_j}
\end{equation}
Noting that $q \ge p$, we can conclude that $T_2 \le 0$. Now consider $T_1$.
Since $z$ is a fixed point, it satisfies Eq.~\eqref{eq:critical:2}, thus we can substitute for $x_i^{q-1}$ in the second term of $T_1$. Expanding out $(A_k z)$ once and simplifying, we get
\begin{align*}
\frac{T_1}{p(p-1)} &= \sum_{i,j} \sum_k (A_k z)^{p-2} a_{ki}a_{kj} \Big( \eps_i \eps_j -   \frac{z_j}{z_i} \cdot \eps_i ^2 \Big)\\
&= -\sum_k (A_k z)^{p-2} \sum_{i,j} a_{ki}a_{kj} \cdot z_i z_j \cdot \Big( \frac{\eps_i}{z_i}-\frac{\eps_j}{z_j} \Big)^2 \\
&\leq 0
\end{align*} 
This proves that $f$ is concave around any critical point $z$.

\paragraph{Level sets of $f$.}
Let $\Sqn$, as earlier, denote the (closed, compact) set $\{ \xx \in \rplus^n ~:~\norm{\xx}{q}=1 \}$. Let $\mathcal{N}_{\tau}$ denote $\{ x
\in \Sqn~:~f(x) \ge \tau\}$, i.e., $\mathcal{N}_\tau$ is an `upper level set'. (it is easy to see that since $f$ is continuous and $A$ is positive, $\mathcal{N}_\tau$ is closed).

Let $S \subseteq \Sqn$. We say that two points $x$ and $y$ are {\em
  connected} in $S$, if there exists a path (a continuous curve)
connecting $x$ and $y$, entirely contained in $S$ (and this is clearly an equivalence relation). We say that a set
$S$ is \emph{connected} if every $x,y \in S$ are connected in $S$.
Thus any subset of $\Sqn$ can be divided into connected components.
With this notation, we show (\cite{racke} proves the result when $p=q$).
\begin{lemma}\label{lem:racke-connected}
The set $\mathcal{N}_{\tau}$ is connected for every $\tau >0$.
\end{lemma}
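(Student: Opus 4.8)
The plan is to prove the stronger statement that \emph{every} point of $\mathcal{N}_\tau$ can be joined, by a path staying inside $\mathcal{N}_\tau$, to the unique maximizer $\xopt$ of $f$ on $\Sqn$ (Lemma~\ref{lem:uniquemax}); since being so joined is an equivalence relation, connectedness of $\mathcal{N}_\tau$ follows at once. We may assume $\tau<f(\xopt)$, since otherwise $\mathcal{N}_\tau$ is empty or the single point $\xopt$ and there is nothing to prove.

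The main tool is an ``uphill curve'' at every point $y\in\Sqn$ with $y\ne\xopt$: a curve $\gamma:[0,\eps)\to\Sqn$ with $\gamma(0)=y$ along which $f$ strictly increases near $0$. If $y$ is a positive vector, then since $\xopt$ is the \emph{unique} positive fixed point of $S$ (Lemma~\ref{lem:uniquemax}), $y$ is not a fixed point, so $\grad f(y)\ne 0$ by \eqref{eq:partial1}; using scale-invariance of $f$ (which gives $\grad f(y)\cdot y=0$), the tangential component $d$ of $\grad f(y)$ along $\Sqn$ is then nonzero, and moving along $d$ and rescaling to $\ell_q$-length $1$ increases $f$ at rate $\norm{d}{2}^2>0$. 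If instead $y$ has a zero coordinate $i$, then by \eqref{eq:partial1} and $q>1$ the term $x_i^{q-1}$ vanishes at $y$ while $\sum_k (A_ky)^{p-1}a_{ki}>0$ (as $A$ is positive), so $\dfi{x_i}(y)>0$; here $\gamma(s)=(y+s\,\mathbf{e}_i)/\norm{y+s\,\mathbf{e}_i}{q}$ works, as $f(\gamma(s))=f(y+s\,\mathbf{e}_i)$ has positive right derivative $\dfi{x_i}(y)$ at $s=0$. (This second case is exactly the mechanism behind Lemma~\ref{lem:nonzero-coords}.)

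Now fix $x\in\mathcal{N}_\tau$ and let $C\subseteq\mathcal{N}_\tau$ be its path-component; set $\mu:=\sup_C f\ge\tau$. We claim $\xopt\in C$, which finishes the proof. First, $\mu>\tau$: were $\mu=\tau$, then $C\subseteq\{f=\tau\}$, and the uphill curve at any $y\in C$ (note $y\ne\xopt$, as $f(\xopt)>\tau$) would lie in $\mathcal{N}_\tau$ (along it $f\ge\tau$) and hence in $C$, yet take values $>\tau=\mu$ --- a contradiction. Next choose $y^\star$ in the compact set $\overline C$ with $f(y^\star)=\mu>\tau$; a small relatively-open, path-connected neighborhood $V$ of $y^\star$ in $\Sqn$ then has $f|_V>\tau$, so $V\subseteq\mathcal{N}_\tau$, and since $y^\star\in\overline C$ the set $V$ meets $C$; a path-connected subset of $\mathcal{N}_\tau$ meeting the path-component $C$ lies in $C$, so $y^\star\in C$. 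Finally, if $y^\star\ne\xopt$, the uphill curve at $y^\star$ stays in $\mathcal{N}_\tau$ (along it $f\ge f(y^\star)=\mu$) and hence in $C$, yet exceeds $\mu=\sup_C f$ --- a contradiction. Hence $y^\star=\xopt\in C$.

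The genuine obstacle is the boundary behaviour: showing the uphill curve (and the per-component maximum) is not trapped on $\partial\Sqn$, which comes down to the sign fact $\dfi{x_i}(y)>0$ whenever $y_i=0$ --- the same computation underlying Lemma~\ref{lem:nonzero-coords} --- together with care about one-sided derivatives there. The rest is routine. One could alternatively run the projected gradient-ascent flow of $f$ on $\Sqn$ from $x$: $f$ is nondecreasing along it, so it stays in $\mathcal{N}_{f(x)}\subseteq\mathcal{N}_\tau$; the same sign condition keeps it off $\partial\Sqn$; and compactness plus uniqueness of the critical point (Lemma~\ref{lem:uniquemax}) force it to converge to $\xopt$, yielding the path directly.
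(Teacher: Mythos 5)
Your proof is correct and rests on the same mechanism as the paper's: the maximizer of $f$ over a (path-)component of $\mathcal{N}_\tau$ must be a critical point, because a nonzero gradient cannot be purely normal to $\Sqn$ (scale-invariance gives $\grad f(y)\cdot y=0$ while the normal has positive inner product with $y$), and uniqueness of the critical point (Lemma~\ref{lem:uniquemax}) then forces every component to contain $\xopt$. Your write-up is somewhat more careful than the paper's --- notably in handling a component maximizer lying on the boundary of $\rplus^n$ via the sign of $\dfi{x_i}$ at $y_i=0$, and in the path-component/closure bookkeeping --- but the underlying argument is the same.
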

This follows easily from techniques we developed so far.
\begin{proof}
Suppose if possible, that $\num_\tau$ has two disconnected components $S_1$ and $S_2$. Since there is a unique global optimum $\xopt$, we may suppose $S_1$ does not contain $\xopt$. Let $y$ be the point in $S_1$ which attains maximum (of $f$) over $S_1$ ($y$ is well defined since $\num$ is closed). Now if $\grad f|_y = \vec{0}$, we get a contradiction since $f$ has a unique critical point, namely $\xopt$ (Lemma~\ref{lem:uniquemax}). If $\grad f|_y \neq \vec{0}$, it has to be normal to the surface $\Sqn$ (else it cannot be that $y$ attains maximum in the connected component $S_1$). Let $\mathbf{z}$ be the direction of the (outward) normal to $\Sqn$ at the point $y$. Clearly, $\langle \mathbf{z}, y \rangle >0$ (intuitively this is clear; it is also easy to check).

We argued that $\grad f|_y$ must be parallel to $\mathbf{z}$, and thus it has a non-zero component along $y$ -- in particular if we scale $y$ (equivalent to moving along $y$), the value of $f$ changes, which is clearly false! Thus $\num_\tau$ has only one connected component.
\end{proof}

Since we need it for what follows, let us now prove Lemma~\ref{lem:nonzero-coords}.
\begin{proof}[Proof of Lemma~\ref{lem:nonzero-coords}]
Let $\xopt$ be the optimum vector, and suppose $\norm{\xopt}{q}=1$. Consider the quantity
\[ f(\xopt)^p = \frac{ \sum_i (A_i \xopt)^p}{\big( \sum_i (\xopt)^q \big)^{p/q}}. \]
First, note that $\xopt_i \neq 0$ for any $i$. Suppose there is such an $i$. If we set $x_i = \delta$, each term in the numerator above increases by at least
$\frac{p \cdot \delta}{N^p}$ (because $A_i x^*$ is at least
$\frac{1}{N}$, and $(\frac{1}{N} + \frac{\delta}{N})^p >
\frac{1}{N^p}(1 + p\delta)$), while the denominator increases from $1$ to $(1+\delta^q)^{p/q} \approx 1+(p/q)\delta^q$ for small enough $\delta$. Thus since $q>1$, we can set $\delta$ small enough and increase the objective.
This implies that $\xopt$ is a positive vector.

\newcommand{\ones}{\vec{\mathbf{1}}}

Note that $A_j \xopt \geq \frac{\ones}{N}\cdot \xopt \geq \frac{1}{N}$
(because the $\norm{\xopt}{1} \ge \norm{\xopt}{q} = 1$). Thus for every $i$,
\[ (S\xopt)_i^{q-1}  = \sum_j a_{ij}(A_j \xopt)^{p-1} \geq \frac{n}{N^p}. \]
Further, $\norm{A}{p}^p \leq n^{p+1}$, because each $a_{ij} \leq 1$ and so
$A_j x \leq n x_{\textrm{max}}$ (where $x_{\textrm{max}}$ denotes
the largest co-ordinate of $x$). Now since Eqn.\eqref{eq:critical}
holds for $\xopt$, we have
\[ n^{p+1} \geq \norm{A}{p}^p = \frac{(S\xopt)_i^{q-1}}{(\xopt)_i^{q-1}} \geq
	\frac{n}{N^p (\xopt)_i^{q-1}}. \]
This implies that $\xopt_i > \frac{1}{(Nn)^2}$, proving the lemma
(we needed to use $q\ge p>1$ to simplify).
\end{proof}

We now show that if $x \in \Sqn$ is `far' from $\xopt$, then $f(x)$
is bounded away from $f(\xopt)$. This, along with the fact that
$\mathcal{N}_\tau$ is connected for all $\tau$, implies that if $f(x)$
is very close to $f(\xopt)$, then $\norm{x-\xopt}{1}$ must be small. For ease of calculation, we give the formal proof only for $p=q$ (this is also the case which is used in the oblivious routing application). It should be clear that as long as we have that the Hessian at $\xopt$ is negative semidefinite, and third derivatives are bounded, the proof goes through.
\begin{lemma}[Stability] \label{lem:stability}
Suppose $x \in \Sqn$, with $\norm{x-\xopt}{1} = \delta \leq
\frac{1}{(Nn)^{12}}$. Then
\begin{equation} \label{eq:stability}
f(x) \leq f(\xopt) \Big( 1-\frac{\delta^2}{(Nn)^6} \Big) 
\end{equation}
\end{lemma}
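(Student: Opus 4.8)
\textbf{Proof proposal for Lemma~\ref{lem:stability}.}

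The plan is to show that along any path from $x$ to $\xopt$ staying on $\Sqn$, the function $f$ drops by at least the claimed amount. Since we are working with $p=q$ and we have the unique-maximum / connectedness machinery available, I would actually argue more directly: restrict attention to the line segment (or a short curve on $\Sqn$) joining $x$ and $\xopt$, parametrize it, and Taylor-expand $f$ around $\xopt$. The key inputs are (i) $\grad f|_{\xopt} = 0$ (it is a critical point, by Lemma~\ref{lem:nonzero-coords} it is interior), (ii) the Hessian $H_f$ at $\xopt$ is negative semidefinite on $\rr^n$ and in fact \emph{strictly} negative on the tangent space of $\Sqn$ at $\xopt$ (from the concavity-around-optimum computation: $T_1 + T_2 < 0$ unless $\eps_i/z_i$ is constant, i.e. $\eps \propto \xopt$, which is exactly the normal direction), and (iii) a bound on the third derivatives of $f$ in a neighbourhood of $\xopt$, valid because all coordinates of $\xopt$ are $\ge 1/(Nn)^2$ (Lemma~\ref{lem:nonzero-coords}) and all entries of $A$ lie in $[1/N,1]$, so every denominator appearing in $\partial^3 f$ is bounded below by a fixed inverse-polynomial in $Nn$.

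First I would quantify the strict concavity: show there is a constant $c_1 = 1/(Nn)^{O(1)}$ such that for every unit vector $\eps$ tangent to $\Sqn$ at $\xopt$ (i.e.\ $\sum_i (\xopt_i)^{q-1}\eps_i = 0$, the derivative of the constraint), one has $\eps^T H_f|_{\xopt}\, \eps \le -c_1 f(\xopt)$. This comes from revisiting the $T_1, T_2$ expressions: $T_1/(p(p-1))$ equals $-\sum_k (A_kz)^{p-2}\sum_{i,j} a_{ki}a_{kj} z_i z_j (\eps_i/z_i - \eps_j/z_j)^2$, and since $A_k z \ge 1/N$, each $a_{ki} \ge 1/N$, each $z_i \ge 1/(Nn)^2$, this is at most $-(1/(Nn)^{O(1)}) \sum_{i,j}(\eps_i/z_i - \eps_j/z_j)^2$, while the tangency condition forces $\sum_{i,j}(\eps_i/z_i-\eps_j/z_j)^2$ to be bounded below by a polynomial multiple of $\sum_i \eps_i^2 = 1$ (the vector $(\eps_i/z_i)_i$ is not constant once $\eps$ is a unit tangent vector, and its variance is controlled). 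Next I would Taylor-expand: writing $x = \xopt + \eps$ with $\norm{\eps}{1} = \del$, decompose $\eps = \eps_\parallel + \eps_\perp$ into the part along $\xopt$ and the tangent part; the constraint $\norm{x}{q} = \norm{\xopt}{q} = 1$ forces $\eps_\parallel$ to be of order $\norm{\eps_\perp}{}^2$, hence negligible. Then
\[ f(x) = f(\xopt) + \tfrac12 \eps_\perp^T H_f|_{\xopt}\, \eps_\perp + O\big(\text{(third derivative bound)} \cdot \norm{\eps}{}^3\big) + (\text{lower order from } \eps_\parallel), \]
and plugging in $\eps_\perp^T H_f \eps_\perp \le -c_1 f(\xopt)\norm{\eps_\perp}{2}^2$ together with $\norm{\eps_\perp}{2} \ge \norm{\eps}{2}/2 \ge \del/(2\sqrt n)$ (using $\del$ small) and the fact that the cubic error term is dominated by the quadratic one once $\del \le 1/(Nn)^{12}$, gives $f(x) \le f(\xopt)(1 - \del^2/(Nn)^6)$ after absorbing constants.

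The main obstacle I anticipate is the bookkeeping of the inverse-polynomial factors: one has to track, simultaneously, the lower bound on the smallest eigenvalue of $-H_f$ on the tangent space, the lower bound on the coordinates of $\xopt$, and the upper bound on $\partial^3 f$, and check that the exponent on $(Nn)$ in the cubic remainder is strictly beaten by the quadratic term precisely in the regime $\del \le (Nn)^{-12}$ — this is why the lemma's hypothesis has such a specific exponent. A secondary subtlety is handling the curvature of $\Sqn$ itself: moving from $x$ to $\xopt$ within $\Sqn$ is not the same as the straight segment, so either I work with the straight segment and separately estimate the (small) discrepancy in $f$ caused by re-normalizing to $\norm{\cdot}{q}=1$ at each point, or I note that $f$ is scale-invariant so I may simply evaluate $f$ on the straight segment $\xopt + t\eps$ in $\rplus^n$ directly without renormalizing, which sidesteps the issue entirely. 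I would take the latter route. The only place positivity of $A$ (beyond entries in $[1/N,1]$) is genuinely needed is to guarantee, via Lemma~\ref{lem:nonzero-coords}, that $\xopt$ is bounded away from the boundary so that all the derivative bounds hold uniformly; this is already established.
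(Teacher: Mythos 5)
Your proposal follows the same overall structure as the paper: a Taylor expansion of the scale-invariant $f$ about $\xopt$ in $\rplus^n$, vanishing gradient at $\xopt$, the explicit variance form of $\eps^T H_f \eps$ from the concavity computation, and a third-derivative bound coming from $\xopt_i \ge 1/(Nn)^2$ and $a_{ij} \in [1/N,1]$. Where you diverge is in how the quadratic drop is extracted from the Hessian. You split $\eps = \eps_\parallel + \eps_\perp$ into normal and tangent components at $\xopt$, argue $\eps_\parallel$ is second-order small, and then assert that $-H_f$ restricted to the tangent space has inverse-polynomial smallest eigenvalue (``variance is controlled''), concluding via $\norm{\eps_\perp}{2} \gtrsim \delta/\sqrt n$. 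The paper never performs this splitting: from $\norm{\xopt + \eps}{p} = \norm{\xopt}{p} = 1$ and positivity of $\xopt$ it deduces that the $\eps_i$ cannot all share a sign; since $\norm{\eps}{1} = \delta$, some $|\eps_i| \ge \delta/n$, and choosing $j$ with $\eps_j$ of the opposite sign, a \emph{single} term $(\eps_i/\xopt_i - \eps_j/\xopt_j)^2 \ge \delta^2/n^2$ in the variance sum already yields the bound. Your route is sound but leaves its key quantitative sub-lemma unproven --- namely that for unit $\eps$ with $\sum_i \xopt_i^{q-1}\eps_i = 0$, the quantity $\sum_{i,j}(\eps_i/\xopt_i - \eps_j/\xopt_j)^2$ is at least $\mathrm{poly}(1/(Nn))$. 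That claim is provable (set $w_i = \eps_i/\xopt_i$, subtract the mean, and use the tangency constraint with Cauchy--Schwarz to bound the mean by the deviation), but it is more work than the paper's opposite-sign observation, which is self-contained and bypasses the tangent/normal bookkeeping entirely; both yield the same $(Nn)^{-O(1)}$ factors up to constants.
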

\begin{proof}
Let $\eps$ denote the `error vector' $\eps = x-\xopt$. We will
use the Taylor expansion of $f$ around $\xopt$. $H_f$ denotes
the Hessian of $f$ and $g_f$ is a term involving the third
derivatives, which we will get to later.
 Thus we have: (note that $\grad f$ and $H_f$ are
evaluated at $\xopt$)
\begin{equation}\label{eq:taylor}
f(x)=f(\xopt)+\eps \cdot \grad f_{|\xopt} + \frac{1}{2}~ \eps^T
	H_{f|\xopt} \eps + g_f(\eps')
\end{equation} 
At $\xopt$, the $\grad f$ term is $0$. From the proof above that the Hessian is negative semidefinite, we have
\begin{equation}\label{eqn:hessian}
\eps^T H_f \eps = -p(p-1) \sum_s (A_{s} \xopt)^{p-2} \Big(
	\sum_{i,j} a_{si}a_{sj}\xopt_{i}\xopt_{j} \big(
	\frac{\eps_i}{\xopt_i}-\frac{\eps_j}{\xopt_j}	\big)^2 \Big)
\end{equation}

We want to say that if $\norm{\eps}{1}$ is large enough, this
quantity is sufficiently negative. We should crucially use the
fact that $\norm{\xopt}{p} = \norm{\xopt + \eps}{p} = 1$ (since
$x$ is a unit vector in $p$-norm). This is the same as
\[\sum_i |\xopt_i+\eps_i|^p = \sum_i |\xopt_i|^p. \]
Thus not all $\eps_i$ are of the same sign. Now since
$\norm{\eps}{1} > \delta$, at least one of the $\eps_i$ must
have absolute value at least $\delta/n$, and some other
$\eps_j$ must have the opposite sign, by the above
observation. Now consider the terms corresponding to these
$i, j$ in Eqn.\eqref{eqn:hessian}. This gives
\begin{align}
\eps^T H_f \eps &\leq -p(p-1) \sum_s (A_s \xopt)^{p-2} \cdot
	a_{si} a_{sj} \cdot \frac{\xopt_j}{\xopt_i} \cdot 
	\frac{\delta^2}{n^2}\\
	& \leq  -p(p-1) \sum_s (A_s \xopt)^{p-2} \frac{(\sum_i a_{si})^2}{(Nn)^2}
	\cdot \frac{1}{(Nn)^2} \cdot \frac{\delta^2}{n^2}\\
	& \leq -p(p-1) \cdot \frac{\delta^2}{N^4 n^6} \cdot \norm{A\xopt}{p}^p
\end{align}
Note that we used the facts that entries $a_{ij}$ lie in
$[\frac{1}{N},1]$ and that $\xopt_i \in [\frac{1}{(Nn)^2}, 1]$.
Thus it only remains to bound the third order terms ($g_f$,
in Eqn.\eqref{eq:taylor}). This contribution equals
\begin{equation}
g_f(\eps) = \frac{1}{3!} \sum_i \eps_i^3 \frac{\partial^3 f}{\partial x_i^3} +
 \frac{1}{2!} \sum_{i,j} \eps_i^2 \eps_j \frac{\partial^3 f}{\partial x_i^2 \partial x_j}+
\sum_{i<j<k} \eps_i \eps_j \eps_k \frac{\partial^3 f}{\partial x_i \partial x_j \partial x_k}
\end{equation}
It can be shown by expanding out, and using the facts that $m_{si} \leq N (M_s \xopt)$
and $\frac{1}{\xopt_i} \leq (Nn)^2$, that for $i,j,k$, 
\[ \frac{\partial^3 f}{\partial x_i \partial x_j \partial x_k} \leq 
	10p^3 (Nn)^3 \norm{A\xopt}{p}^p .\]
Thus, the higher order terms can be bounded by
\[g_f(\eps) \leq 10p^3 \cdot n^6 N^3 \cdot \norm{A\xopt}{p}^p \cdot \delta^3 \]
So, if $\delta < \frac{1}{10p^3} \cdot \frac{1}{(Nn)^{12}}$, the
Hessian term dominates. Thus we have, as desired:
\[ f(x) \leq  f(\xopt) \Big( 1-\frac{\delta^2}{(Nn)^6} \Big) \]
\end{proof}


This proves that the vector we obtain at the end of the $T$
iterations (for $T$ as specified)
has an $\ell_1$ distance at most $\frac{1}{(Nn)^c}$ to $\xopt$. Thus we have
a polynomial time algorithm to compute $\xopt$ to any accuracy.

\section{An Application - $O(\log n)$ Oblivious routing scheme for $\ell_p$} \label{sec:oblivious}
\newcommand{\vect}[1]{\mathbf #1}
We believe that our algorithm for computing the $\norm{A}{\qtop}$ (for non-negative matrices) could find good use
as an optimization tool. For instance, eigenvalue computation is used extensively, not just for partitioning and clustering problems, but
also as a subroutine for solving semi-definite programs \cite{sdp}.
We now give one application of our algorithm and the techniques we developed in section ~\ref{sec:stability} to the case of oblivious routing
in the $\ell_p$ norm. 
 
\paragraph{Oblivious routing.} As outlined in the Introduction, the aim in oblivious routing is, given a graph $G = (V,E)$, to specify how to route a unit flow between every pair of vertices in $V$. Now, given a demand vector 
(demands between pairs of vertices), these unit flows are scaled linearly by the demands, and routed (let us call this the oblivious flow).
This oblivious flow is compared to the best flow in hindsight i.e. knowing the demand vector, with respect to some objective (say congestion), and we need to come up with 
a scheme which bounds this competitive ratio in the worst case.
     
Gupta et al.~\cite{gupta} consider the oblivious routing problem where the cost of a solution is the $\ell_p$ norm of the `flow vector' (the vector consisting of total flow on each edge). In the case $p=\infty$, this is the problem of minimizing congestion, for which the celebrated result of~\cite{racke1} gave an $O(\log n)$ competitive scheme. For the $\ell_1$ version of the problem, the optimal solution (as is easily seen) is to route along shortest paths for each demand pair. The $\ell_p$ version tries to trade-off between these two extremes.

By a clever use of zero sum games, \cite{racke} reduced the problem of showing existence good oblivious routing schemes for any $p$ to the $\ell_\infty$ case. This showed (by a \emph{non-constructive} argument) the existence of an $O(\log n)$ oblivious routing scheme for any $p\ge 1$. They then make their result constructive for $p=2$ (the proof relies heavily on eigenvectors being orthogonal).
Using our algorithm for finding the
$\ell_p$-norm of a matrix and the stability of our maxima (Lemma~\ref{lem:stability}), we make the
result constructive for all $\ell_p$.

\paragraph{Zero-sum game framework of \cite{racke}:}
We first give a brief overview of the non-constructive proof from
\cite{racke}. The worst-case demands for any tree-based oblivious routing scheme can be shown to be those with
non-zero demands only on the edges of the graph. 
The \emph{competitive ratio} of any tree-based oblivious routing scheme 
can then be reduced to a matrix $p$-norm computation: 
if $M$ is a $|E|\times |E|$-dimensional matrix which represents a tree-based oblivious routing scheme which specifies unit flows for each demand across an edge of the graph, the competitive ratio is given by
$\max_{\pnorm{\vect{u}} \leq 1} \pnorm{M\vect{u}}$ where $\vect{u} \in \rr^{|E|}$.
   
To show the existence of an oblivious routing scheme with competitive
ratio $O(\log n)$, \cite{racke} define a continuous two player
zero-sum game. The first player (row player) chooses from the set of
all \emph{tree-based oblivious routing matrices} (of dimension  $|E| \times |E|$).
 The second player's (column player) strategy set is the set of vectors $\vect{u} \in \rr^{|E|}$ with positive entries, and $\norm{u}{p}=1$, and the value of the game is $\norm{M\vect{u}}{p}$. 
With a clever use of min-max duality in zero sum games and the oblivious
routing scheme of \cite{racke1} for congestion ($\ell_\infty$-norm) as
a blackbox, \cite{racke} show the non-constructive existence of an oblivious routing
scheme $M$ which gets a value of $O(\log n)$ for all demand vectors. 

Finding such a (tree-based) oblivious routing scheme requires us to solve this zero-sum game efficiently.
The constructive algorithm from \cite{racke} for $\ell_2$, however crucially uses the
ortho-normality of the eigenspace for $\norm{M}{2}$ computation, to solve the aforementioned zero-sum game.   
First we state without proof a
couple of lemmas from \cite{racke}, which will also feature in our
algorithm.

\begin{lemma} \label{lem:racke:load}
Let $OBL$ be a tree-based oblivious routing scheme given by a $|E| \times {n \choose 2}$ dimensional matrix (non-negative entries) and let its restriction to edges be $OBL' \in
\rr^{|E| \times |E|}$. The competitive ratio of the
oblivious algorithm is at most $\norm{OBL'}{p}$.
\end{lemma}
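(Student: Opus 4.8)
The plan is to reduce the competitive ratio of a tree-based oblivious routing scheme to a matrix $p$-norm by carefully unpacking the two quantities being compared: the cost (in $\ell_p$ of the flow vector on edges) of the oblivious flow for a demand vector $\vect{d}$, and the cost of the best flow in hindsight. First I would recall the structure of a tree-based scheme: $OBL$ is a convex combination of routings induced by a family of trees, and for any demand $\vect{d} \in \rr^{\binom{n}{2}}$, the flow produced on the edges of $G$ is the linear map $OBL\,\vect{d}$. Restricting attention to demands supported on the edges of $G$ (the standard reduction, which I would invoke from \cite{racke,gupta}: the worst case for any tree-based scheme is achieved by such demands, since each demand can be rerouted along a path in $G$ without increasing the optimum by more than a constant, or more precisely the reduction is exact in the relevant formulation), the demand vector lies in $\rr^{|E|}$, and $OBL$ acts on it through its restriction $OBL' \in \rr^{|E|\times|E|}$.

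The key steps, in order, would be: (i) fix a demand vector $\vect{u}$ supported on edges with $\norm{\vect{u}}{p} \le 1$; (ii) observe that the oblivious flow vector is exactly $OBL'\vect{u}$, so its cost is $\norm{OBL'\vect{u}}{p}$; (iii) lower-bound the optimal (hindsight) cost for demand $\vect{u}$ by $\norm{\vect{u}}{p}$ — this holds because routing a demand of value $u_e$ across edge $e=(a,b)$ must send at least $u_e$ units of flow across every cut separating $a$ from $b$, and in particular the flow on $e$ itself (or an appropriate cut argument) forces the flow vector to dominate $\vect{u}$ coordinatewise in a suitable sense, so the $\ell_p$ norm of any feasible flow is at least $\norm{\vect{u}}{p}$; (iv) conclude that the competitive ratio on this demand is at most $\norm{OBL'\vect{u}}{p}/\norm{\vect{u}}{p}$; (v) take the maximum over all such $\vect{u}$, obtaining $\max_{\vect{u}\neq \vec 0} \norm{OBL'\vect{u}}{p}/\norm{\vect{u}}{p} = \norm{OBL'}{p}$, using that $OBL'$ has non-negative entries so the maximizing $\vect{u}$ may be taken with non-negative entries (consistent with its interpretation as a demand vector).

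The main obstacle I expect is step (iii) together with the reduction to edge-supported demands: one must argue cleanly that it suffices to consider demands on edges only, and that for such demands the hindsight optimum is at least $\norm{\vect{u}}{p}$ rather than something smaller. The subtlety is that the hindsight flow could in principle cancel or share load cleverly across commodities; the resolution is that the flow on edge $e$ due to routing the $e$-th commodity alone is at least $u_e$ by flow conservation across the trivial cut $\{a\}$, and since different commodities contribute non-negatively to congestion on each edge (flows are non-negative), the total flow vector dominates $\vect{u}$ coordinatewise after taking absolute values, giving the bound. Everything else is bookkeeping: linearity of the oblivious scheme, the identity $\max_{x} \norm{OBL' x}{p}/\norm{x}{p} = \norm{OBL'}{p}$, and the fact that non-negativity of $OBL'$ lets us restrict the max to the positive orthant without loss.
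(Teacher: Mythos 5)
The paper states this lemma without proof, citing Englert and R\"acke, so there is no in-paper argument to compare against; judged on its own terms, your proposal has a genuine gap at step (iii). The claim that for an edge-supported demand $\mathbf{u}$ the hindsight optimum is at least $\norm{\mathbf{u}}{p}$ is false: the optimal routing is free to send the demand for edge $e=(a,b)$ along paths that avoid $e$ entirely, and flow conservation across the cut $\{a\}$ only bounds the \emph{total} flow on all edges incident to $a$, not the flow on $e$ itself, so the optimal load vector need not dominate $\mathbf{u}$ coordinatewise. Concretely, if $a$ and $b$ are joined by $e$ and also by $k$ internally disjoint length-two paths, a unit demand on $e$ can be split evenly over the $k+1$ parallel routes, giving a load vector with $2k+1$ entries equal to $1/(k+1)$, whose $\ell_p$ norm is $(2k+1)^{1/p}/(k+1)\to 0$ while $\norm{\mathbf{u}}{p}=1$. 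The reduction to edge-supported demands that you invoke as a black box is also not a harmless normalization: correctly stated, it is essentially the entire content of the lemma, and as you have set things up it is circular with step (iii).

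The actual argument (from Englert--R\"acke) runs in the opposite direction. Given an \emph{arbitrary} demand $\mathbf{d}$, path-decompose the optimal flow for $\mathbf{d}$ and define the edge demand $\mathbf{u}$ to be the load vector of that optimal flow; then $\norm{\mathbf{u}}{p}$ \emph{equals} the hindsight optimum by construction, and no lower bound on $OPT$ is ever needed. The property of \emph{tree-based} schemes that makes this work is a domination statement: for each optimal path $(s=v_0,\dots,v_k=t)$, every tree edge on the $s$--$t$ tree path separates $s$ from $t$ and is therefore traversed by at least one of the segment routings $(v_i,v_{i+1})$, so (all entries being non-negative) the flow the scheme produces on demand $\mathbf{d}$ is coordinatewise at most $OBL'\,\mathbf{u}$. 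Hence $\norm{OBL\,\mathbf{d}}{p}\le \norm{OBL'\mathbf{u}}{p}\le \norm{OBL'}{p}\cdot OPT(\mathbf{d})$. Your steps (i), (ii), (iv), (v) are fine as bookkeeping, but without this tree-based domination step the lemma does not follow, and it is precisely here that the hypothesis ``tree-based'' enters --- your proposed proof never uses it.
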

Henceforth, we shall abuse notation and use $OBL$ to refer to both the
tree-based Oblivious routing matrix and its restriction to edges
interchangeably. Further,  
\begin{lemma} \label{lem:racke:1}
For any given vector $\vect{u} \in \rr^{|E|}$, there exists an
tree-based Oblivious routing scheme (denoted by matrix $OBL$) such
that \[\norm{OBL \cdot \vect{u}}{p} \leq O(\log n) \norm{\vect{u}}{p}\]
\end{lemma}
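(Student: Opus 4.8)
The plan is to instantiate R\"acke's hierarchical decomposition~\cite{racke1} with the edge capacities set \emph{equal to the given demand vector} $\vect u$, and then exploit the fact that R\"acke's guarantee is a \emph{per-edge} congestion bound --- which is exactly the form one needs to control any norm of the flow vector, the $\ell_p$ norm in particular. Note that $\vect u$ itself can be routed by the ``trivial'' (non-oblivious) scheme that sends $u_e$ units directly across each edge $e$, so the optimal routing has $\ell_p$ cost at most $\norm{\vect u}{p}$; it therefore suffices to be $O(\log n)$-competitive against this reference routing.

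Concretely, give $G$ the capacities $c_e := u_e$ for every $e \in E$ (I first assume all $u_e > 0$; I return to this below). Regard $\vect u$ as the multicommodity demand requesting $u_e$ units between the endpoints of each edge $e$. This demand is routable in $G$ with congestion $1$ --- send commodity $e$ straight across the edge $e$, so the load on each edge is at most its capacity --- hence its optimal congestion $\mathrm{OPT}_G(\vect u)$ is at most $1$. Now \cite{racke1} produces a distribution $\{(\lambda_i, T_i)\}$ over trees, each dominating $G$, whose associated tree-based oblivious routing scheme $OBL = \sum_i \lambda_i\, OBL_{T_i}$ routes \emph{every} demand $d$ with congestion at most $O(\log n) \cdot \mathrm{OPT}_G(d)$ with respect to $c$. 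Specializing to $d = \vect u$ gives, for every edge $e$, the per-edge bound $(OBL \cdot \vect u)_e \le O(\log n) \cdot c_e = O(\log n)\, u_e$.

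Raising this inequality to the $p$-th power and summing over $E$ yields
\[
\norm{OBL \cdot \vect u}{p}^{\,p} = \sum_{e \in E} (OBL \cdot \vect u)_e^{\,p} \le \big(O(\log n)\big)^p \sum_{e \in E} u_e^{\,p} = \big(O(\log n)\big)^p\, \norm{\vect u}{p}^{\,p},
\]
and taking $p$-th roots gives $\norm{OBL \cdot \vect u}{p} \le O(\log n)\, \norm{\vect u}{p}$. By Lemma~\ref{lem:racke:load} we may identify $OBL$ with its $|E| \times |E|$ restriction to edges, which is the matrix in the statement.

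The one point requiring care --- and the step I would treat as the main obstacle --- is the standing assumption $u_e > 0$, since R\"acke's theorem is stated for strictly positive, fixed capacities. The clean fix is to take $c_e := \max\{u_e,\ \eta \max_{e'} u_{e'}\}$ for a sufficiently small inverse polynomial $\eta$ in $|E|$: then $c_e \ge u_e$, so $\vect u$ is still routable at congestion $1$ with respect to $c$, the argument above gives $(OBL \cdot \vect u)_e \le O(\log n)\, c_e$ for all $e$, and since $\norm{c}{p} \le \norm{\vect u}{p} + \eta |E|\, \norm{\vect u}{p} \le 2\norm{\vect u}{p}$ (using $\max_{e'} u_{e'} \le \norm{\vect u}{p}$ and $p \ge 1$), the extra slack is absorbed into the $O(\log n)$. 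Alternatively one can perturb $\vect u$ to $\vect u + \eps \vect 1$, apply the positive case, and pass to a limit as $\eps \to 0$ using compactness of the set of tree-based oblivious routing matrices. Everything else is an immediate consequence of the per-edge form of R\"acke's congestion guarantee.
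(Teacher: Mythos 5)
The paper never proves this lemma --- it is stated explicitly ``without proof'' and imported from Englert and R\"acke~\cite{racke} --- and your reconstruction is precisely the argument used there: set the capacities to $c_e=u_e$, observe that the edge demands $\mathbf{u}$ route with congestion $1$, invoke the $O(\log n)$-competitive congestion ($\ell_\infty$) oblivious routing of~\cite{racke1} to obtain the per-edge bound $(OBL\cdot \mathbf{u})_e \le O(\log n)\,u_e$, and sum $p$-th powers; the resulting convex combination of tree routings is indeed a tree-based scheme. Your treatment of zero coordinates (inflating the capacities by an inverse-polynomial fraction of $\max_{e'}u_{e'}$, or a limiting argument over the compact convex set of tree-based routing matrices) correctly disposes of the strict-positivity hypothesis in R\"acke's theorem. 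The proof is correct and is the intended one.
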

This lemma shows that for every vector $\vect{u}$, there exists some routing scheme (which could depend on the vector) that is $O(\log n)$ competitive.
We will now show how to compute \emph{one} tree-based routing matrix $OBL$ that works for all vectors i.e. $\norm{OBL}{p} \leq 1$. From Lemma \ref{lem:racke:1}, we know that
for every unit vector $\vect{u}$, there exists an tree-based oblivious
routing matrix such that $\norm{M\cdot\vect{u}}{p} \leq O(\log n)$. We use this
to construct one tree-based oblivious routing matrix $OBL$ that works
for every load vector $\vect{u}$. Note that the set of tree-based
oblivious routing schemes is convex. Before, we show how to construct
the oblivious routing scheme, we present a simple lemma which captures
the continuity of the $p$-norm function.

\begin{lemma} \label{lem:continuity}
Let $f=\frac{\norm{Ax}{p}^p}{\norm{x}{p}^p}$, where $A$ is an $n \times n$ matrix with minimum entry $\frac{1}{N}$ and let $y$ be an $n$-dimensional vector with minimum entry $\frac{1}{(Nn)^2}$. Let $x$ be a vector in the $\delta$-neighborhood of $y$ i.e. $\norm{x-y}{1} = \delta \leq \frac{1}{(Nn)^{12}}$. Then,  
\begin{equation} \label{eq:continuity}
f(x) \leq f(y) + 1 
\end{equation}
\end{lemma}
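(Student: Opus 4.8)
The plan is to prove this by a routine first-order (mean value) estimate: $f$ is smooth on the positive orthant, and on the short segment joining $y$ to $x$ its gradient is bounded by a polynomial in $N$ and $n$, so the value changes by at most $\mathrm{poly}(Nn)\cdot\delta\le 1$. Writing $f(z)=\norm{Az}{p}^p/\norm{z}{p}^p$ and noting that all vectors in sight are positive, $f$ is differentiable with
\[ \doh{f}{x_i}(z) \;=\; p\left( \frac{\sum_k (A_k z)^{p-1} a_{ki}}{\norm{z}{p}^p} \;-\; \frac{\norm{Az}{p}^p\, z_i^{p-1}}{\norm{z}{p}^{2p}} \right). \]

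First I would control the segment $z(t)=(1-t)y+tx$, $t\in[0,1]$. Since $|x_i-y_i|\le\norm{x-y}{1}=\delta$ for every $i$, each coordinate satisfies $z(t)_i \ge y_i-\delta \ge \tfrac{1}{(Nn)^2}-\tfrac{1}{(Nn)^{12}} \ge \tfrac{1}{2(Nn)^2}>0$. Hence $f$ is $C^\infty$ in a neighbourhood of the whole segment and, crucially, $\norm{z(t)}{p}^p=\sum_i z(t)_i^p$ is bounded away from $0$ (it is at least $n\big(\tfrac{1}{2(Nn)^2}\big)^p$). Note that we do not need any upper bound on the coordinates: the quantities appearing in $\doh{f}{x_i}$ are, after factoring out the appropriate power of $\norm{z}{p}$, bounded by $\mathrm{poly}(n)$. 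Indeed $a_{ki}\le 1$ and $A_k z=\sum_j a_{kj}z_j\le \norm{z}{1}\le n\norm{z}{p}$, so $\sum_k (A_k z)^{p-1}a_{ki}\le n^p\norm{z}{p}^{p-1}$ and $\norm{Az}{p}^p\le n^{p+1}\norm{z}{p}^p$; also $z_i^{p-1}\le\norm{z}{p}^{p-1}$.

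Plugging these bounds into the displayed formula, each of the two terms of $\doh{f}{x_i}(z)$ is at most $p\,n^{p+1}/\norm{z}{p}\le p\,n^{p+1}\cdot 2(Nn)^2$, so $\norm{\grad f(z)}{\infty}\le L:=4p\,n^{p+1}(Nn)^2$ everywhere on the segment. By the mean value theorem (equivalently, integrating $\tfrac{d}{dt}f(z(t))$),
\[ f(x)-f(y)\;\le\; \Big(\max_{t\in[0,1]}\norm{\grad f(z(t))}{\infty}\Big)\cdot\norm{x-y}{1}\;\le\; L\,\delta . \]
Since $\delta\le \tfrac{1}{(Nn)^{12}}$ is small enough that $L\delta\le 1$ (exactly as in Lemma~\ref{lem:stability}, taking $\delta$ small relative to $p$ as well), we conclude $f(x)\le f(y)+1$. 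I do not expect a genuine obstacle here; the only point requiring a little care is verifying that $\norm{z}{p}^p$ stays bounded away from $0$ along the segment — which is precisely what the hypothesis $\delta\ll\tfrac{1}{(Nn)^2}$ buys us — so that the gradient bound $L$ is genuinely polynomial in $N,n$ and is absorbed by the slack in $\delta$.
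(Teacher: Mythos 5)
Your proof is correct and follows essentially the same route as the paper's: a Taylor/mean-value estimate for $f$ along the segment from $y$ to $x$, with the relevant derivatives bounded by $\mathrm{poly}(N,n)$ using the lower bounds on the entries of $A$ and on the coordinates of $y$, so that the change in $f$ is at most $\mathrm{poly}(N,n)\cdot\delta\le 1$. The only (minor) difference is that you stop at first order, bounding the gradient along the whole segment, whereas the paper expands to second order (gradient at $y$ plus a Hessian remainder term); your version is slightly more economical, and you are more explicit than the paper about why the coordinates stay bounded away from zero along the segment.
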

\begin{proof}
The proof follows just from the continuity and differentiability of the $p$-norm function at every point. Using the Taylor's expansion of $f$, we see that
\begin{equation}\label{eq:taylor2}
f(x)=f(y)+\eps \cdot \grad f_{y} + \frac{1}{2}~ \eps'^T
	H_{f|y} \eps' 
\end{equation} 
where $0 \leq \eps' \leq \eps$. Choosing $\eps=\delta=\frac{1}{(Nn)^{12}}$ and using the lower bounds the matrix entries and the co-ordinates of $y$ as in Lemma ~\ref{lem:stability}, we see that the lemma follows.
\end{proof}
We now sketch how to find a tree-based oblivious routing matrix when the aggregation function is an $\ell_p$ norm. 

\begin{theorem}
There exists a polynomial time algorithm that computes an oblivious routing scheme with competitive ratio $O(\log n)$ when the aggregation function is the $\ell_p$ norm with $p>1$ and the load function on the edges is a norm. 
\end{theorem}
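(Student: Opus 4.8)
The plan is to recast the search for a good oblivious routing matrix as (approximately) solving a two-player zero-sum game, and to solve that game by running a no-regret, multiplicative-weights-style update for one player while using the $p$-norm algorithm of Section~\ref{sec:algo} to implement the \emph{separation oracle} / best response of the other.

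First I would set up the game. By Lemma~\ref{lem:racke:load} it suffices to produce a single tree-based oblivious routing matrix $OBL$, i.e.\ a point in the convex set $\mathcal{K}$ of such matrices (all of whose entries are non-negative), with $\norm{OBL}{p} = O(\log n)$. Using H\"older duality, write $\norm{Mu}{p} = \max_{\norm{v}{p'} \le 1} v^{T} M u = \max_{v} \langle M, uv^{T}\rangle$, so that $\norm{M}{p} = \max_{W \in \mathcal{W}} \langle M, W\rangle$, where $\mathcal{W}$ is the convex hull of the (non-negative) rank-one matrices $uv^{T}$ with $\norm{u}{p}\le 1$, $\norm{v}{p'}\le 1$. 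Thus we want to (approximately) solve the bilinear min-max problem $\min_{M \in \mathcal{K}} \max_{W \in \mathcal{W}} \langle M, W\rangle$ over convex domains. Lemma~\ref{lem:racke:1} (itself constructive, via the $\ell_\infty$ scheme of \cite{racke1}) says precisely that for every load vector $u$ there is an $M \in \mathcal{K}$ with $\norm{Mu}{p}\le O(\log n)\norm{u}{p}$; since every extreme point of $\mathcal{W}$ has the form $uv^{T}$, this and Sion's min-max theorem show the value of the game is $O(\log n)$.

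The algorithm then runs $T = \mathrm{poly}(|E|,1/\eps)$ rounds of the standard regret-minimization dynamics. The row player (who is assembling the routing matrix) plays a no-regret online-linear-optimization strategy over $\mathcal{K}$ — concretely a multiplicative-weights update over the tree decompositions whose convex combinations generate $\mathcal{K}$, inherited from \cite{racke1,racke} — producing $M^{(t)}\in\mathcal{K}$ in round $t$ and incurring the linear loss $\langle M^{(t)}, W^{(t)}\rangle$. The column player best-responds: given $M^{(t)}$, it returns $W^{(t)} = u^{(t)}(v^{(t)})^{T}$ with $\langle M^{(t)}, W^{(t)}\rangle$ essentially equal to $\norm{M^{(t)}}{p}$. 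This is exactly the step implemented by our algorithm: since $p>1$, after perturbing $M^{(t)}$ to a strictly positive matrix (Section~\ref{sec:intro:notation}, at a cost of a $(1+\eps)$ factor) Theorem~\ref{thm:runtime:bound} computes $\norm{M^{(t)}}{p}$ together with a near-maximizing vector $u^{(t)}$, and $v^{(t)}$ is its H\"older dual. By the Freund--Schapire analysis, the average $\overline{OBL} = \tfrac1T\sum_t M^{(t)} \in \mathcal{K}$ satisfies $\norm{\overline{OBL}}{p} \le O(\log n)+\eps$ (and one may stop earlier the moment the oracle certifies $\norm{M^{(t)}}{p}$ is small), whence Lemma~\ref{lem:racke:load} gives an $O(\log n)$-competitive scheme. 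The generalization to an arbitrary norm $\ell$ on the edge loads is handled identically: the quantity to be evaluated is still the maximum of a convex function, to which our iterative method (or a routine variant) applies, and the bilinearization above goes through with $\ell$ and its dual norm in place of $\ell_p$ and $\ell_{p'}$.

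The main obstacle is that our algorithm only guarantees a vector with near-optimal \emph{value} $f$, whereas the regret bound for the row player degrades with the ``width'' of the adversary's plays $W^{(t)}$, so we must ensure the returned $u^{(t)}$ (hence $v^{(t)}$) is a genuine, well-behaved near-maximizer and not an artifact of near-degeneracy. This is exactly what Section~\ref{sec:stability} buys us: Lemma~\ref{lem:uniquemax} gives a unique maximizer, and Lemmas~\ref{lem:stability} and~\ref{lem:continuity} upgrade ``near-optimal value'' to ``polynomially close to the true maximizer'' with quantitative bounds, so that $\norm{W^{(t)}}{}$ is polynomially bounded and $T$ stays polynomial. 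A secondary technical point is performing the no-regret (projection/sampling) step over the polytope $\mathcal{K}$ of tree-based routing schemes, which we take from the constructions of \cite{racke1,racke}.
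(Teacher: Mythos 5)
Your reduction to a bilinear zero-sum game and the appeal to Sion/Freund--Schapire is a genuinely different route from the paper, but it has a gap at exactly the point you dismiss as ``a secondary technical point'': the row player's no-regret step over the set $\mathcal{K}$ of tree-based routing matrices. $\mathcal{K}$ has exponentially many extreme points (one per decomposition tree), so multiplicative weights ``over the tree decompositions'' is not a polynomial-time algorithm; projection onto $\mathcal{K}$ is not available; and follow-the-perturbed-leader would require an (exact, or at least uniformly approximate) linear optimization oracle over $\mathcal{K}$ for arbitrary payoff matrices $W$, whereas Lemma~\ref{lem:racke:1} only supplies an $O(\log n)$-approximate \emph{best response} against rank-one payoffs $uv^T$. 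Flipping the roles does not help: then the load player must run no-regret over $\{W = uv^T : \norm{u}{p}\le 1, \norm{v}{p'}\le 1\}$, again a non-simplex body for which you have only a best-response oracle (the $p$-norm computation itself). With best responses on both sides and no regret bound on either, the generic game-solving machinery does not close. You also slightly misplace the role of Section~\ref{sec:stability}: the width of $W^{(t)}=u^{(t)}(v^{(t)})^T$ is automatically bounded by the norm constraints, so stability is not needed for that.

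The paper avoids this entirely by replacing the regret analysis with a problem-specific potential argument. It keeps a single matrix $M_i\in\mathcal{K}$, computes the (unique, by Lemma~\ref{lem:uniquemax}) maximizer $x^*_{(i)}$ of $\norm{M_i x}{p}$ via Theorem~\ref{thm:runtime:bound}, takes $\tilde M_i$ from Lemma~\ref{lem:racke:1} for that one vector, and sets $M_{i+1}=(1-\lambda)M_i+\lambda\tilde M_i$. The claim that $\norm{M_{i+1}}{p}$ drops by $1/\mathrm{poly}(n)$ is proved by a case split: for $y$ far from $x^*_{(i)}$, Lemma~\ref{lem:stability} says $\norm{M_i y}{p}$ was already bounded away from the maximum; for $y$ near $x^*_{(i)}$, Lemma~\ref{lem:continuity} says $\norm{\tilde M_i y}{p}$ is small, so the convex combination strictly improves there. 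This is why uniqueness of the maximizer and quadratic concavity around it are the engine of the convergence proof, not a width bound --- with multiple near-maximizing ``peaks,'' fixing one neighborhood per round would not decrease the global maximum. If you want to salvage your game-theoretic framing, you would need to supply an actual polynomial-time no-regret algorithm over one of the two strategy sets; absent that, you should adopt the paper's potential argument.
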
  
\begin{proof}[Proof sketch]
The algorithm and proof follow roughly along the lines of the constructive version for $p=2$ in \cite{racke}. As mentioned earlier, their proof uses inner products among the vectors (and the computation of eigenvalues). However, we show that the procedure still works because of the stability of our solution (Lemma~\ref{lem:stability}. 

\newcommand{\poly}{\text{poly}}

Let $J_{\eps}$ be an $|E| \times |E|$ matrix will all entries being $\epsilon$. Let $f(M) = \norm{M+J_{\frac{1}{|E|}}}{p}$. We want a tree-based oblivious routing matrix $OBL$ such that $f(OBL) \leq c \log n$ for some large enough constant $c$. We follow an iterative procedure to obtain this matrix $OBL$ starting with an arbitrary tree-based routing matrix $\Mi{0}$. At stage $i$, we check if for the current matrix $M_i$ , $\norm{M_{i}}{p} \leq c\log n$. If not, using the iterative algorithm in Section \ref{sec:algo}, we obtain  unit vector $x^*_{(i)}$ which maximizes $\norm{M_{(i)}x}{p}$. Let $\tilde{M}_{(i)}$ be the tree-based oblivious routing matrix from Lemma \ref{lem:racke:1} such that $\norm{\tilde{M}_{(i)}x^*_{(i)}}{p} \leq c \log n /2 -2$. We now update \[M_{i+1}= (1-\lambda) M_{i} + \lambda \tilde{M}_{i}\] Observe that this is also a tree-based oblivious routing matrix. We now show that $\norm{\Mi{i+1}}{p}$ decreases by an amount $\Omega(\frac{1}{\poly(n)})$.

At step $i$, roughly speaking, for all vectors $y$ that are far enough from $\xopt_{(i)}$,
 $\norm{\Mi{i} y}{p} \leq \norm{\Mi{i} y}{p} - \frac{1}{\poly(n)}$ from Lemma~\ref{lem:stability} (stability). Choosing $\lambda = \Theta(n^{-c})$ for some large enough constant $c>0$, it easily follows that $\norm{\Mi{i+1} y}{p} \leq \norm{\Mi{i} y}{p} - \frac{1}{\poly(n)}$. On the other hand, consider $y$ in the $\delta$-neighborhood of $x^*_{(i)}$. Using Lemma~\ref{lem:continuity},
\[\norm{\tilde{M}_{i} y}{p} \leq \frac{c \log n}{2} \] 
Hence, 
\begin{eqnarray*}
\norm{\Mi{i+1} y}{p} &=& (1-\lambda) \norm{\Mi{i} y}{p} + \lambda \frac{c}{2} \log n \\ 
&\leq& \norm{\Mi{i} y}{p} - \lambda \times \frac{c}{2}\log n \qquad \mbox{(since } \norm{\Mi{i} y}{p} \geq c\log n \mbox{ )}\\
&\leq& \norm{\Mi{i} y}{p} - \frac{1}{\poly(n)}
\end{eqnarray*}
Hence, it follows that the matrices $M_{i}$ decrease in their $p$-norm by a small quantity $\Omega(\frac{1}{\poly(n)})$ in every step. It follows that this iterative algorithm finds the required tree-based oblivious routing scheme in $\poly(n)$ steps.
\end{proof}

\section{Inapproximability results} \label{sec:lower-bounds}
We will now prove that it is NP-hard to approximate $\norm{A}{\qtop}$-norm of a
matrix to any fixed constant, for any $q \geq p >2$. We then show how this proof carries over to the hardness of computing the $\infty \mapsto p$ norm.

\subsection{Inapproximability of $\norm{A}{\ptop} $}\label{sec:lb:ptop}

Let us start with the question of approximating $\norm{A}{\ptop}$. We first show the following:

\begin{prop} \label{prop:ptop}
For $p\geq 2$ it is NP-hard to approximate that $\pnorm{A}$ to some (small) constant factor $\eta>1$.
\end{prop}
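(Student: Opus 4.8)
The plan is to reduce from the gap version of MaxCut, which is NP-hard to approximate within some fixed constant (say, distinguishing instances with a cut of value at least $(1-\eps)|E|$ from those with max cut at most $(1-\eps')|E|$, with $\eps' > \eps$). Given a graph $G = (V,E)$, I would construct a matrix $A$ (with possibly negative entries) whose $\ptop$-norm encodes how well one can simultaneously ``satisfy'' the edges by a $\pm 1$-like vector. The natural gadget is to let the coordinates of $x$ correspond to vertices, and introduce one row per edge $\{u,v\}$ proportional to $\mathbf{e}_u - \mathbf{e}_v$, so that $(Ax)_{uv} = x_u - x_v$ is large exactly when $x_u$ and $x_v$ have opposite signs and roughly equal magnitude. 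To force the optimal $x$ to be (close to) a $\pm 1$ vector rather than concentrating all its mass on a few coordinates, I would add a block of rows that is (a scaled copy of) the identity on $V$, so that $\norm{Ax}{p}^p$ includes a term like $\beta^p \sum_v |x_v|^p = \beta^p \norm{x}{p}^p$; since the denominator is $\norm{x}{p}^p$, this contributes a fixed additive constant and the remaining ``edge'' part must be made large subject to $\norm{x}{p} = 1$.

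The key steps, in order: (1) Write down the gadget matrix $A$ explicitly with two blocks (edge-block with $\pm$ entries and a weighted identity block) and appropriate scaling parameters, chosen as functions of $p$ and the MaxCut gap. (2) \textbf{Completeness:} given a cut of value $\geq (1-\eps)|E|$, plug in $x = \mathbf{1}_S/\|\cdot\|$ with signs according to the cut (i.e. $x_v = \pm 1/n^{1/p}$), and lower-bound $\norm{Ax}{p}^p$: the cut edges each contribute $(2/n^{1/p})^p$ while non-cut edges contribute $0$, plus the identity block contributes its fixed amount; this gives a lower bound on $\pnorm{A}$. (3) \textbf{Soundness:} show that if $\pnorm{A}$ is large, then from the (near-)optimal vector $x$ one can round to a good cut. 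This is where I expect the main difficulty: a priori the optimal $x$ need not be $\pm 1$-valued, and for $p > 2$ the function $|x_u - x_v|^p$ is convex, which actually helps push mass toward the extreme configurations, but one still needs a careful argument. The idea is that since $p > 2$, among vectors with $\norm{x}{p}=1$, the $\ell_p$-edge-energy $\sum_{uv}|x_u - x_v|^p$ is maximized (up to the constant factor losses we can afford) by vectors that are nearly ``flat'' in magnitude with signs forming a cut — one shows any coordinate that is far from $\pm n^{-1/p}$ in magnitude, or any edge that is ``monochromatic'', costs a definite amount, so a near-optimal $x$ yields a cut of value $\geq (1-\eps')|E|$ by thresholding signs.

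The quantitative heart of the soundness argument is an inequality of the form: for $p \geq 2$ and reals $a, b$ with $|a|^p + |b|^p$ fixed, $|a-b|^p$ is large only when $a \approx -b$; combined with the convexity that makes spreading mass evenly optimal under a fixed $\ell_p$ budget. I would make this robust by showing that $\norm{Ax}{p}^p \le (\text{const depending on MaxCut}(G)) + (\text{slack})$ for \emph{every} unit $x$, where the slack is controlled by how far $x$ is from an ideal cut vector, so that a $(1+\eta)$-approximation gap in $\pnorm{A}$ translates to a constant gap in MaxCut value. Since all losses are by fixed constants depending only on $p$ and the (fixed) MaxCut gap, choosing the gadget parameters appropriately yields a fixed constant $\eta > 1$ such that $\eta$-approximating $\pnorm{A}$ decides the MaxCut gap problem, establishing NP-hardness. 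I would remark that the same construction works verbatim for $\qtop$ with $q \ge p$ for the purposes of the small-constant hardness (the denominator $\norm{x}{q}$ only changes the normalization), which is why the later sections can bootstrap from this proposition; the tensoring step to amplify to almost-polynomial hardness is handled separately.
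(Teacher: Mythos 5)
Your high-level plan (reduce from gap MaxCut, one row $\mathbf{e}_u - \mathbf{e}_v$ per edge, argue that a near-optimal vector yields a good cut) is the same as the paper's, but the gadget you propose to enforce the ``$\pm 1$-like'' structure does not actually do the job, and that is precisely where the real content of the proof lives.

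You add a scaled identity block, so that $\norm{Ax}{p}^p = \sum_{u\sim v}|x_u - x_v|^p + \beta^p\norm{x}{p}^p$. As you yourself observe, over the sphere $\norm{x}{p}=1$ the second term is the \emph{constant} $\beta^p$: it is independent of $x$, so it penalizes a concentrated vector exactly as much as a flat one, i.e.\ not at all. What you are left maximizing is $\sum_{u\sim v}|x_u-x_v|^p / \norm{x}{p}^p$, an $\ell_p$-analogue of the top Laplacian eigenvalue, and there is no reason this relaxation should track MaxCut. The elementary inequality $|x_u - x_v|^p \le 2^{p-1}(|x_u|^p + |x_v|^p)$ shows only that the ratio is at most $2^{p-1}d$ for any $d$-regular graph, with no dependence on the cut value; and that bound can be approached by a vector supported on a small near-bipartite part of $G$ even when $G$ globally has no cut much better than half the edges. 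Your own soundness step (``any coordinate far from $\pm n^{-1/p}$ costs a definite amount'') is exactly the claim you would need, but nothing in the gadget makes it true. Indeed, your appeal to convexity points the wrong way: for $p>2$, convexity of $t\mapsto|t|^p$ rewards concentration of mass, not spreading, so it works \emph{against} flatness rather than for it.

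The paper closes this gap by a different gadget. It introduces an extra coordinate $x_0$ and, for each vertex $i$, the rows realizing $Cd\,(|x_0 + x_i|^p + |x_0 - x_i|^p)$, with the denominator $n|x_0|^p + \sum_i|x_i|^p$. The key fact (Lemma~\ref{lem:brute-inequality}) is that for $p\ge 2$,
\[
\frac{|1+t|^p + |1-t|^p}{1 + |t|^p} \;\le\; 2^{p-1},
\]
and there is a strict deficit $\delta>0$ whenever $|t|\notin[1-\eps,1+\eps]$. After normalizing $x_0=1$, these $C$-weighted terms therefore saturate at $C\cdot 2^{p-1}$ \emph{only} when every $|x_i|\approx 1$; taking $C$ large forces flatness on any near-optimal vector, and only then does the edge sum $\sum_{u\sim v}|x_u - x_v|^p$ become a faithful proxy for the cut value. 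So the step you flagged as ``the main difficulty'' is genuinely missing from your proposal: a scaled identity block contributes only a constant and cannot supply the coupling of each coordinate to a common reference scale that the $x_0$ gadget provides.
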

\emph{Proof:}
We give a reduction from the gap version of the MaxCut problem. The following is well-known (c.f.~\cite{hastad})
\begin{quote}
There exist constants $1/2 \leq \rho < \rho' <1$ such that given a regular graph $G=(V,E)$ on $n$ vertices and degree $d$, it is hard to distinguish between:\\
{\sc Yes} case: $G$ has a cut containing at least $\rho' (nd/2)$ edges, and\\
{\sc No} case: No cut in $G$ cuts more that $\rho (nd/2)$ edges.
\end{quote}

Suppose we are given a graph $G = (V, E)$ which is regular and has
degree $d$. The $p$-norm instance we consider will be that of
maximizing $g(x_0, \dots, x_n)$ ($x_i \in \mathbb{R}^n$), defined by
\[ g(x_0, x_1, \dots, x_n) = \frac{ \sum_{i \sim j} |x_i - x_j|^p + Cd
\cdot \big( \sum_i |x_0 + x_i|^p + |x_0 - x_i|^p \big) }{ n |x_0|^p +
\sum_i |x_i|^p }. \]
Here $C$ will be chosen appropriately later. Note that if we divide by $d$, we can see $g(\mathbf{x})$ as the ratio
\begin{equation}\label{eq:g-rewrite}
\frac{g(\mathbf{x})}{d} = \frac{ \sum_{i \sim j} \big( |x_i - x_j|^p + C
(|x_0 + x_i|^p  + |x_0 - x_i|^p + |x_0 + x_j|^p  + |x_0 - x_j|^p)
\big)}{ \sum_{i \sim j} 2|x_0|^p + |x_i|^p + |x_j|^p}.
\end{equation}

The idea is to do the analysis on an edge-by-edge basis. Consider the function
\[ f(x,y) = \frac{ |x-y|^p + C \big( |1+x|^p + |1-x|^p + |1+y|^p +
|1-y|^p \big) }{2 + |x|^p + |y|^p}. \]

\paragraph{Definition.}
A tuple $(x,y)$ is {\em good} if both $|x|$ and $|y|$ lie in the
interval $(1-\eps, 1+\eps)$, and $xy <0$. A technical lemma concerning
$f$ is the following 
\begin{lemma} \label{lem:tech-inequality}
For any $\eps >0$, there is a large enough constant $C$ such
that
\begin{equation}
f(x,y) \leq  \begin{cases} C \cdot 2^{p-1} + \frac{(1+\eps) 2^p}{2+|x|^p+|y|^p}, & \text{if $(x,y)$ is good}\\ C \cdot 2^{p-1} & \text{ otherwise} \end{cases}
\end{equation}
\end{lemma}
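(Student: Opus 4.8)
\emph{Proof proposal.} The plan is to move everything onto a single one--variable function. Set
\[ g(t) \;=\; 2^{p-1}\bigl(1+|t|^p\bigr) - |1+t|^p - |1-t|^p , \]
so that $|1+x|^p+|1-x|^p+|1+y|^p+|1-y|^p = 2^{p-1}\bigl(2+|x|^p+|y|^p\bigr) - g(x) - g(y)$. Clearing the (positive) denominator $D = 2+|x|^p+|y|^p$, the bound $f(x,y)\le C\cdot 2^{p-1}$ is \emph{exactly} $|x-y|^p \le C\bigl(g(x)+g(y)\bigr)$, and, since $xy<0$ forces $|x-y|=|x|+|y|$, the bound in the good case is \emph{exactly} $(|x|+|y|)^p \le (1+\eps)2^p + C\bigl(g(x)+g(y)\bigr)$. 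The crucial structural facts about $g$ are: (i) $g\ge 0$ --- this is Clarkson's inequality $\bigl|\tfrac{a+b}{2}\bigr|^p + \bigl|\tfrac{a-b}{2}\bigr|^p \le \tfrac12\bigl(|a|^p+|b|^p\bigr)$ (valid for $p\ge 2$) with $a=1$, $b=t$; and for $p>2$ its equality case gives $g(t)=0$ only at $t=\pm 1$ (when $p=2$, $g\equiv 0$, but then $\norm{A}{2 \mapsto 2}$ is polynomial--time computable and the reduction is run only for $p>2$, which we assume from now on); (ii) a Taylor expansion at $t=\pm1$ shows $g$ has a nondegenerate quadratic minimum there, so $g(t)\ge c(p)\,(|t|-1)^2$ for some $c(p)>0$ on a neighbourhood of $\{\pm1\}$; (iii) $g(t) = (2^{p-1}-2)\,|t|^p\,(1+o(1))$ as $|t|\to\infty$, with $2^{p-1}-2>0$.

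For the ``otherwise'' case it suffices to bound $\Phi(x,y) := |x-y|^p/(g(x)+g(y))$ by a constant (depending only on $p,\eps$) on the complement $R$ of the good set; any larger $C$ then works. The denominator vanishes only at the four points $(\pm1,\pm1)$, and of these only $(1,1)$ and $(-1,-1)$ lie in $R$ (the good set is open and contains the other two). Near $(1,1)$ every point of $R$ has $x,y>0$, so $|x-y|\to 0$ while $g(x)+g(y)\ge c(p)\bigl((x-1)^2+(y-1)^2\bigr)\ge \tfrac{c(p)}{2}(x-y)^2$; hence $\Phi \le \tfrac{2}{c(p)}|x-y|^{p-2}\to 0$ there (this is where $p>2$ enters), and similarly near $(-1,-1)$, so $\Phi$ extends continuously by $0$ across both points. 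When $\max(|x|,|y|)$ is large, the growth estimate (iii) together with $|x-y|^p \le 2^{p-1}(|x|^p+|y|^p)$ (convexity of $t\mapsto t^p$) bounds $\Phi$ by a constant depending on $p$. On the remaining compact part $\Phi$ is continuous, hence bounded; the three estimates together give a finite bound on all of $R$.

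For the good case write $|x|=1+s$, $|y|=1+u$ with $s,u\in(-\eps,\eps)$. We may assume $\eps$ is at most the threshold $\eps_0(p)$ on which (ii) holds --- for larger $\eps$, a point good for $\eps$ but not for $\eps_0(p)$ has $g(x)+g(y)$ bounded below and $|x-y|$ bounded above, so it is covered by the previous paragraph after enlarging $C$. Then $C\bigl(g(x)+g(y)\bigr)\ge C\,c(p)\,(s^2+u^2)$, while a second--order Taylor expansion gives $(|x|+|y|)^p=(2+s+u)^p \le 2^p + 2^{p-1}p\,(s+u) + K(p)\,(s+u)^2$. Choosing $C$ large enough that $C\,c(p)\ge 4K(p)$ absorbs the quadratic remainder (using $(s+u)^2\le 2(s^2+u^2)$ and $s^2+u^2\ge\tfrac12(s+u)^2$), the claim reduces to $\eps\,2^p + \tfrac{C\,c(p)}{4}\,w^2 \ge 2^{p-1}p\,w$ for all $w=s+u$, which holds by completing the square exactly when $C \ge 2^{p-2}p^2/(c(p)\,\eps)$.

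The step I expect to be the main obstacle is the good case, and in particular the fact that the harmful term $2^{p-1}p\,(s+u)$ is \emph{first} order in the perturbation while the only compensating quantity, $C(g(x)+g(y))$, is only \emph{second} order. This is exactly why the statement needs both the $(1+\eps)$ slack on the right and a constant $C$ taken large relative to $1/\eps$; the naive estimate $(|x|+|y|)^p\le (1+\eps)2^p$ is false. A secondary technical point is the \emph{uniform} quadratic lower bound $g(t)\ge c(p)(|t|-1)^2$ near $t=\pm1$ in (ii) (not merely its leading Taylor coefficient), which needs some care when $p$ is not an even integer, since then $|1-t|^p$ is only twice continuously differentiable --- not smooth --- at $t=1$.
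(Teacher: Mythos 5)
Your proof is correct, and it reaches the stated bound by a route that is genuinely different from the paper's. The shared core is the one-variable inequality $g(t)=2^{p-1}(1+|t|^p)-|1+t|^p-|1-t|^p\ge 0$ with quantitative slack away from $t=\pm1$: this is precisely the paper's Lemma~\ref{lem:brute-inequality}, which the paper proves by the explicit integral estimate $\int_1^{1+\theta}\bigl((x+\theta)^{p-1}-x^{p-1}\bigr)\,dx\ge\theta^p$, extracting an explicit slack of order $(p-2)\theta^p$, whereas you get it from Clarkson's inequality together with a local Taylor expansion at $\pm1$, the growth rate at infinity, and compactness in between --- less effective (no explicit $\delta$), but entirely sufficient since only the existence of $C$ is asserted. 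The assembly also differs: the paper does a three-way case split (good / in-range but $xy>0$ / at least one coordinate bad), while you clear the denominator and reduce everything to bounding $|x-y|^p/(g(x)+g(y))$ off the good set and to one quadratic inequality on it. Your version buys two things the paper's argument does not literally deliver. First, in the good case the paper simply bounds $|x-y|\le 2+2\eps$, which gives $(1+\eps)^p2^p$ rather than $(1+\eps)2^p$ in the numerator (harmless downstream since $\eps$ is arbitrary, but not the bound as written); your observation that the first-order perturbation $2^{p-1}p(s+u)$ must be absorbed by $C\bigl(g(x)+g(y)\bigr)$, forcing $C\gtrsim 1/(c(p)\eps)$, is exactly the point the paper glosses over with ``the upper bound in this case is clear,'' and it yields the inequality exactly as stated. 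Second, the paper's middle case only concludes $f(x,y)\le C\cdot2^{p-1}+\eps$, whereas your local estimate $\Phi\le\frac{2}{c(p)}|x-y|^{p-2}$ near $(1,1)$ and $(-1,-1)$ (which is where $p>2$ enters for you, just as it enters the paper's integral estimate) gives the clean $C\cdot2^{p-1}$. The trade-off is that the paper's argument is shorter and fully explicit in its constants, while yours is tighter and makes the dependence of $C$ on $\eps$ transparent.
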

We now present the proof of Lemma~\ref{lem:tech-inequality}. We first start with a simpler inequality - note that this is where the condition $p>2$ comes in.
\begin{lemma} \label{lem:brute-inequality}
For all $x \in \mathbb{R}$, we have
\[ \frac{ |1+x|^p + |1-x|^p }{1+|x|^p} \le 2^{p-1}. \]
Further, for any $\eps>0$, there exists a $\delta>0$ such that if $|x| \not\in [1-\eps, 1+\eps]$, then 
\[ \frac{ |1+x|^p + |1-x|^p }{1+|x|^p} \le 2^{p-1} -\del. \]
\end{lemma}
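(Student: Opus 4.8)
\textbf{Setup and reduction to a one-variable problem.} The plan is to study the function
\[
h(t) = \frac{|1+t|^p + |1-t|^p}{1 + |t|^p}
\]
on $\mathbb{R}$ and show its maximum is exactly $2^{p-1}$, attained only at $|t|=1$. Since $h$ is even, it suffices to work on $t \ge 0$, and further on $t \in [0,1]$: for $t > 1$ one can substitute $t \mapsto 1/t$ and check that $h(1/t) = h(t)$ (multiply numerator and denominator by $t^p$), so values for $t>1$ are already covered by values for $t \in (0,1)$. On $[0,1]$ we have $1-t \ge 0$, so the absolute values disappear and we must show
\[
(1+t)^p + (1-t)^p \le 2^{p-1}\,(1 + t^p) \qquad \text{for } t \in [0,1],
\]
with equality only at $t = 1$ (the endpoint $t=0$ gives $2 \le 2^{p-1}$, strict since $p>2$).

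\textbf{The core inequality via convexity (this is where $p>2$ enters).} The inequality $(1+t)^p + (1-t)^p \le 2^{p-1}(1+t^p)$ is a comparison between two symmetric means. Write it as
\[
\frac{(1+t)^p + (1-t)^p}{2} \le \frac{1^p + t^p}{2}\cdot 2^{p-1}\cdot\frac{1}{1} \ \Longleftrightarrow\ \left(\tfrac{(1+t)+(1-t)}{2}\right)^{\!?}\ \dots
\]
— more cleanly: the map $s \mapsto s^p$ is convex for $p \ge 1$, which gives $\big(\frac{(1+t)+(1-t)}{2}\big)^p \le \frac{(1+t)^p+(1-t)^p}{2}$, i.e. the \emph{wrong} direction. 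So instead I would use the stronger fact that for $p \ge 2$ the function $s \mapsto s^p$ is such that $u^p + v^p \le (u^2+v^2)^{p/2}\cdot 2^{?}$... Let me instead argue directly: set $\phi(t) = 2^{p-1}(1+t^p) - (1+t)^p - (1-t)^p$. Then $\phi(0) = 2^{p-1}-2 > 0$, $\phi(1) = 0$. Differentiating, $\phi'(t) = 2^{p-1} p\, t^{p-1} - p(1+t)^{p-1} + p(1-t)^{p-1}$, and $\phi'(1) = p(2^{p-1} - 2^{p-1} + 0) = 0$. I would show $\phi$ is convex on $[0,1]$ (compute $\phi''(t) = p(p-1)\big[2^{p-1}t^{p-2} - (1+t)^{p-2} - (1-t)^{p-2}\big]$ and argue the bracket is $\ge 0$ for $p \ge 2$ using convexity of $s\mapsto s^{p-2}$ applied to $\frac{1}{2}(1+t)^{p-2} + \frac12(1-t)^{p-2} \ge 1 \ge$ ... wait, need $2^{p-1}t^{p-2} \ge (1+t)^{p-2}+(1-t)^{p-2}$, which for $t$ near $1$ reads $2^{p-1} \ge 2^{p-2}$, true). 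A convex function on $[0,1]$ with a zero of the derivative at the right endpoint and positive value at the left endpoint is nonnegative throughout and vanishes only at $t=1$. This settles both the inequality and the equality case. The case $p=2$ is trivial (equality for all $t$), consistent with $p>2$ being needed for strictness.

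\textbf{From strictness to the quantitative gap.} For the second statement, the function $h$ is continuous on the extended real line (including $t = \pm\infty$, where $h \to 1$), and we have shown $h(t) < 2^{p-1}$ for all $t$ with $|t| \ne 1$, with $h$ attaining $2^{p-1}$ exactly at $|t| = 1$. Fix $\eps > 0$. The set $K = \{t : |t| \notin (1-\eps, 1+\eps)\}$ is closed in $[-\infty,\infty]$ and hence compact after the one-point-at-infinity compactification (or: split into $|t| \le 1-\eps$, which is compact, and $|t| \ge 1+\eps$, which is handled by the substitution $t \mapsto 1/t$ mapping it into $[-1/(1+\eps), 1/(1+\eps)]$, a compact set on which $h$ is continuous). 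So $h$ attains a maximum on $K$, and since $h < 2^{p-1}$ everywhere on $K$, this maximum is some value $2^{p-1} - \delta$ with $\delta > 0$. This $\delta$ depends only on $\eps$ and $p$, which is exactly the claim.

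\textbf{Main obstacle.} The only real work is verifying the sign of $\phi''$ on $[0,1]$ — i.e., that $2^{p-1}t^{p-2} \ge (1+t)^{p-2} + (1-t)^{p-2}$ for $t \in [0,1]$ and $p \ge 2$. Near $t=0$ the left side vanishes (for $p>2$) while the right side is near $2$, so $\phi''$ is actually \emph{negative} near $0$; thus $\phi$ is \emph{not} convex on all of $[0,1]$ and the clean "convex + $\phi'(1)=0$" argument fails as stated. The fix I would actually carry out: show $\phi' \le 0$ on $[0,1]$ directly — equivalently $(1+t)^{p-1} - (1-t)^{p-1} \ge 2^{p-1}t^{p-1}$ — which follows because the left side, as a function of $t$, is concave-ish and one can compare via the mean value / integral representation $(1+t)^{p-1}-(1-t)^{p-1} = (p-1)\int_{-t}^{t}(1+s)^{p-2}\,ds \ge (p-1)\cdot 2t \cdot (\text{value at } s \text{ making it tight})$; alternatively substitute and reduce to a one-variable check that is monotone. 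Once $\phi' \le 0$ on $[0,1]$ with $\phi(1) = 0$, we get $\phi \ge 0$ on $[0,1]$ with equality only at $t=1$, as needed. So the genuine technical heart is this monotonicity estimate for $\phi'$, and I would spend the bulk of the proof there; everything else is compactness and the $t \mapsto 1/t$ symmetry.
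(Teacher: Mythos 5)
Your overall architecture is sound and genuinely different from the paper's. You reduce everything to $t\in[0,1]$ via evenness and the symmetry $h(1/t)=h(t)$, then argue about the sign of $\phi'$ where $\phi(t)=2^{p-1}(1+t^p)-(1+t)^p-(1-t)^p$; the paper instead treats $x>1$ and $0<x<1$ separately via the substitutions $x=1\pm 2\theta$ and bounds the integral $\int_1^{1+\theta}\bigl((s+\theta)^{p-1}-s^{p-1}\bigr)\,ds$ from below by $\theta^p$, which yields an explicit slack $(p-2)\theta^p$ used for the second assertion. Your compactness argument for the $\eps$--$\delta$ statement is a clean alternative to the paper's explicit slack (and your reduction via $t\mapsto 1/t$ neatly sidesteps the behavior at infinity; as a small aside, $h(t)\to 2$ there, not $1$, though this is harmless since $2<2^{p-1}$). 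You are also right that $\phi$ is not convex near $0$, so your self-correction away from the ``convex plus $\phi'(1)=0$'' argument is warranted.

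The one genuine gap is exactly the step you flag as the heart of the matter: you never actually prove $(1+t)^{p-1}-(1-t)^{p-1}\ge 2^{p-1}t^{p-1}$ on $[0,1]$, and the sketch you offer does not deliver it. Bounding $(p-1)\int_{-t}^{t}(1+s)^{p-2}\,ds$ below by $2t(p-1)$ times a pointwise value cannot give $(2t)^{p-1}$ uniformly (at $t=1$ that would require $2(p-1)\ge 2^{p-1}$, false for large $p$), and convexity of $s\mapsto(1+s)^{p-2}$ fails for $2<p<3$ anyway. The fix is one line: for $a,b\ge 0$ and $r\ge 1$ one has $(a+b)^r\ge a^r+b^r$, so with $a=1-t$, $b=2t$, $r=p-1$ we get $(1+t)^{p-1}\ge(1-t)^{p-1}+(2t)^{p-1}$, i.e.\ $\phi'\le 0$ on $[0,1]$ (strictly negative on $(0,1)$ since the superadditivity is strict for $a,b>0$, $r>1$), and $\phi(1)=0$ then gives $\phi>0$ on $[0,1)$. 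Note this superadditivity of $s\mapsto s^{p-1}$ is precisely the engine of the paper's integral estimate as well, so once you insert it your proof is complete and no harder than the paper's.
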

\begin{proof}
We may assume $x>0$. First consider $x>1$. Write $x=1+2\theta$, and thus the first inequality simplifies to
\[ \big[ (1+2\theta)^p - (1+\theta)^p \big] \ge (1+\theta)^p - 1+ 2\theta^p. \]
Now consider \[ I = \int_{x=1}^{1+\theta} \big( (x+\theta)^{p-1} - x^{p-1} \big) dx.\] For each $x$, the function being integrated is $\ge \theta^{p-1}$, since $p>2$ and $x>0$. Thus the integral is at least $\theta^p$. Now evaluating the integral independently and simplifying, we get
\[ (1+2\theta)^p - 2(1+\theta)^p + 1 \ge p \cdot \theta^p,\]
which gives the inequality since $p >2$. Further there is a slack of $(p-2)\theta^p$.
Now suppose $0<x<1$. Writing $x=1-2\theta$ and simplifying similarly, the inequality follows.
Further, since we always have a slack, the second inequality is also easy to see.
\end{proof}

\begin{proof}[Proof (of Lemma~\ref{lem:tech-inequality})]
The proof is a straight-forward case analysis. Call $x$ (resp. $y$) `bad' if $|x| \not\in [1-\eps, 1+\eps]$. Also, $b(x)$ denotes a predicate which is $1$ if $x$ is bad and $0$ otherwise.\\
{\em Case 1.} $(x,y)$ is good. The upper bound in this case is clear (using Lemma~\ref{lem:brute-inequality}).\\
{\em Case 2.} Neither of $x, y$ are bad, but $xy>0$. Using Lemma~\ref{lem:brute-inequality}, we have $f(x,y) \le C\cdot 2^{p-1} + \eps$, which is what we want.\\
{\em Case 3.} At least one of $x, y$ are bad (i.e., one of $b(x), b(y)$ is 1). In this case Lemma~\ref{lem:brute-inequality} gives
\begin{align*}
f(x,y) &\le \frac{|x-y|^p + C \big( (1+|x|^p) (2^{p-1} -\del b(x)) + (1+|y|^p)(2^{p-1} - \del b(y)) \big)}{2+|x|^p + |y|^p} \\
& = C \cdot 2^{p-1} + \frac{|x-y|^p - C\big( \delta b(x) (1+|x|^p) + \delta b(y) (1+|y|^p) \big)}{2+|x|^p +|y|^p}
\end{align*}
Since $|x-y|^p \le 2^{p-1}(|x|^p + |y|^p)$, and one of $b(x), b(y) >0$, we can choose $C$ large enough (depending on $\del$), so that $f(x,y) \le C\cdot 2^{p-1}$.
\end{proof}

\paragraph{Soundness.} Assuming the lemma, let us see why the analysis of the {\sc No} case
follows. Suppose the graph has a Max-Cut value at most $\rho$, i.e., every cut has at most $\rho
\cdot nd/2$ edges. Now consider the vector $x$ which maximizes $g(x_0,
x_1, \dots, x_n)$. It is easy to see that we may assume $x_0 \neq 0$, thus we can scale the vector s.t. $x_0=1$. Let $S \subseteq
V$ denote the set of `good' vertices (i.e., vertices for which $|x_i|
\in (1-\eps, 1+\eps)$).

\begin{lemma}
The number of good edges is at most $\rho \cdot \frac{(|S|+n)d}{4}$.
\end{lemma}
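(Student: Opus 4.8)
The plan is to combine the edge-by-edge bound of Lemma~\ref{lem:tech-inequality} with the fact that the optimum $\mathbf{x}$ cannot do better than a suitable "integral" assignment. First I would observe that since $g(\mathbf{x})/d$ is, by Eq.~\eqref{eq:g-rewrite}, a weighted average of the quantities $f(x_i,x_j)$ over edges $i\sim j$ (with weights $2|x_0|^p+|x_i|^p+|x_j|^p = 2+|x_i|^p+|x_j|^p$ after setting $x_0=1$), I can bound $g(\mathbf{x})$ from above by plugging in Lemma~\ref{lem:tech-inequality}: every edge contributes at most $C\cdot 2^{p-1}$ times its weight, and a \emph{good} edge contributes an extra additive $(1+\eps)2^p$. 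Summing, $\sum_{i\sim j} f(x_i,x_j)(2+|x_i|^p+|x_j|^p) \le C 2^{p-1}\sum_{i\sim j}(2+|x_i|^p+|x_j|^p) + (1+\eps)2^p\cdot(\#\text{good edges})$.

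Next, I would relate $\#\text{good edges}$ to the MaxCut value in the {\sc No} case. The key point is that a good edge is, by definition, one where both endpoints have $|x_i|\in(1-\eps,1+\eps)$ and $x_ix_j<0$; such an edge "looks like" a cut edge of the graph induced on the good vertex set $S$, where we $2$-color $S$ according to $\operatorname{sign}(x_i)$. Since no cut in $G$ cuts more than $\rho(nd/2)$ edges, and more relevantly the sub-instance on $S$ together with the terms pulling each $x_i$ toward $\pm 1$ behaves like a cut on roughly $|S|$ vertices of degree $d$, the number of good (hence "cut-like") edges is at most $\rho$ times the relevant edge count. Counting the edges touching $S$: each vertex in $S$ has degree $d$, so the number of edges with at least one endpoint in $S$ is at most $|S|d$, but the cut structure that matters is on the $\tfrac{(|S|+n)d}{4}$-sized quantity appearing in the statement — I would make this precise by noting that the denominator weight $\sum_{i\sim j}(2+|x_i|^p+|x_j|^p)$ is essentially $(n+|S|)d/2 \cdot (\text{const})$ up to the $\eps$-slack, and that a MaxCut bound of $\rho$ caps the fraction of these that can be simultaneously "cut" and "good."

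The main obstacle I expect is the bookkeeping that converts "good edges form a cut in the good-vertex subgraph" into the clean bound $\rho\cdot\frac{(|S|+n)d}{4}$: one has to handle edges with exactly one good endpoint (which cannot be good, so they drop out) and argue that restricting the hard MaxCut instance to $S$ does not let the adversary cut more than a $\rho$-fraction — this uses regularity of $G$ crucially, since adding back the non-good vertices on the "wrong" side can only decrease the cut fraction, so any cut of the $S$-subgraph extends to a cut of $G$ of proportionally large value, contradicting the {\sc No} assumption if too many good edges were cut. Once that combinatorial step is in place, the arithmetic with the $\tfrac{(|S|+n)d}{4}$ normalization is routine, and the lemma follows by comparing the extra $(1+\eps)2^p$ contribution per good edge against the total weight.
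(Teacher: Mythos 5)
Your core idea --- that good edges form a cut in the subgraph induced on $S$ (two-coloring $S$ by $\operatorname{sign}(x_i)$), and that a large cut of $S$ would extend to a large cut of $G$, contradicting the {\sc No} assumption --- is exactly the paper's argument. But most of what you wrote around that idea is off-target or left as a gap.

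First, the opening paragraph about Lemma~\ref{lem:tech-inequality} and the weighted-average bound on $g(\mathbf{x})$ is irrelevant to this lemma. The good-edges lemma is a purely combinatorial statement about the graph $G$, the set $S$, and the signs of the $x_i$; it does not refer to the numeric values of the $x_i$ (beyond their signs and membership in $S$) and makes no use of the edge inequality. You have conflated this lemma with the soundness calculation that comes immediately after it, where Lemma~\ref{lem:tech-inequality} and the denominator weight do appear. In particular, your attempt to interpret $\frac{(|S|+n)d}{4}$ via the denominator weight $\sum_{i\sim j}(2+|x_i|^p+|x_j|^p)$ is a red herring: that quantity has nothing to do with where $\frac{(|S|+n)d}{4}$ comes from.

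Second, you flag the ``bookkeeping'' of the extension as the main obstacle, and then do not fill it in; this is the actual content of the proof. Concretely: suppose the good edges number more than $\rho \cdot \frac{(|S|+n)d}{4}$. The signs of $x_i$ give a cut of $S$ of that size. Now place each vertex of $V\setminus S$ greedily (or randomly) on the side of the cut that makes at least half of its $d$ incident edges (to already-placed vertices) cross; since $G$ is $d$-regular, this contributes at least $\frac{(n-|S|)d}{4}$ additional crossing edges. The resulting cut of $G$ then has more than
\[
\rho \cdot \frac{(|S|+n)d}{4} + \frac{(n-|S|)d}{4} \;=\; \frac{\rho n d}{2} + \frac{(1-\rho)(n-|S|)d}{4} \;\ge\; \frac{\rho n d}{2}
\]
edges, contradicting the {\sc No} assumption. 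Your phrase ``adding back the non-good vertices on the wrong side can only decrease the cut fraction'' is not the argument --- what is needed is that a \emph{good} placement of $V\setminus S$ adds at least $\frac{(n-|S|)d}{4}$ crossing edges, and the arithmetic identity above that converts the resulting total into a contradiction. Without these two steps, the proposal does not establish the lemma.
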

\begin{proof}
Recall that good edges have both end-points in $S$, and further the
corresponding $x$ values have opposite signs. Thus the lemma
essentially says that there is no cut in $S$ with $\rho \cdot
\frac{(|S|+n)d}{4}$ edges.

Suppose there is such a cut. By greedily placing the vertices of $V
\setminus S$ on one of the sides of this cut, we can extend it to a
cut of the entire graph with at least
\[ \rho \cdot \frac{(|S|+n)d}{4} + \frac{(n-|S|)d}{4} = \frac{\rho
nd}{2} + \frac{(1-\rho)(n-|S|)}{4} > \frac{\rho nd}{2} \]
edges, which is a contradiction. This gives the bound.
\end{proof}

Let $\num$ denote the numerator of Eq.\eqref{eq:g-rewrite}. We have
\begin{align*}
\num &= \sum_{i \sim j} f(x_i, x_j) (2+|x_i|^p + |x_j|^p) \\
       &\leq C \cdot 2^{p-1} \cdot \big(nd + d \sum_i |x_i|^p \big) +
\sum_{i \sim j, \text{ good}} (1+\eps)2^p \\
       &\leq Cd \cdot 2^{p-1} \cdot \big(n + \sum_i |x_i|^p \big) +
\frac{\rho d (n+|S|)}{4} \cdot 2^p (1+\eps).
\end{align*}
Now observe that the denominator is $n + \sum_i |x_i|^p \geq n +
|S|(1-\eps)^p$, from the definition of $S$.  Thus we obtain an upper
bound on $g(x)$
\[ g(x) \leq Cd \cdot 2^{p-1} + \frac{\rho d}{4}
 \cdot 2^p (1+\eps)(1-\eps)^{-p}. \]

\paragraph{Hardness factor.} 
In the {\sc Yes} case, there is clearly an assignment of $\pm 1$ to $x_i$ such that $g(\mathbf{x})$ is at least $Cd \cdot 2^{p-1} + \frac{\rho' d}{4} \cdot 2^p$. Thus if $\eps$ is small enough (this will make us pick $C$ which is large), the gap between the optimum values in the {\sc Yes} and {\sc No} cases can be made $\big( 1 + \frac{\Omega(1)}{C} \big)$, where the $\Omega(1)$ term is determined by the difference $\rho' - \rho$.
This proves that the $p$-norm is hard to approximate to some fixed constant factor.

\noindent {\em Note.} In the analysis, $\eps$ was chosen to be a small constant depending on
$p$ and the gap between $\rho$ and $\rho'$; $C$ is a constant chosen
large enough, depending on $\eps$.

\paragraph{The Instance.} We have argued about the hardness of computing the function $g(x_0,x_1,\dots,x_n)$ to some constant factor. This can be formulated as an instance of $p$-norm in a natural way. We describe this formally (though this is clear, the formal description will be useful when arguing about certain properties of the tensored instance which we need for proving hardness of $\norm{A}{\qtop}$ for $p < q$).

First we do a simple change of variable and let $z=n^{1/p}x_0$. Now, we construct the $5|E| \times (n+1)$ matrix $M$. For each edge $e=\{i,j\}$ in $E(G)$, we have five rows in $M$. Let the column indices run from $0 \leq \ell \leq n$.
\begin{equation*}
{\mathbf M_{e_1,\ell}}=  \begin{cases} 1 \text{ if }\ell=i &\text{ and } \qquad -1 \text{ if } \ell=j \\
			0 &\text{ otherwise}
\end{cases}
\end{equation*}
\begin{eqnarray*}
{\mathbf M_{e_2,\ell}}=  \begin{cases} n^{-1/p} \qquad \text{ if }\ell=0  &\text{ and}
			\qquad -1 \qquad \text{ if } \ell=i \\
			0 &\text{ otherwise} 
\end{cases}\\
{\mathbf M_{e_3,\ell}}=  \begin{cases} n^{-1/p} \qquad \text{ if }\ell=0  &\text{ and} 
			\qquad 1 \qquad \text{ if } \ell=i \\
			0 &\text{ otherwise} 
\end{cases}\\
\end{eqnarray*}

We have two similar rows for ${\mathbf M_{e_4,\ell}}$ and ${\mathbf M_{e_5,\ell}}$ 
where we have corresponding values with $j$ instead of $i$. It is easy to see that this 
matrix actually takes the same value $\pnorm{M}$ as $g$. Further in the {\sc Yes} case, 
there is a vector $x=(n^{1/p},x_1,x_2,\dots,x_n)$ with $x_i=\pm 1$, that attains 
the high value ($C.d.2^{p-1}+\rho'd.2^{p-2}$).

\subsection{Amplifying the gap by tensoring}\label{sec:tensoring}
We observe that the matrix $\ptop$-norm is multiplicative under tensoring. More precisely,
\begin{lemma}\label{lem:tensor}
Let $M$, $N$ be square matrices with dimensions $m \times m$ and $n
\times n$ respectively, and let $p \ge 1$. Then $\norm{M \otimes N}{p} = \norm{M}{p} \cdot
\norm{N}{p}$.
\end{lemma}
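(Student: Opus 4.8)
The plan is to prove both inequalities $\norm{M \otimes N}{p} \ge \norm{M}{p}\cdot\norm{N}{p}$ and $\norm{M \otimes N}{p} \le \norm{M}{p}\cdot\norm{N}{p}$ separately, using the basic identity that for vectors $x \in \rr^m$, $y \in \rr^n$, the tensor product satisfies $(M \otimes N)(x \otimes y) = (Mx) \otimes (Ny)$, together with the fact that $\norm{u \otimes v}{p} = \norm{u}{p}\cdot\norm{v}{p}$ (which follows since $|u_i v_j|^p$ summed over $i,j$ factors as $(\sum_i |u_i|^p)(\sum_j |v_j|^p)$).

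For the lower bound, first I would take near-optimal vectors $x$ for $M$ and $y$ for $N$, i.e.\ with $\norm{Mx}{p} = \norm{M}{p}\norm{x}{q}$ and $\norm{Ny}{p} = \norm{N}{p}\norm{y}{q}$ (here $q = p$). Plugging the test vector $x \otimes y$ into the definition of $\norm{M \otimes N}{p}$ and using the two multiplicativity facts above gives
\[
\norm{M \otimes N}{p} \ge \frac{\norm{(Mx)\otimes(Ny)}{p}}{\norm{x \otimes y}{p}} = \frac{\norm{Mx}{p}\norm{Ny}{p}}{\norm{x}{p}\norm{y}{p}} = \norm{M}{p}\norm{N}{p}.
\]

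The upper bound is the step I expect to be the main obstacle, since a general vector $z \in \rr^{mn}$ is not a pure tensor $x \otimes y$. The approach is to index the coordinates of $z$ as $z = (z_{1}, \dots, z_{m})$ where each block $z_{k} \in \rr^n$, so that $z$ can be written as $\sum_k \mathbf{e}_k \otimes z_k$. Then $(M \otimes N)z$ has its $(i,\cdot)$ block equal to $N\big(\sum_k m_{ik} z_k\big)$, and the key is to bound $\norm{(M\otimes N)z}{p}^p = \sum_i \norm{N(\sum_k m_{ik} z_k)}{p}^p \le \norm{N}{p}^p \sum_i \norm{\sum_k m_{ik} z_k}{q}^p$. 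Now think of forming the matrix whose $k$-th column is $z_k$ and note that $\sum_k m_{ik} z_k$ is (the transpose of) applying $M$ to the rows; applying the definition of $\norm{M}{p}$ coordinate-wise and an interchange-of-norms argument (using $q = p$, so all the exponents are the same and Minkowski/Fubini applies cleanly) yields $\sum_i \norm{\sum_k m_{ik}z_k}{p}^p \le \norm{M}{p}^p \sum_k \norm{z_k}{p}^p = \norm{M}{p}^p \norm{z}{p}^p$. Combining gives $\norm{(M\otimes N)z}{p} \le \norm{M}{p}\norm{N}{p}\norm{z}{p}$, and taking the supremum over $z$ finishes the proof. I would remark that the clean interchange only works because $p = q$; this is exactly why the lemma is stated for the $\ptop$ norm and does not generalize to $\qtop$ with $q \neq p$.
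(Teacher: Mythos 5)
Your proposal is correct and follows essentially the same route as the paper's proof: the lower bound via a pure-tensor test vector $x \otimes y$, and the upper bound by blocking $z$ into the columns of an $n \times m$ matrix and applying the operator norm bounds of $N$ and $M$ in succession, with the coordinate-wise interchange valid precisely because the inner and outer exponents coincide ($p=q$). Your closing remark about why the argument breaks for $q \neq p$ matches the paper's own observation.
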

The tensor product $M \otimes N$ is
defined in the standard way -- we think of it as an $m \times m$
matrix of blocks, with the $i,j$th block being a copy of $N$ scaled by
$m_{ij}$. It is well-known that eigenvalues ($p=2$) mutliply under tensoring. We note that it is crucial that we consider $\norm{A}{p}$. Matrix norms $\norm{A}{\qtop}$ for $p \neq q$ do not in general multiply upon tensoring.
\newcommand{\subvec}[2]{#1^{(#2)}}
\begin{proof}
Let $\lambda(A)$ denote the $p$-norm of a matrix $A$. 
Let us first show the easy direction, that $\lambda(M \otimes N) \geq \lambda(M) \cdot
\lambda(N)$. Suppose $x, y$ are the vectors which `realize' the
$p$-norm for $M, N$ respectively. Then
\begin{align*}
\norm{(M \otimes N)(x \otimes y)}{p}^p &= \sum_{i,j} |(M_i \cdot x)(N_j
\cdot y)|^p\\ &= \Big( \sum_{i}(M_i \cdot x)^p \Big) \Big( \sum_{j}
(N_j \cdot y)^p \Big)\\
&= \lambda(M)^p \cdot \lambda(N)^p
\end{align*}
Also $\norm{x \otimes y}{p} = \norm{x}{p} \cdot \norm{y}{p}$, thus the
inequality follows.

Let us now show the other direction, i.e., $\lambda(M \otimes N) \leq
\lambda(M) \cdot \lambda(N)$. Let $x,z$ be $mn$ dimensional vectors such $z=(A \otimes B) x$. We will think of $x,z$ as being divided into $m$ blocks of size $n$ each. Further by
$\subvec{x}{i}$ (and $\subvec{z}{i}$), we denote the vector in $\mathbb{R}^n$ which is formed by
the $i$th block of $x$ (resp. $z$). 

For ease of notation, let us consider $n \times m$ matrix $X$ with the $i^{th}$ column of $X$ being vector $\subvec{x}{i}$ (and similarly define $Z$). At the expense of abusing notation, let $X_j^t$ refer to the $j^{th}$ row of matrix $X$. 
Also, let the element-wise $p$-norm of matrix $M$ be defined as 
\[|M|_{\odot p}=\big(\sum_{i,j} m_{ij}^p \big)^{1/p}  \label{eq:frobnorm}\]

It is easy to observe that $ Z^t = A X^t BT$. Further, $\norm{z}{p}=|Z|_{\odot p}$. 

We now expand out $Z$ and rearrange the terms to separate out the operations of $B$ and $A$, in order to bound $|Z|_{\odot p}$ using the $p$-norms of $A$ and $B$.
Hence, we have
\begin{eqnarray*}
\norm{z}{p}^p &=& |A X^t B^t|_{\odot p}^p \\
&=&  \sum_{i=1}^m \sum_{j=1}^n \Big( (A_i X_1, A_i X_2,\dots, A_iX_n)B_j^t \Big)^p 
\end{eqnarray*}

But from definition, $ \norm{Mx}{p}^p = \sum_k (B_k x)^p \leq \lambda(M) \norm{x}{p}^p $.

Hence, by applying the operator $p$-norm bound of $B$,
\begin{align*} 
\norm{z}{p}^p &\leq \lambda(B)^p \sum_i \norm{(A_i X_1, A_i X_2,\dots, A_i X_n)}{p}^p \\
& = \lambda(B)^p \sum_{i=1}^m \sum_{k=1}^n (A_i X_k)^p = \lambda(B)^p \sum_{k}\sum_i (A_i X_k)^p \\
& \leq \lambda(B)^p \lambda(A)^p \sum_k \norm{X_k}{p}^p\\
&= \lambda(A)^p \lambda(B)^p |X|_{\odot p}^p 
\end{align*}
Since $|X|_{\odot p}=\norm{x}{p}=1$, we have 
$\norm{z}{p} \leq \lambda(A) \lambda(B)$.

\end{proof}

{\em Note.} We note that the tensoring result is stated for square matrices, while the instances above are rectangular. This is not a problem, because we can pad $0$s to the matrix to make it square without changing the value of the norm.

\paragraph{The tensored matrix and amplification}
Consider any constant $\gamma>0$. We consider the instance of the matrix $M$ obtained in the proof of Proposition \ref{prop:ptop}, and repeatedly tensor it with $M$ $k=\log_{\eta}\gamma$ times to obtain $M'=M^{\otimes k}$. From Lemma~\ref{lem:tensor}, there exist $\tau_C$ and $\tau_S$ where $\tau_C/\tau_S > \gamma$ such that in the {\sc No} case, for every vector $\mathbf{y} \in \rr^{(n+1)k}$, $\pnorm{Ay} \leq \tau_S$. 

Further, in the {\sc Yes} case, there is a vector $y'=(n^{1/p},x_1,x_2,\dots,x_n)^{\otimes k}$ where $x_i =\pm 1$ (for $i= 1,2,\dots n$) such that $\pnorm{M'y'} \geq \tau_C$. 

\emph{Note:}
Our techniques work even when we take the tensor product $\log^c n$ for some constant $c$. Thus we can conclude:
\begin{theorem} \label{thm:phardness}
For any $\gamma>0$ and $p \geq 2$, it is NP-hard to approximate the $p$-norm of a matrix within a factor $\gamma$. Also, it is hard to approximate the matrix $p$-norm to a factor of $\Omega(2^{(\log n)^{1-\eps}})$ for any constant $\eps>0$, unless $\mathsf{NP} \subseteq \mathsf{DTIME}(2^{\text{polylog}(n))})$.
\end{theorem}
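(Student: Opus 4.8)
The plan is to combine the constant-factor hardness of Proposition~\ref{prop:ptop} with the tensoring identity of Lemma~\ref{lem:tensor}, and to amplify the gap by choosing the number of tensor powers according to which of the two conclusions we are after.

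First I would recall the instance constructed from the MaxCut gap problem in the proof of Proposition~\ref{prop:ptop}: after padding with zero rows and columns to make it square (which leaves the $p$-norm unchanged, as noted following Lemma~\ref{lem:tensor}) we obtain an $N_0 \times N_0$ matrix $M$ with $N_0 = O(|E|)$ such that $\pnorm{M} \ge \eta\tau$ in the {\sc Yes} case and $\pnorm{M} \le \tau$ in the {\sc No} case, where $\eta = 1 + \Omega(1) > 1$ is the fixed constant of Proposition~\ref{prop:ptop} and $\tau$ is a scalar depending only on the instance. Now form $M' = M^{\otimes k}$, an $N_0^k \times N_0^k$ matrix. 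By Lemma~\ref{lem:tensor}, $\pnorm{M'} = \pnorm{M}^k$, so in the {\sc Yes} case $\pnorm{M'} \ge (\eta\tau)^k$ and in the {\sc No} case $\pnorm{M'} \le \tau^k$; hence the multiplicative gap is exactly $\eta^k$. The {\sc Yes}-case witness is the explicit vector $(n^{1/p},x_1,\dots,x_n)^{\otimes k}$ with $x_i = \pm 1$, which realizes the easy direction of Lemma~\ref{lem:tensor}, so the reduction is genuinely constructive.

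For the first statement, given $\gamma > 0$ I set $k = \lceil \log_\eta \gamma \rceil$, a constant; then $M'$ has dimension $N_0^k = \mathrm{poly}(n)$ and gap $\eta^k \ge \gamma$, so any polynomial-time $\gamma$-approximation of $\pnorm{\cdot}$ would decide the MaxCut gap problem, giving NP-hardness. For the second statement I instead take $k = (\log n)^c$ for a constant $c = c(\eps)$ to be fixed. Then $M'$ has dimension $N := N_0^k = 2^{O(k\log n)} = 2^{O((\log n)^{c+1})}$, so $\log N = O((\log n)^{c+1})$, while the gap is $\eta^k = 2^{\Omega((\log n)^c)}$. Choosing $c$ large enough that $(c+1)(1-\eps) < c$ — any $c > 1/\eps - 1$ works — gives $(\log N)^{1-\eps} = O\big((\log n)^{(c+1)(1-\eps)}\big) = o\big((\log n)^c\big)$, so for large $n$ the gap exceeds $2^{(\log N)^{1-\eps}}$. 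The reduction runs in time $\mathrm{poly}(N) = 2^{\mathrm{polylog}(n)}$, so an approximation algorithm achieving this factor in time $\mathrm{poly}(N)$ would place $\mathsf{NP} \subseteq \mathsf{DTIME}(2^{\mathrm{polylog}(n)})$.

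The technical heart is already isolated in Lemma~\ref{lem:tensor} (exact multiplicativity of the matrix $p$-norm under tensoring) and in the fact from Proposition~\ref{prop:ptop} that the {\sc Yes}-case optimum is attained at an explicit $\pm 1$ vector (which makes the tensored witness explicit). Consequently the remaining work is bookkeeping: checking that zero-padding to square matrices is harmless, that the gap truly scales as $\eta^k$ rather than degrading, and balancing $k$, $c$, and $\eps$ so that simultaneously the instance size stays $2^{\mathrm{polylog}(n)}$ and the gap beats $2^{(\log N)^{1-\eps}}$. I expect the parameter balancing in the almost-polynomial case — the inequality $(c+1)(1-\eps) < c$ relating $c$ to $\eps$ — to be the one step that needs care, though it is not a genuine obstacle.
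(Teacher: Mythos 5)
Your proposal is correct and follows essentially the same route as the paper: constant-factor hardness from the MaxCut reduction (Proposition~\ref{prop:ptop}), exact multiplicativity under tensoring (Lemma~\ref{lem:tensor}), and $k$ chosen as a constant for the first claim and as $(\log n)^{c}$ for the quasi-polynomial claim. The only difference is that the paper leaves the parameter balancing implicit (``our techniques work even when we take the tensor product $\log^c n$''), whereas you carry it out explicitly, correctly arriving at the condition $c>1/\eps-1$ and the $2^{\mathrm{polylog}(n)}$ bound on the reduction's running time.
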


\paragraph{Properties of the tensored instance:}
We now establish some structure about the tensored instance, which we will use crucially for the hardness of $\qtop$ norm. Let the entries in vector $y'$ be indexed by $k$-tuple $I=(i_1,i_2,\ldots,i_k)$ where $i_k \in \{0,1,\dots,n)$. It is easy to see that 
\[y'_I=\pm n^{w(I)/p} \text{ where }w(I) \text{ number of } 0s \text{ in tuple }\]
Let us introduce variables $x_I = n^{-w(I)/p}y_I$ where $w(I)=\text{ number of } 0s \text{ in tuple }I$. It is easy to observe that there is a matrix $B$ such that
\begin{equation*}
\frac{\pnorm{M'y}}{\pnorm{y}} = \frac{\pnorm{B \mathbf{x}}}{\sum_I n^{w(I)}|x_I|^p}=g'(x)
\end{equation*}
Further, it can also be seen that in the {\sc Yes} case, there is a $\pm 1$ assignment for $x_I$ which attains the value $g'(x)=\tau_C$. 

\subsection{Approximating $\norm{A}{\qtop}$ when $p \neq q$.}\label{sec:lb:ptoq}

Let us now consider the question of approximating $\norm{A}{\qtop}$. The idea is to use the hardness of approximating $\norm{A}{\ptop}$. We observed in the previous section that the technique of amplifying hardness for computing the $\qtop$-norm by tensoring a (small) constant factor hardness does not work when $q \ne p$. However, we show that we can obtain such amplified label-cover like hardness if the instance has some additional structure. In particular, we show the instances that we obtain from the tensoring the hard instances of $\pnorm{A}$ can be transformed to give such hard instances for $\norm{A}{\qtop}$.

We illustrate the main idea by first showing a (small) constant factor hardness: let us start with the following maximization problem (which is very similar to Eqn.\eqref{eq:g-rewrite})
\begin{equation}\label{eq:g-qtop}
 g(x_0, x_1, \dots, x_n) = \frac{ \Big( \sum_{i \sim j} |x_i - x_j|^p + Cd
\cdot \big( \sum_i |x_0 + x_i|^p + |x_0 - x_i|^p \big) \Big)^{1/p} }{ \big( n |x_0|^q +
\sum_i |x_i|^q \big)^{1/q}}.
\end{equation}
Notice that $x_0$ is now `scaled differently' than in Eq.\eqref{eq:g-rewrite}. This is crucial.
Now, in the {\sc Yes} case, we have
\[ \max_{\mathbf{x}} ~ g(\mathbf{x}) \geq \frac{ \big( \rho'(nd/2) \cdot 2^p + Cnd \cdot 2^p \big)^{1/p}}{ (2n)^{1/q} }. \]
Indeed, there exists a $\pm 1$ solution which has value at least the RHS. Let us write $\num$ for the numerator of Eq.\eqref{eq:g-qtop}. Then
\[ g(\mathbf{x}) = \frac{\num}{\big( n |x_0|^p + \sum_i |x_i|^p \big)^{1/p} } \times \frac{ \big( n |x_0|^p + \sum_i |x_i|^p \big)^{1/p} }{ \big( n |x_0|^q + \sum_i |x_i|^q \big)^{1/q} }. \]
Suppose we started with a {\sc No} instance. The proof of the $q=p$ case implies that the first term in this product is at most (to a $(1+\eps)$ factor)
\[ \frac{ \big( \rho (nd/2) \cdot 2^p + Cnd \cdot 2^p \big)^{1/p}}{ (2n)^{1/p} }. \]
Now, we note that the second term is at most $(2n)^{1/p} / (2n)^{1/q}$. This follows because for any vector $y \in \mathbb{R}^n$, we have $\norm{y}{p}/\norm{y}{q} \le n^{(1/p)- (1/q)}$. We can use this with the $2n$-dimensional vector $(x_0, \dots, x_0, x_1, x_2, \dots, x_n)$ to see the desired claim.

From this it follows that in the {\sc No} case, the optimum is at most (upto an $(1+\eps)$ factor)
\[ \frac{ \big( \rho (nd/2) \cdot 2^p + Cnd \cdot 2^p \big)^{1/p}}{ (2n)^{1/q} }. \]
This proves that there exists an $\alpha>1$ s.t. it is NP-hard to approximate $\norm{A}{\qtop}$ to a factor better than $\alpha$.

A key property we used in the above argument is that in the {\sc Yes} case, there exists a $\pm 1$ solution for the $x_i$ ($i \ge 0$) which has a large value. It turns out that this is the only property we need. More precisely, suppose $A$ is an $n \times n$ matrix, let $\alpha_i$ be positive integers (we will actually use the fact that they are integers, though it is not critical). Now consider the optimization problem $\max_{\mathbf{y} \in \rr^n} g(\mathbf{y})$, with
\begin{equation}\label{eq:opt-problem}
g(\mathbf{y}) = \frac{\norm{A\mathbf{y}}{p}}{(\sum_i \alpha_i |y_i|^p)^{1/p}}
\end{equation}

In the previous section, we established the following claim from the proof of Theorem~\ref{thm:phardness}.
\begin{claim} \label{thm:phard-aux}
For any constant $\gamma>1$, there exist thresholds $\tau_C$ and $\tau_S$ with $\tau_C/\tau_S > \gamma$, such that it is NP-hard to distinguish between:
\begin{quote}
{\sc Yes} case. There exists a $\pm 1$ assignment to $y_i$ in \eqref{eq:opt-problem} with value at least $\tau_C$, and \\
{\sc No} case.~ For all $\mathbf{y} \in \rr^n$, $g(\mathbf{y}) \le \tau_S$.
\end{quote}
\end{claim}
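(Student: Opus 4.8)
The plan is to read Claim~\ref{thm:phard-aux} off directly from the tensored instance built in Section~\ref{sec:tensoring}: that construction was engineered precisely so that, after a diagonal change of variables, the $\ptop$-norm ratio becomes an objective of the form \eqref{eq:opt-problem} with integer weights. So first I would recall the setup of Theorem~\ref{thm:phardness}: starting from the MaxCut-based matrix $M$ of Proposition~\ref{prop:ptop} (with its constant gap $\eta>1$), we form $M'=M^{\otimes k}$, and by Lemma~\ref{lem:tensor} the $p$-norm multiplies, $\norm{M'}{p}=\norm{M}{p}^k$. Taking $k=\lceil\log_\eta\gamma\rceil$ yields thresholds $\tau_C,\tau_S$ with $\tau_C/\tau_S>\gamma$ such that it is NP-hard to distinguish $\norm{M'\mathbf{y}}{p}\ge\tau_C\norm{\mathbf{y}}{p}$ (for a structured $\mathbf{y}$) from $\norm{M'\mathbf{y}}{p}\le\tau_S\norm{\mathbf{y}}{p}$ for all $\mathbf{y}$.

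Next I would spell out the change of variables from the ``Properties of the tensored instance'' paragraph. Indexing coordinates by $k$-tuples $I=(i_1,\dots,i_k)$, $i_\ell\in\{0,\dots,n\}$, the {\sc Yes} vector is $y'=(n^{1/p},x_1,\dots,x_n)^{\otimes k}$ with $x_i=\pm1$, so $y'_I=\pm n^{w(I)/p}$ where $w(I)$ counts the zeros in $I$. Substituting $\mathbf{y}=D\mathbf{x}$ for the invertible diagonal matrix $D$ with $D_{II}=n^{w(I)/p}$ gives $\norm{\mathbf{y}}{p}^p=\sum_I n^{w(I)}|x_I|^p$ and $\norm{M'\mathbf{y}}{p}=\norm{B\mathbf{x}}{p}$ with $B:=M'D$. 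Hence $\norm{M'\mathbf{y}}{p}/\norm{\mathbf{y}}{p}$ equals exactly the objective $g$ of \eqref{eq:opt-problem} with $A:=B$ and $\alpha_I:=n^{w(I)}$, which are positive integers (this is the only place the scaling exponents $n^{w(I)/p}$ matter: $p$ copies of them combine to the integer $n^{w(I)}$).

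Then I would transfer the two cases across this bijection. In the {\sc Yes} case, $x_I=D_{II}^{-1}y'_I=\pm1$, so the structured vector becomes a genuine $\pm1$ assignment achieving $g(\mathbf{x})=\tau_C$. In the {\sc No} case, $\norm{M'\mathbf{y}}{p}\le\tau_S\norm{\mathbf{y}}{p}$ holds for every $\mathbf{y}$, and since $\mathbf{x}=D^{-1}\mathbf{y}$ ranges over all of $\rr^n$, we get $g(\mathbf{x})\le\tau_S$ for all $\mathbf{x}$ — exactly the dichotomy in the claim, with $\tau_C/\tau_S>\gamma$. (For the quasi-polynomial strengthening one takes $k=\log^c n$; Lemma~\ref{lem:tensor} applies verbatim and the instance stays quasi-polynomial in size, as in Theorem~\ref{thm:phardness}.)

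There is no real obstacle here — the content is a faithful repackaging — but the two points to check carefully are: (i) the {\sc No}-case guarantee really is a statement about \emph{all} vectors, namely $\norm{M'}{p}\le\tau_S$, so it survives the linear change of variables; and (ii) replacing $M'$ by $B=M'D$ neither blows up nor degenerates the instance, which holds because $D$ is invertible with strictly positive diagonal. Both are immediate, so the claim follows.
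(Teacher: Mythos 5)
Your proposal is correct and follows essentially the same route the paper takes: the paper's one-line proof ("Follows from the structure of the Tensor product instance") is exactly the content of its preceding paragraph, which introduces the diagonal substitution $x_I=n^{-w(I)/p}y_I$, the matrix $B$, and the integer weights $\alpha_I=n^{w(I)}$, just as you do. The only small detail you glossed over is the square-padding of the rectangular instance so that Lemma~\ref{lem:tensor} applies (the paper notes it explicitly), but it is cosmetic and does not affect the argument.
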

\emph{Proof.} Follows from the structure of Tensor product instance.\\

We can now show that Claim~\ref{thm:phard-aux} implies the desired result.
\begin{theorem} \label{thm:qtop-hard}
It is NP-hard to approximate $\norm{A}{\qtop}$ to any fixed constant $\gamma$ for $q \ge p >2$.
\end{theorem}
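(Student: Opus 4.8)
The plan is to reduce from Claim~\ref{thm:phard-aux}, which supplies (for any constant $\gamma>1$) a matrix $A$, positive integers $\alpha_1,\dots,\alpha_n$, and thresholds with $\tau_C/\tau_S>\gamma$ such that distinguishing the {\sc Yes} and {\sc No} cases for the objective $g(\mathbf y)=\norm{A\mathbf y}{p}/(\sum_i\alpha_i|y_i|^p)^{1/p}$ is NP-hard, with a $\pm1$ witness in the {\sc Yes} case. I would turn this into a genuine $\qtop$ matrix-norm instance in two moves: first replace the exponent $p$ in the denominator by $q$, so the target objective becomes $\tilde g(\mathbf y)=\norm{A\mathbf y}{p}/(\sum_i\alpha_i|y_i|^q)^{1/q}$ (exactly as in the warm-up of Eq.~\eqref{eq:g-qtop}, where the $x_0$-block was already weighted with exponent $q$); and then use the integrality of the $\alpha_i$ to clear the weights by column duplication, letting $A'$ be the $n\times W$ matrix ($W:=\sum_i\alpha_i$) obtained by replacing the $i$-th column $A_{\cdot i}$ by $\alpha_i$ copies of $A_{\cdot i}/\alpha_i$.

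The key structural step is to verify $\norm{A'}{\qtop}=\sup_{\mathbf y}\tilde g(\mathbf y)$. For ``$\ge$'', given $\mathbf y$ I set the $\alpha_i$ copies in block $i$ all equal to $y_i$, obtaining $\mathbf z$ with $A'\mathbf z=A\mathbf y$ and $\norm{\mathbf z}{q}^q=\sum_i\alpha_i|y_i|^q$, so $\norm{A'\mathbf z}{p}/\norm{\mathbf z}{q}=\tilde g(\mathbf y)$. For ``$\le$'', given any $\mathbf z$ I let $s_i$ be the sum of the entries of $\mathbf z$ in block $i$ and set $y_i:=s_i/\alpha_i$; then $A'\mathbf z=A\mathbf y$, while convexity of $t\mapsto|t|^q$ (Jensen within each block of size $\alpha_i$) gives $\norm{\mathbf z}{q}^q\ge\sum_i\alpha_i^{1-q}|s_i|^q=\sum_i\alpha_i|y_i|^q$, hence $\norm{A'\mathbf z}{p}/\norm{\mathbf z}{q}\le\tilde g(\mathbf y)$. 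I would also note that in the tensored construction the weights have the form $n^{w(I)}$ with $w(I)$ at most the number of tensor powers $k=O(\log\gamma)$, so $W$ is polynomially bounded and $A'$ has polynomial size (and $W$ stays quasi-polynomial when $k$ is polylogarithmic).

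With the equivalence in hand, reading off the gap is routine. In the {\sc Yes} case the $\pm1$ witness has $\sum_i\alpha_i|y_i|^p=\sum_i\alpha_i|y_i|^q=W$, so $\norm{A\mathbf y}{p}\ge\tau_C W^{1/p}$ and thus $\norm{A'}{\qtop}\ge\tilde g(\mathbf y)\ge\tau_C\,W^{1/p-1/q}$. In the {\sc No} case I would split, for every $\mathbf y$,
\[
\tilde g(\mathbf y)=\frac{\norm{A\mathbf y}{p}}{\big(\sum_i\alpha_i|y_i|^p\big)^{1/p}}\cdot\frac{\big(\sum_i\alpha_i|y_i|^p\big)^{1/p}}{\big(\sum_i\alpha_i|y_i|^q\big)^{1/q}}\le\tau_S\cdot W^{1/p-1/q},
\]
bounding the first factor by $g(\mathbf y)\le\tau_S$ and the second by the weighted form of the inequality $\norm{v}{p}/\norm{v}{q}\le(\text{dimension})^{1/p-1/q}$ for $q\ge p$ (applied to the $W$-dimensional vector listing each $y_i$ with multiplicity $\alpha_i$). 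Hence $\norm{A'}{\qtop}\le\tau_S\,W^{1/p-1/q}$, and the two cases are separated by the factor $\tau_C/\tau_S>\gamma$; since $\gamma$ is an arbitrary constant this proves the theorem, while taking a polylogarithmic number of tensor powers in Claim~\ref{thm:phard-aux} upgrades it to a $2^{(\log n)^{1-\eps}}$-factor hardness under $\mathsf{NP}\not\subseteq\mathsf{DTIME}(2^{\mathrm{polylog}(n)})$. The step I expect to require the most care is the ``$\le$'' direction of $\norm{A'}{\qtop}=\sup_{\mathbf y}\tilde g(\mathbf y)$, i.e.\ checking that equal splitting is optimal for the duplicated instance — precisely where $q\ge1$ (convexity of $|t|^q$) and, for the {\sc No}-case norm comparison, $q\ge p$ are used — together with confirming that the bookkeeping weights $\alpha_i$ stay small enough to keep the reduction efficient.
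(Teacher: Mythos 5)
Your argument is correct and is essentially the paper's own: start from Claim~\ref{thm:phard-aux}, pass from $g$ to the $q$-denominator objective $h$, multiply and divide by the $p$-weighted denominator, bound the ratio factor by $(\sum_i\alpha_i)^{1/p-1/q}$ using integrality of the $\alpha_i$ and $\norm{u}{p}/\norm{u}{q}\le d^{1/p-1/q}$, and read off the gap. The one place where you supply more than the paper does is the final step of realizing $\max_{\mathbf y}h(\mathbf y)$ as an actual $\qtop$ norm: the paper just says ``as earlier,'' implicitly meaning a diagonal change of variables $z_i=\alpha_i^{1/q}y_i$ (so $A'=AD^{-1}$ with $D=\mathrm{diag}(\alpha_i^{1/q})$), whereas you use column duplication into $W=\sum_i\alpha_i$ columns plus a Jensen argument to show equal splitting is optimal. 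Both are valid; the diagonal rescaling is a one-line exact identity that does not need $\alpha_i\in\mathbb{Z}$, while your duplication argument makes the optimality of the $\pm1$ witness and the role of convexity of $|t|^q$ ($q\ge 1$) more explicit, at the cost of checking that $W$ stays polynomially (or quasi-polynomially) bounded --- which you correctly do.
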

\begin{proof}
As in previous proof (Eq.\eqref{eq:g-qtop}), consider the optimization problem $\max_{\mathbf{y} \in \rr^n} h(\mathbf{y})$, with
\begin{equation}\label{eq:opt-problem-qtop}
h(\mathbf{y}) = \frac{\norm{A\mathbf{y}}{p}}{(\sum_i \alpha_i |y_i|^q)^{1/q}}
\end{equation}
By definition,
\begin{equation} \label{eq:h-g-relation}
h(\mathbf{y}) = g(\mathbf{y}) \cdot \frac{(\sum_i \alpha_i |y_i|^p)^{1/p}}{(\sum_i \alpha_i |y_i|^q)^{1/q}}.
\end{equation}
{\bf Completeness.} Consider the value of $h(\mathbf{y})$ for $A, \alpha_i$ in the {\sc Yes} case for Claim~\ref{thm:phard-aux}. Let $\mathbf{y}$ be a $\pm 1$ solution with $g(\mathbf{y}) \ge \tau_C$. Because the $y_i$ are $\pm 1$, it follows that 
\[ h(\mathbf{y}) \ge \tau_C \cdot \big( \sum_i \alpha_i \big)^{(1/p)-(1/q)}. \]
{\bf Soundness.} Now suppose we start with an $A, \alpha_i$ in the {\sc No} case for Claim~\ref{thm:phard-aux}.

First, note that the second term in Eq.\eqref{eq:h-g-relation} is at most $\big( \sum_i \alpha_i \big)^{(1/p)-(1/q)}$. To see this, we note that $\alpha_i$ are positive integers. Thus by considering the vector $(y_1, \dots, y_1, y_2, \dots, y_2, \dots )$, (where $y_i$ is duplicated $\alpha_i$ times), and using $\norm{u}{p} / \norm{u}{q} \le d^{(1/p) - (1/q)}$ for $u \in \rr^d$, we get the desired inequality.

This gives that for all $\mathbf{y} \in \rr^n$,
\[ h(\mathbf{y}) \le g(\mathbf{y}) \cdot \big( \sum_i \alpha_i \big)^{(1/p) - (1/q)} \le \tau_S \cdot \big( \sum_i \alpha_i \big)^{(1/p) - (1/q)}. \]

This proves that we cannot approximate $h(\mathbf{y})$ to a factor better than $\tau_C / \tau_S$, which can be made an arbitrarily large constant by Claim~\ref{thm:phard-aux}. This finishes the proof, because the optimization problem $\max_{\mathbf{y} \in \rr^n} h(\mathbf{y})$ can be formulated as a $\qtop$ norm computation for an appropriate matrix as earlier.
\end{proof}

Note that this hardness instance is not obtained by tensoring the $\qtop$ norm hardness instance. It is instead obtained by considering the $\norm{A}{p}$ hardness instance and transforming it suitably. 
\subsection{Approximating $\norm{A}{\infty \mapsto p}$}
\newcommand{\infp}{\infty \mapsto p}
\newcommand{\zz}{\mathbb{Z}}
The problem of computing the $\infp$ norm of a matrix $A$ turns out to have a very natural and elegant statement in terms of column vectors of the matrix $A$. We first introduce the following problem:
\begin{defn}[Longest Vector Problem]\label{defn:lvp}
Let $\mathbf{v_1}, \mathbf{v_2},\dots,\mathbf{v_n}$ be vectors over $\rr$. The Longest Vector problem asks for the 
\[\max_{\mathbf{x} \in \{-1,1\}^n} \norm{\sum_i x_i \mathbf{v_i}}{p} \]
\end{defn}
Note that this problem differs from the well-studied Shortest Vector Problem \cite{svp} for lattices,which has received a lot of attention in the cryptography community over the last decade \cite{svp-crypto1}. The shortest vector problem asks for {\em minimizing} the same objective in Definition~\ref{defn:lvp} when $x_i \in \mathbb{Z}$.  

We now observe that computing the $\infp$ norm of the matrix is equivalent to finding the length of the longest vector, where the vectors $\mathbf{v_i}$ are the columns of $A$.
\begin{obser}
Computing the $\norm{A}{\infp}$ norm of a matrix is equivalent to computing the length of the \emph{Longest vector problem} where the vectors are the column vectors of $A$. 
\end{obser}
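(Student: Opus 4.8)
The plan is to exhibit a direct correspondence between the variational characterization of $\norm{A}{\infp}$ and the Longest Vector objective, using the fact that the $\ell_\infty$ unit ball is the hypercube $\{-1,1\}^n$ together with the observation that the objective is maximized at a vertex. First I would write out the definition: $\norm{A}{\infp} = \max_{x \neq \vec 0} \norm{Ax}{p} / \norm{x}{\infty}$, and since the ratio is scale-invariant, restrict to $\norm{x}{\infty} \le 1$, i.e. $x \in [-1,1]^n$. So $\norm{A}{\infp} = \max_{x \in [-1,1]^n} \norm{Ax}{p}$. The next step is to identify $Ax$ with a combination of columns: if $\mathbf{v_1}, \dots, \mathbf{v_n}$ denote the columns of $A$, then $Ax = \sum_i x_i \mathbf{v_i}$, so the quantity to maximize is $\norm{\sum_i x_i \mathbf{v_i}}{p}$ over $x \in [-1,1]^n$.

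The key step is then to argue that this maximum over the solid cube $[-1,1]^n$ is attained at a vertex, i.e. at some $x \in \{-1,1\}^n$, which is exactly the Longest Vector objective from Definition~\ref{defn:lvp}. For this I would invoke convexity: the map $x \mapsto \norm{\sum_i x_i \mathbf{v_i}}{p}$ is a composition of an affine map with the norm $\norm{\cdot}{p}$ (for $p \ge 1$), hence convex on $\rr^n$. A convex function on a polytope attains its maximum at an extreme point, and the extreme points of $[-1,1]^n$ are precisely the $2^n$ sign vectors $\{-1,1\}^n$. Therefore $\max_{x \in [-1,1]^n} \norm{\sum_i x_i \mathbf{v_i}}{p} = \max_{x \in \{-1,1\}^n} \norm{\sum_i x_i \mathbf{v_i}}{p}$, which establishes the equivalence in both directions (the reduction is trivially poly-time and dimension-preserving, so the two computational problems are equivalent).

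I do not anticipate a serious obstacle here; the only thing to be slightly careful about is making the two conventions match — namely that ``$\norm{A}{\infp}$ of the matrix'' and ``Longest Vector of the columns of $A$'' refer to the same numerical value and not, say, to a value off by a normalization factor. This is handled by the scale-invariance step above (dividing by $\norm{x}{\infty}$ and then noting the optimal $x$ can be taken with $\norm{x}{\infty}=1$). One could alternatively phrase the vertex-maximization step without explicitly citing convexity: fix all coordinates but $x_i$; then $\norm{A x}{p}$ as a function of $x_i$ alone is convex (again a norm of an affine function of one real variable), so it is maximized at $x_i \in \{-1,1\}$; iterating over all coordinates drives any optimal $x$ to a vertex without decreasing the objective. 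Either argument is a couple of lines, so the Observation follows immediately.
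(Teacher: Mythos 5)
Your proposal is correct and follows essentially the same route as the paper's proof: rewrite $Ax$ as $\sum_i x_i \mathbf{v_i}$ and observe the maximum over the $\ell_\infty$ ball is attained with $|x_i|=1$ for all $i$. You simply spell out the convexity/vertex argument that the paper leaves implicit.
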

\begin{proof}
First note that $\pnorm{A\mathbf{x}}=\pnorm{\sum_i x_i \mathbf{a_i}}$. The observation follows by noticing that this is maximized when $|x_i|=1$ for all $i$. 
\end{proof}

The $\infp$ norm of the matrix also seems like a natural extension of the Grothendieck problem\cite{an,naor}. When $p=1$, we obtain the original Grothendieck problem, and the $p=2$ case is the $\ell_2$ Grothendieck problem and maximizes the quadratic form for p.s.d. matrices. Further, as mentioned earlier there is a constant factor approximation for $1\leq p \leq 2$ using \cite{nesterov}. However, for the $p>2$, we show that there is $\Omega(2^{(\log n)^{1-\epsilon}})$ hardness for computing $\infp$ norm assuming $NP$ does not have quasipolynomial time algorithms using the same techniques from Theorem~\ref{thm:qtop-hard}.
\begin{theorem} \label{thm:infp-hard}
It is NP-hard to approximate $\norm{A}{\infp}$ to any constant $\gamma$ for $p >2$ and hard to approximate within a factor of $\Omega(2^{(\log n)^{1-\eps}})$ for any constant $\eps>0$, assuming $\mathsf{NP} \notin \mathsf{DTIME}(2^{\text{polylog}(n)})$.
\end{theorem}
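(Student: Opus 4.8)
The plan is to reduce from the structured hardness of computing $\norm{A}{\ptop}$ already packaged in Claim~\ref{thm:phard-aux}, exploiting the fact that computing $\norm{A}{\infp}$ is \emph{already} a $\pm1$ optimization. Indeed, since $\mathbf y\mapsto\norm{A\mathbf y}{p}$ is convex, its maximum over the cube $\{\mathbf y:\norm{\mathbf y}{\infty}\le 1\}$ is attained at a vertex, so (as in the Observation above) $\norm{A}{\infp}=\max_{\mathbf y\in\{-1,1\}^n}\norm{A\mathbf y}{p}$, i.e.\ it is exactly the Longest Vector Problem on the columns of $A$. This is precisely the regime in which the completeness case of Claim~\ref{thm:phard-aux} produces its witness, so --- unlike in the proof of Theorem~\ref{thm:qtop-hard} --- no additional reformulation gadget is needed; I will simply hand the matrix $A$ from the Claim to an $\infp$-norm oracle.

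Concretely, fix any target factor $\gamma>1$ and let $A$ (an $n\times n$ matrix) together with positive integers $\alpha_i$ be the instance of the optimization problem \eqref{eq:opt-problem} guaranteed by Claim~\ref{thm:phard-aux}, with thresholds $\tau_C/\tau_S>\gamma$. For completeness, the Claim gives a $\pm1$ vector $\mathbf y$ with $g(\mathbf y)\ge\tau_C$; since $|y_i|=1$ for all $i$ we get $\norm{A\mathbf y}{p}=(\sum_i\alpha_i)^{1/p}\,g(\mathbf y)\ge\tau_C(\sum_i\alpha_i)^{1/p}$, and as $\norm{\mathbf y}{\infty}=1$ this yields $\norm{A}{\infp}\ge\tau_C(\sum_i\alpha_i)^{1/p}$. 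For soundness, any $\mathbf y$ with $\norm{\mathbf y}{\infty}\le1$ satisfies $\sum_i\alpha_i|y_i|^p\le\sum_i\alpha_i$, so using $g(\mathbf y)\le\tau_S$ we get
\[ \norm{A\mathbf y}{p}=\Big(\sum_i\alpha_i|y_i|^p\Big)^{1/p}g(\mathbf y)\le\Big(\sum_i\alpha_i\Big)^{1/p}\tau_S, \]
and maximizing over the cube gives $\norm{A}{\infp}\le\tau_S(\sum_i\alpha_i)^{1/p}$. The common factor $(\sum_i\alpha_i)^{1/p}$ cancels, so distinguishing the two cases is the same as approximating $\norm{A}{\infp}$ within $\tau_C/\tau_S>\gamma$; this gives the claimed constant-factor hardness for all $p>2$.

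For the almost-polynomial bound, I would invoke the stronger form of Claim~\ref{thm:phard-aux} obtained by tensoring the MaxCut gadget $\mathrm{polylog}(n)$ times (exactly the amplification already used in Theorem~\ref{thm:phardness} and Theorem~\ref{thm:qtop-hard}): on instances of quasi-polynomial size this makes $\tau_C/\tau_S$ as large as $2^{(\log N)^{1-\eps}}$ under $\mathsf{NP}\notin\mathsf{DTIME}(2^{\mathrm{polylog}(n)})$, and the reduction above transfers the gap verbatim. I do not expect a real obstacle here: the only step that needs a moment's care is the soundness direction --- that bounding $g$ over \emph{all} real vectors also bounds $\norm{A\mathbf y}{p}$ over the whole $\ell_\infty$ ball --- and that is immediate from $\norm{\mathbf y}{\infty}\le1\Rightarrow\sum_i\alpha_i|y_i|^p\le\sum_i\alpha_i$. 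The conceptual content is simply the observation that the $\infp$ norm is the $\pm1$ maximization that already sits at the heart of the earlier reductions.
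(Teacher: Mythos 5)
Your proposal is correct and takes essentially the same route as the paper: the paper simply remarks that the proof of Theorem~\ref{thm:qtop-hard} carries over when $q=\infty$, because the second factor in Eq.~\eqref{eq:h-g-relation} becomes $\frac{(\sum_i \alpha_i |y_i|^p)^{1/p}}{\norm{\mathbf y}{\infty}}$, which is at most $(\sum_i\alpha_i)^{1/p}$ and is attained at $\pm1$ vectors. You have spelled out the algebra that the paper leaves implicit (including the fact that for $q=\infty$ the reweighting change-of-variables is trivial, so the matrix handed to the $\infp$ oracle is literally $A$), and the amplification step is invoked identically; there is no substantive difference.
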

The proof of Theorem ~\ref{thm:qtop-hard} also works out for $q=\infty$ by noting that
the second expression in Eq.~\eqref{eq:h-g-relation} is instead 
$\max_{\mathbf{x}} \frac{(\sum_i \alpha_i |x_i|^p)^{1/p}}{||x||_\infty}$ which is also maximized when $|x_i|=1$ for all $i$.

\section{Acknowledgements}
We would like to thank our advisor Moses Charikar for many useful suggestions and comments throughout the progress of the work. We would also like to thank David Steurer for several discussions and pointing out connections to other problems. Finally we would like to thank Rajsekar Manokaran for various interactions about the inapproximability results.   
\bibliographystyle{alpha}
\bibliography{norm}

\appendix
\section{Miscellany}

\subsection{Non-convex optimization}\label{sec:app:non-convex}
Note that computing the $\ptop$ norm is in general not a convex optimization problem. i.e., the function $f$ defined by $f(x) = \frac{\norm{Ax}{p}}{\norm{x}{p}}$ is not in general concave. For example, consider
\[ A =
\left( {\begin{array}{cc}
 1 & 2  \\
 3 & 1  \\
 \end{array} } \right); ~~
\mathbf{x} = \left( {\begin{array}{c} 0.1 \\ 0.1 \\ \end{array} } \right);~~
\mathbf{y} = \left( {\begin{array}{c} 0.2 \\ 0.5 \\ \end{array} } \right)
\]
In this case, with $p=2.5$, for instance, it is easy to check that $f((\mathbf{x} + \mathbf{y})/2) < (f(\mathbf{x}) + f(\mathbf{y})) /2$. Thus $f$ is not concave.
However, it could still be that $f$ raised to a certain power is concave.

\subsection{Duality}\label{sec:app:lpspace-duality}
The following equality is useful in `moving' from one range of parameters to another. We use the fact that $\norm{u}{p} = \max_{y~:~\norm{y}{p'}=1} y^Tx$, where $p'$ is the `dual norm', satisfying $1/p + 1/p' = 1$. (similarly $q'$ denotes the dual norm of $q$)
\begin{equation}\label{eq:duality-def}
 \norm{A}{\qtop} = \max_{\norm{x}{q}=1} \norm{Ax}{p} = \max_{\substack{\norm{x}{q} =1 \\ \norm{y}{p'}=1}} y^T Ax = \max_{\substack{\norm{x}{q} =1 \\ \norm{y}{p'}=1}} x^T A^T y = \norm{A^T}{p' \mapsto q'}
\end{equation}

\subsection{Moving to a positive matrix}\label{sec:app:simplification}
We now show that by adding a very small positive number to each entry of the matrix, the $\qtop$-norm does not change much.
\begin{lemma}
Let $A$ be an $n \times n$ matrix where the maximum entry is scaled to $1$. Let $J_\epsilon$ be the matrix with all entries being $\epsilon$.
\[\norm{A+J_{\eps}}{\qtop} \leq \norm{A}{\qtop}\big(1+\epsilon n^{1+\frac{1}{p}-\frac{1}{q}}\big)\]
\end{lemma}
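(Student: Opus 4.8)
The plan is to bound $\norm{A + J_\eps}{\qtop}$ by $\norm{A}{\qtop} + \norm{J_\eps}{\qtop}$ via the triangle inequality for operator norms, and then to estimate $\norm{J_\eps}{\qtop}$ crudely. Recall that for a fixed pair of norms on domain and range, the map $B \mapsto \norm{B}{\qtop}$ is itself a norm on matrices (it is the operator norm of $B$ viewed as a map $(\rr^n, \norm{\cdot}{q}) \to (\rr^m, \norm{\cdot}{p})$), so in particular $\norm{A + J_\eps}{\qtop} \le \norm{A}{\qtop} + \norm{J_\eps}{\qtop}$. Thus it suffices to show $\norm{J_\eps}{\qtop} \le \norm{A}{\qtop}\cdot \eps n^{1 + 1/p - 1/q}$, which will follow once we show $\norm{J_\eps}{\qtop} \le \eps\, n^{1 + 1/p - 1/q}$ together with the trivial lower bound $\norm{A}{\qtop} \ge 1$.

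First I would verify $\norm{A}{\qtop} \ge 1$: since the maximum entry of $A$ is $1$, say $a_{k\ell} = 1$, taking $x = e_\ell$ gives $\norm{Ax}{p} \ge |a_{k\ell}| = 1$ while $\norm{e_\ell}{q} = 1$, so the ratio is at least $1$. Next I would compute (or rather upper bound) $\norm{J_\eps}{\qtop}$. For any $x$ with $\norm{x}{q} = 1$, every coordinate of $J_\eps x$ equals $\eps \sum_j x_j = \eps \langle \mathbf{1}, x\rangle$, so $\norm{J_\eps x}{p} = \eps\, |\langle \mathbf{1}, x\rangle| \cdot n^{1/p}$ (an $n$-dimensional constant vector of value $c$ has $p$-norm $|c| n^{1/p}$). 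By H\"older's inequality, $|\langle \mathbf{1}, x\rangle| \le \norm{\mathbf{1}}{q'}\norm{x}{q} = n^{1/q'} = n^{1 - 1/q}$. Combining, $\norm{J_\eps x}{p} \le \eps\, n^{1/p}\, n^{1 - 1/q} = \eps\, n^{1 + 1/p - 1/q}$, so $\norm{J_\eps}{\qtop} \le \eps\, n^{1 + 1/p - 1/q}$.

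Putting the pieces together: $\norm{A + J_\eps}{\qtop} \le \norm{A}{\qtop} + \norm{J_\eps}{\qtop} \le \norm{A}{\qtop} + \eps\, n^{1 + 1/p - 1/q} \le \norm{A}{\qtop}\big(1 + \eps\, n^{1 + 1/p - 1/q}\big)$, where the last step uses $\norm{A}{\qtop} \ge 1$. This is exactly the claimed bound. (One should also note the reverse inequality $\norm{A+J_\eps}{\qtop}\ge\norm{A}{\qtop}$ — which is used freely in the body of the paper and holds because, for a positive matrix $A$ and the optimal nonnegative $x$ realizing $\norm{A}{\qtop}$, adding $J_\eps$ only increases $\norm{(A+J_\eps)x}{p}$ — but it is not needed for the stated lemma.)

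There is no real obstacle here; the only mild subtlety is getting the exponent of $n$ exactly right, which is just bookkeeping with the two elementary facts $\norm{c\mathbf{1}}{p} = |c| n^{1/p}$ (for the range space, dimension $n$) and $\norm{\mathbf{1}}{q'} = n^{1 - 1/q}$ (H\"older against the domain norm). If the matrix were $m \times n$ rather than square one would replace $n^{1/p}$ by $m^{1/p}$; since the lemma is stated for $n \times n$ matrices the single parameter $n$ suffices.
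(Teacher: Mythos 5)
Your proof is correct and takes essentially the same approach as the paper: triangle inequality for the operator norm, the trivial bound $\norm{A}{\qtop} \ge 1$, and the bound $\norm{J_\eps}{\qtop} \le \eps\, n^{1+1/p-1/q}$. The paper asserts the last bound by noting the maximizer for $J_\eps$ is the all-equal vector, whereas you derive it via H\"older; both give exactly the same estimate, and yours is if anything slightly more explicit.
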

\begin{proof}
We first note that $\norm{A}{\qtop} \geq 1$ (because the maximum entry is $1$). It is also easy to see that $J_\eps$ is maximized by the vector with all equal entries.
Hence $\norm{J_\eps}{\qtop} \leq n^{1+\frac{1}{p} -\frac{1}{q}} \epsilon$. Hence, by using the fact that $\norm{\cdot}{\qtop}$ is a norm, the lemma follows.
\end{proof}

\end{document}